\newcommand{\vecone}{\mathbbm{1}}
  \pgfplotsset{compat=newest}
\newtheorem{theorem}{Theorem}
\newtheorem{lemma}{Lemma}[section]
\newtheorem{corollary}[lemma]{Corollary}
\newtheorem{claim}{Claim}[section]
\newtheorem{fact}[lemma]{Fact}
\newtheorem{observation}[lemma]{Observation}
\numberwithin{equation}{section}
\theoremstyle{definition}
\newenvironment{definition}
  {\pushQED{\qed}\definitiony}
  {\popQED\enddefinitiony}
\theoremstyle{remark}
\newcommand\slv{\ifthenelse{\boolean{shortver}}{You can find the proof in arXiv. }{We defer this proof to \Cref{app_pre_results} in the appendix. }}
\newcommand\slvv{\ifthenelse{\boolean{shortver}}{You can find the proofs in arXiv. }{We defer these proofs to \Cref{app_pre_results} in the appendix. }}
\newcommand\rou[1]{\lfloor{#1}\rfloor}
\newcommand\ceil[1]{\lceil{#1}\rceil}
\newcommand\bc[1]{\left({#1}\right)}
\newcommand\cbc[1]{\left\{{#1}\right\}}
\newcommand\brk[1]{\left\lbrack{#1}\right\rbrack}
\newcommand\abs[1]{\left|{#1}\right|}
\newcommand\trk[1]{\rank_{\mathbb{F}}}
\newcommand\ber[1]{{\rm Ber}\left({#1}\right)}
\newcommand\bin[1]{{\rm Bin}\left({#1}\right)}
\newcommand*{\dif}{\mathop{}\!\mathrm{d}}
\newcommand\teo[1]{\ensuremath{\mathds{1}}{\left\{{#1}\right\}}}
\DeclareRobustCommand{\VAN}[3]{#2}
\newcommand\vsigma{\bm \sigma}
\newcommand\vx{\bm x}
\newcommand\vG{\bm{G}}
\newcommand\vT{\bm{T}}
\newcommand\vA{\bm{A}}
\newcommand\cC{\mathcal{C}}
\newcommand\cD{\mathcal{D}}
\newcommand\vepp{\varepsilon'}
\newcommand\nn{\nonumber}
\DeclareMathOperator*{\argmax}{arg\,max}
\newcommand\cX{\mathcal{X}}
\newcommand\cI{\mathcal{I}}
\newcommand\eul{\mathrm{e}}
\newcommand\eps{\varepsilon}
\newcommand\NN{\mathbb{N}}
\newcommand\soG{\mathfrak{G}}
\renewcommand{\log}{\ln}
\newcommand{\SmallProb}[1]{{\operatorname{\mathbb{P}}[#1]}}
\newcommand{\SmallProbCond}[2]{{\operatorname{\mathbb{P}}[#1\mid #2]}}
\newcommand{\Exp}[1]{{\operatorname{\mathbb{E}}\mathopen{}\left[#1\right]\mathclose{}}}
\newcommand{\ExpCond}[2]{{\operatorname{\mathbb{E}}\mathopen{}\left[#1\,\middle\vert\,#2\right]\mathclose{}}}
\newcommand{\Prob}[1]{{\operatorname{\mathbb{P}}\mathopen{}\left[#1\right]\mathclose{}}}
\newcommand{\ProbCond}[2]{{\operatorname{\mathbb{P}}\mathopen{}\left[#1\,\middle\vert\,#2\right]\mathclose{}}}
\newcommand{\Var}[1]{{\operatorname{Var}\mathopen{}\left[#1\right]\mathclose{}}}
\newcommand{\VarCond}[2]{{\operatorname{Var}\mathopen{}\left[#1\,\middle\vert\,#2\right]\mathclose{}}}
\newcommand{\ind}{\ensuremath{\mathds{1}}}
\DeclareMathOperator{\rank}{rk}
\newcommand{\pflipoo}{p_{00}}
\newcommand{\pflipoi}{p_{01}}
\newcommand{\pflipii}{p_{11}}
\newcommand{\pflipio}{p_{10}}
\newcommand\lrh[1]{\textcolor{green}{#1}}
\newcommand\zhu[1]{\textcolor{blue}{#1}}
\def\?#1{}
\def\whp{w.h.p\@ifnextchar-{.}{\@ifnextchar.{.\?}{\@ifnextchar,{.}{\@ifnextchar){.}{\@ifnextchar:{.:\?}{.\ }}}}}}
\def\Whp{W.h.p\@ifnextchar-{.}{\@ifnextchar.{.\?}{\@ifnextchar,{.}{\@ifnextchar){.}{\@ifnextchar:{.:\?}{.\ }}}}}}
\newcommand\blfootnote[1]{%
  \begingroup
  \renewcommand\thefootnote{}\footnote{#1}%
  \addtocounter{footnote}{-1}%
  \endgroup
}
\DeclareSymbolFont{extraup}{U}{zavm}{m}{n}
\DeclareMathSymbol{\varheart}{\mathalpha}{extraup}{86}
\DeclareMathSymbol{\vardiamond}{\mathalpha}{extraup}{87}
\title{Noisy Linear Group Testing \\ Exact Thresholds and Efficient Algorithms}
\date{} 					
\begin{document}

\author[1]{Lukas Hintze} 
\author[2]{Lena Krieg} 
\author[3]{Olga Scheftelowitsch} 
\author[4]{Haodong Zhu}
\affil[1]{{\tt lukas.rasmus.hintze@uni-hamburg.de}, Universitty of Hamburg, Department of Informatics, Vogt-K\"olln-Str.\ 30, 22527 Hamburg, Germany.}
\affil[2]{{\tt lena.krieg@tu-dortmund.de}, TU Dortmund, Faculty of Computer Science, 12 Otto-Hahn-St, Dortmund 44227, Germany.}
\affil[3]{{\tt olga.scheftelowitsch@tu-dortmund.de}, TU Dortmund, Faculty of Computer Science, 12 Otto-Hahn-St, Dortmund 44227, Germany.}
\affil[4]{{\tt h.zhu1@tue.nl}, Department of Mathematics and Computer Science, Eindhoven University of Technology, P.O. Box 513, Eindhoven, 5600 MB, The Netherlands}
\blfootnote{Lena~Krieg is supported by DFG CO 646/3 and DFG CO 646/5.
	Olga~Scheftelowitsch is supported by DFG CO 646/5.
        Haodong~Zhu is supported by the NWO Gravitation project NETWORKS under grant no.\ 024.002.003 and the European Union’s Horizon 2020 research and innovation programme under the Marie Skłodowska-Curie grant agreement no.\ 945045. We thank Amin~Coja-Oghlan and Mihyun~Kang for organizing and inviting us to a workshop in Strobl in 2023, where we initiated the collaboration that led to this work. We also thank Jonathan Scarlett for a valuable comment on a previous version of this work.}

\maketitle
\begin{abstract}
In group testing, the task is to identify defective items by testing groups of them together using as few tests as possible.
We consider the setting where each item is defective with a constant probability~$\alpha$, independent of all other items.
In the (over-)idealized noiseless setting, tests are positive exactly if any of the tested items are defective.
We study a more realistic model in which observed test results are subject to noise,
    i.e., tests can display false positive or false negative results with constant positive probabilities.
We determine precise constants~$c$ such that $cn\log n$ tests are required to recover the infection status of every individual for both adaptive and non-adaptive group testing:
    in the former, the selection of groups to test can depend on previously observed test results,
        whereas it cannot in the latter.
Additionally, for both settings, we provide efficient algorithms that identify all defective items with the optimal amount of tests with high probability.
Thus, we completely solve the problem of binary noisy group testing in the studied setting.

\end{abstract}

\newcommand{\nnew}{\check{n}}
\newcommand{\mnew}{\check{m}}
\newcommand{\pat}{\sigma}
\newcommand{\pats}{\bm\pat}
\newcommand{\patsmod}{\tilde\pats}
\newcommand{\atest}{\tau}
\newcommand{\atests}{\bm{\atest}}
\newcommand{\otest}{{\tilde{\atest}}}
\newcommand{\otests}{\bm\otest}
\newcommand{\inform}{I}
\newcommand{\tconst}{\kappa}
\newcommand\KL[2]{\operatorname{D_{\mathrm{KL}}}\nolimits\bc{{#1}\|{#2}}}
\newcommand{\uniquedegree}{\Gamma}
\newcommand{\punique}{\gamma}
\newcommand{\infected}{\cI}
\newcommand{\infectedprime}{{\mathcal{J}}}
\newcommand{\numinfected}{\bm{d}}
\newcommand{\healthy}{\overline{\infected}}
\newcommand{\nonadm}{m_{\mathrm{na}} }
\newcommand{\mna}{\nonadm}
\newcommand{\cna}{c_{\text{na}}}
\newcommand{\cad}{c_{\text{ad}}}
\newcommand{\gdep}{\eta}
\newcommand{\adm}{m_{\mathrm{ad}} }
\newcommand{\MAP}{\mathrm{MAP}}
\newcommand{\map}{{\hat{\sigma}_{\MAP}}}
\newcommand{\genie}{{\hat{\sigma}_{\mathrm{gen}}}}
\newcommand{\misInfected}{\cI_{\mathrm{gen}, 0}}
\newcommand{\misD}{{\bm D}_{\mathrm{err}}}
\newcommand{\modGraph}{G_{\eta}}
\newcommand{\good}{g}
\newcommand{\alg}{{\bm{\mathcal{A}}}}
\newcommand{\vp}{\bm{p}}

\newcommand{\SPOG}{\texttt{SPOG}}
\newcommand{\Oh}{\mathcal{O}}
\newcommand{\estaf}{\hat{\sigma}_1}
\newcommand{\estas}{\hat{\sigma}_2}
\newcommand{\estat}{\hat{\sigma}_3}
\newcommand{\estSPEX}{\hat{\pats}_{\mathrm{\SPOG}}}
\newcommand{\SPEX}{\mathrm{SPEX}}
\newcommand{\PRESTO}{\texttt{PRESTO}}

\newcommand{\proportion}[1]{\rho_{\otests}({#1})}
\newcommand{\notationTable}{
Add: bold for random variables \lrh{temporarily i have commented out lines which i've defined in prose.}
\begin{table}[H]
\begin{tabularx}{\linewidth}{l p{6.5cm} X}
Symbol   & Definition or Domain & Meaning \\ \toprule
$m$ & & number of total tests\\
$\vT$ &$\vT=(\vT_i)_{i\in [m]}$ & the vector of tests\\
$\pats_{-i}$&  & ground truth without $i$ \\
$\otests_{[k]}$ &  $\otests_k = (\otests(i))_{i \in [k]} \in \{0,1\}^k$ & observed first $k$ test results, after noise\\
$\genie^G$ & & Genie estimator on test setup $G$\\
$\map^G$ & & MAP estimator on test setup $G$\\
$\Gamma$ & $\Gamma = \arg\sup_{w} \Prob{\abs{\partial a \cap \infected}=1 } = \arg\sup_{w} w(1-\alpha)^{w-1} = \lfloor\frac 1 \alpha \rfloor $ & Degree that maximizes Probability of containing exactly one infected individual\\
$\gamma$ &$\gamma =\sup_{w} \Prob{\abs{\partial a \cap \infected}=1 } = \sup_{w} \alpha w(1-\alpha)^{w-1} = \alpha \Gamma (1-\alpha)^{\Gamma -1}$ & Maximal probability of a test containing exactly one infected individual\\
$\infected$ &$\infected = \cbc{i \mid \pats_i = 1}$ & Infected individuals \\
$\modGraph$ & & adjusted Graph (further defined in lower bound for non-adaptive) \\
$\misInfected^G$ & & misclassified infected individuals by genie\\
$\good_i$  & & good test for individual $i$ \\
$\alg$ & $\alg= \bc{\alg_i}_{i}$  & adaptive test scheme\\
$\alg_i(G_{i-1}, \alg_{i-1})$ && $i$th test of algorithm $i$ depending on earlier tests and results\\
\bottomrule
\end{tabularx}
\end{table}
}
\tikzstyle{fun}=[rectangle, minimum size=13, text height=5, draw=black]
\tikzstyle{var}=[circle,draw=black,minimum size=13, text height=5]
\newcommand{\adpic}{
    \begin{tikzpicture}
        \def\n{6}
        \def\vhdiff{0.7}
        \def\avdiff{0.8}
        \def\avdiffs{1.6}
        \def\s{2}
        \def\sr{3}
        \def\u{4}
        \def\ur{5}
        \def\secondlevel{-\avdiffs-1.3}
        \def\thirdlevel{\secondlevel-\avdiffs-1.2}
        \foreach\i in{1,...,\n}{
            \draw (\i*\vhdiff, 0) node[var] (v\i) {$\cdot$};
            \draw (\i*\vhdiff, -\avdiff) node[fun] (f\i1){};
            \draw (\i*\vhdiff, -\avdiffs) node[fun] (f\i2){};
            \draw[gray,dashed] (v\i) -- (f\i1);
            \draw[gray,dashed] (f\i1) --(f\i2);
                }
            \draw [decorate,decoration={brace,amplitude=10pt}] (f\n1.north east) -- (f\n2.south east) node [black,midway,yshift=0, xshift=43, align=left] {$\eta\log(n)$\\individual tests};
        
            \draw[fill = white] (\vhdiff-0.2, \thirdlevel-\avdiff) rectangle (\s*\vhdiff+0.2, \thirdlevel-\avdiffs) node[pos=.5](sublin1) {\SPOG};
            \foreach\i in {1,...,\s}{
                \draw (\i*\vhdiff, \thirdlevel) node[var] (v\i2) {$\cdot$};
                \draw[gray,dashed] (v\i2) -- (sublin1);
            }
           \draw[fill = white] (\vhdiff-0.2, \thirdlevel-\avdiff) rectangle (\s*\vhdiff+0.2, \thirdlevel-\avdiffs) node[pos=.5](sublin1) {\SPOG};
            \draw[gray,dashed](\s*\vhdiff+\vhdiff/2, -\avdiffs)--(\s*\vhdiff+\vhdiff/2, \secondlevel);

            \foreach\i in {\sr,...,\n}{
                \draw (\i*\vhdiff, \secondlevel) node[var] (v\i2) {$\cdot$};
                \draw (\i*\vhdiff, \secondlevel-\avdiff) node[fun] (f\i21){};
                \draw (\i*\vhdiff, \secondlevel-\avdiffs) node[fun] (f\i22){};
                \draw[gray,dashed] (v\i2) -- (f\i21);
                \draw[gray,dashed] (f\i21) --(f\i22);
            }
            \draw [decorate,decoration={brace,amplitude=10pt}] (f\n21.north east) -- (f\n22.south east) node [black,midway,yshift=0, xshift=43, align=left] {$ \frac{(1+\frac\eps4) \log(n)}{\KL{p_{11} }{p_{01}}}$\\individual tests};
             \draw [decorate,decoration={brace,amplitude=8pt}] (v12.north west) -- (v\s2.north east) node [black,midway,yshift=14, xshift=0] {$S_0$};
             \draw [decorate,decoration={brace,amplitude=9pt}] (v\sr2.north west) -- (v\n2.north east) node [black,midway,yshift=16, xshift=11] {$S_1: \abs{F^+(i)} \geq C\abs{F(0)}$};
             
            \draw[gray,dashed](\u*\vhdiff+\vhdiff/2, \secondlevel-\avdiffs)--(\u*\vhdiff+\vhdiff/2, \thirdlevel);

          \draw[fill = white] (\sr*\vhdiff-0.2, \thirdlevel-\avdiff) rectangle (\u*\vhdiff+0.2, \thirdlevel-\avdiffs) node[pos=.5](sublin2) {\SPOG};
            \foreach\i in {\sr, ..., \u}{
                \draw (\i*\vhdiff, \thirdlevel) node[var] (v\i3) {$\cdot$};   
                \draw[gray,dashed] (v\i3) -- (sublin2); 
            }
          \draw[fill = white] (\sr*\vhdiff-0.2, \thirdlevel-\avdiff) rectangle (\u*\vhdiff+0.2, \thirdlevel-\avdiffs) node[pos=.5](sublin2) {\SPOG};
            \foreach\i in {\ur, ..., \n}{
                \draw (\i*\vhdiff, \thirdlevel) node[var] (v\i3) {$+$};    
            }
            
           \draw [decorate,decoration={brace,amplitude=8pt}] (v\sr3.north west) -- (v\u3.north east) node [black,midway,yshift=14, xshift=0] {$U_0$};
         \draw [decorate,decoration={brace,amplitude=10pt}]  (v\n3.south east)--(v\ur3.south west) node [black,midway,yshift=-20, xshift=30, align=left] {$U_1: \abs{F_2^+(i)} $\\ $\geq (p_{11}-\epsuthresh)\abs{F_2(0)}$};
    
    \end{tikzpicture}
}
\newcommand{\napic}{
    \begin{tikzpicture}
        \def\n{5}
        \def\vhdiff{0.7}
        \def\avdiff{0.7}
        \def\gvdiff{1}
        \def\ivdiff{1}
        \def\avdiffs{1.5}
        \def\mg{5}
    
         \foreach\i in{1,...,\n}{
            \draw (\i*\vhdiff, 0) node[var] (v\i) {$\cdot$};
            \draw (\i*\vhdiff, -\avdiff) node[fun] (f\i1){};
            \draw (\i*\vhdiff, -\avdiffs) node[fun] (f\i2){};
            \draw[gray,dashed] (v\i) -- (f\i1);
            \draw[gray,dashed] (f\i1) --(f\i2);
                }
        \draw [decorate,decoration={brace,amplitude=10pt}] (f\n1.north east) -- (f\n2.south east) node [black,midway,yshift=0, xshift=43, align=left] {$F_1:$ $\ceil{\xi\log(n)}$\\individual tests};
        \foreach \a/\vl in {
            1/{1,2,3}, 
            2/{1,2,4}, 
            3/{2,3,1}}{
            \draw (\a*\vhdiff*\n/\mg, +\gvdiff) node[fun] (fg\a){};
            \foreach \i in \vl{
                \draw[gray,dashed] (v\i) -- (fg\a);
            }
        }
        \draw [decorate,decoration={brace,amplitude=10pt}] (fg1.north west) -- (fg3.north east) node [black,midway,yshift=20, xshift=0, align=left] {$F_{2,grp}$:\\group tests};
    
        \foreach\i in {4,5}{
            \draw (\i*\vhdiff, +\ivdiff) node[fun] (f\i3){};        
            \draw[gray,dashed] (v\i) -- (f\i3);
        }
        \draw [decorate,decoration={brace,amplitude=10pt}] (f43.north west) -- (f53.north east) node [black,midway,yshift=20, xshift=13, align=left] {$F_{2,idv}$: extra\\ individual tests};
    \end{tikzpicture}
}


\newpage
\pagenumbering{gobble}
\tableofcontents
\newpage
\pagenumbering{arabic}
\setcounter{page}{1}

\section{Introduction}
Originally motivated by large-scale screening of soldiers for syphilis in the Second World War, the study of group testing dates back to the 1940s \cite{Dorfman}.
But its relevance exceeds its original inspiration by far,
    emphasized by the vast diversity of applications of group testing to other areas.
\cite{AldridgeJS209,cheraghchiNakosDecoding} list many applications, among them
    DNA library screening \cite{dnalibngo2000survey,dnalibscreening_bruno1995efficient,dnaschliep2003group,dnawu2006error},
    molecular biology \cite{molbicheng2008new,molbifarach1997group,molbiknill1995group,molbiwu2007molecular},
    multiple access communication \cite{mulcomwolf1985born},
    data compression and storage \cite{compressionhong2001group}, secure key distribution \cite{seckeychen2007unexpected},
    pattern matching \cite{patternmatchclifford2007k},
    and quality control \cite{qualcontrollsobel1959group}.
Yet, even though this problem has been of interest for over 80 years, and despite being easy to state, it is far from being completely understood.
This gives group testing an especially interesting appeal.

To describe the general problem setting, consider a large population of individuals, some of which are infected.
The aim is to identify infected individuals such that the number of tests is minimized.
To achieve this, one pools groups of individuals together,
    testing a whole group with just one test.
In an idealized \emph{noiseless} setting, a test renders a positive result if and only if at least one participating individual is infected.
However, in reality, tests can show false positive and false negative results \cite{chlamydia,drosten,Jasima2022ComparativeEO,KahramanKlba2022AMO}.
We study a \emph{noisy} setting where tests err independently, i.e., a should-be-positive test is observed as negative with probability $p_{10}$, and a should-be-negative result is observed as positive with probability $p_{01}$.
In other words, idealized test results are distorted by a noisy binary channel.


A significant proportion of recent literature, discussed in detail below, studies the \emph{sparse regime}, where the number of infected people is \emph{sublinear} in the population size.
Noteworthy recent achievements include the complete understanding of noiseless group testing \cite{CojaOghlanGHL21} as well as the almost complete understanding of noisy group testing \cite{cojaoghlan2024noisygrouptestingspatial, scarlettchen}, both in the sparse regime.
In contrast, results for the linear regime studied here lag far behind.
For the noisy setting, they boil down to a non-tight lower bound on the number of necessary tests for the special case of symmetric noise \cite{Scarlett19}.
The lack of rigorous results in the linear regime is in stark contrast to the substantial amount of potential applications.
E.g., considering medical conditions, a constant prevalence is most intuitive (the early stages of a pandemic being a notable exception).
Moreover, the actually available tests are far from being perfectly accurate \cite{chlamydia,drosten,Jasima2022ComparativeEO,KahramanKlba2022AMO}.
Of course, interest in noisy settings extends to the aforementioned variety of applications of group testing. 
Considering this, one might call noisy linear group testing one of the most realistic scenarios studied so far.

In group testing, one differentiates between adaptive and non-adaptive strategies:
In the former, tests are conducted in (multiple) rounds,
    where the choice of tests in one round may depend on the results of previous tests.
For the latter, there is only one round,
    so that the test design is fixed at the beginning of testing.
While this reduces latency, it comes at the expense of (potentially) more used tests.
Both variants have applications depending on the more expensive resource (testing time vs.\ the amount of tests), whereas the former is the more intuitive variant of group testing and was also used for example in pooled PCR tests for SARS-CoV-2 \cite{COVIDmutesa2021pooled,Singh2020EvaluationOP}.

In this work, we provide a rigorous and complete understanding of both adaptive and non-adaptive group testing in the aforementioned noisy setting for the linear regime.
To be precise, we provide exact thresholds on the number of tests such that exact recovery is impossible below the threshold, and possible above the threshold.
Moreover, we actually provide efficient algorithms for the latter case.

\subsection{Contribution}\label{sec:intro:contribution}

We consider group testing under the \emph{i.i.d.\ prior}, where the $n$ individuals are infected independently with constant probability~$\alpha$,\footnote{Note that this can be transformed to the other commonly studied studied prior, the \emph{combinatorial prior}, where the number of infected individuals is fixed; cf.\ Appendix to Chapter 1 in \cite{AldridgeJS209}.}
    with test results being observed through a noisy binary channel.
For both adaptive and non-adaptive schemes,
    we pin down the number $m=cn\log(n)$ of necessary and sufficient tests down to the constant $c$.
We call this threshold $\nonadm$ for the non-adaptive case and $\adm$ for the adaptive case.
To prove these thresholds, we develop a fundamental understanding of the combinatorial conditions necessary for the exact identification of infected individuals.
Furthermore, we harness this understanding
    to provide \emph{efficient} algorithms for both settings
        using $(1+\eps)\nonadm$ (resp., $(1+\eps)\adm$) tests.

We now give a brief overview of the main obstacles for estimating an individual's infection status.
In the non-adaptive case,
    each individual requires tests whose results change when the individual's status changes; call such tests \emph{good} (for said individual).
With a fixed test design, the number of good tests fluctuates depending on the random infection statuses.
While there is no fixed lower bound on the necessary number of good tests required for \emph{every single} individual,
    there need to be sufficiently many tests so that there are not too many individuals that don't have enough good tests.
If not, even estimators having oracle access to the ground truth except for the individual in question (\emph{genie-based estimators}) fail.

In the adaptive case, on a conceptual level, there are two important differences from the non-adaptive case:
First, an adaptive test design can ensure that most tests are used on infected individuals.
Second, an adaptive test design can ensure that almost all of these tests are good for some (infected) individuals, removing the fluctuation of the non-adaptive case.  
In this sense, the main barrier is that all infected individuals need to participate in enough tests such that there are enough good tests to exclude the possibility of the individual \emph{not} being infected after all.

Our efficient algorithms, the non-adaptive $\SPOG$ (\emph{\textbf{s}ynthetic \textbf{p}seud\textbf{o}-\textbf{g}enie}) and the three-stage adaptive $\PRESTO$ (\emph{\textbf{pre}-\textbf{s}orting \textbf{t}hresh\textbf{o}lder}), overcome these obstacles with a test count just above the thresholds.
$\SPOG$ uses almost all its tests on random groups of an optimal size $\Gamma$ depending on both $\alpha$ and a measure of noisiness.
Few individual tests are used to construct a ``synthetic pseudo-genie'';
    $\SPOG$'s output is based on the group tests which are good by the pseudo-genie.
Similarly, $\PRESTO$'s first stage uses a few individual tests to pre-sort individuals into a mostly-infected and a mostly-uninfected set.
Its second stage weeds out all uninfected individuals among the former by thresholding on many individual tests.
The final stage uses $\SPOG$ twice to identify the few infected individuals among the first stage's mostly uninfected set, as well as the second stage's rejects.
We note that both algorithms classify almost all individuals correctly as pre-processing based on only a negligible fraction of all tests.
Hence, approximate recovery requires significantly fewer tests in both settings; determining the exact number of tests required for this remains open.

\begin{figure}\centering
    \vspace{3mm}
    \subfigure[Symmetric noise]{\includegraphics[trim={0.5cm 0.4cm 1.6cm 1.5cm}, width=0.45\linewidth]{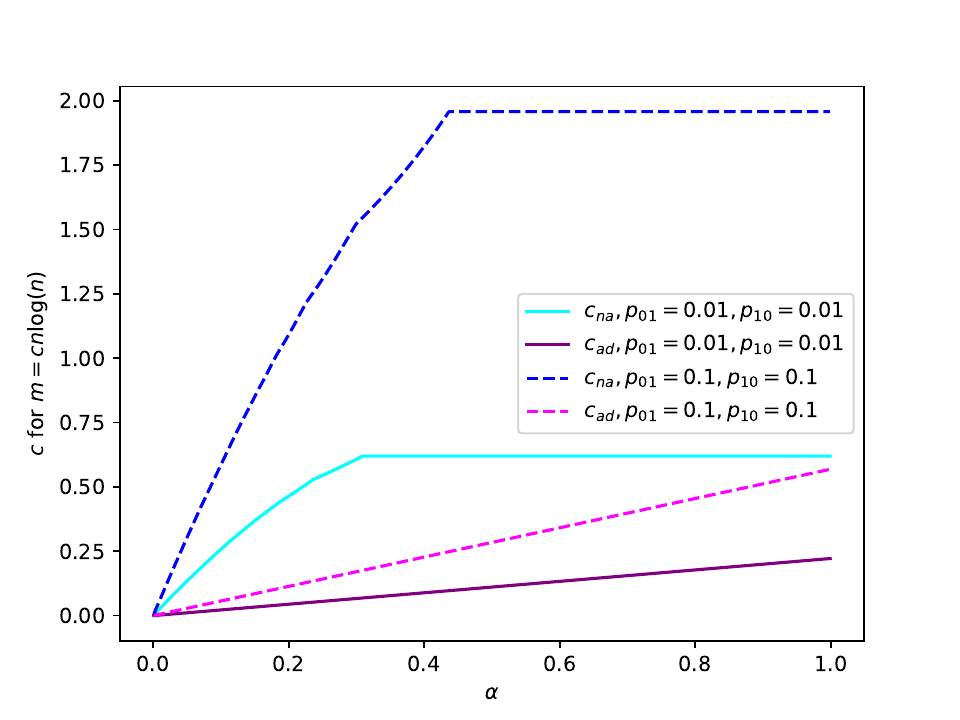}
         \label{fig:plotsym}
    }
    \subfigure[Asymmetric noise]{
         \includegraphics[trim={0.5cm 0.4cm 1.6cm 1.5cm}, width=0.45\linewidth]{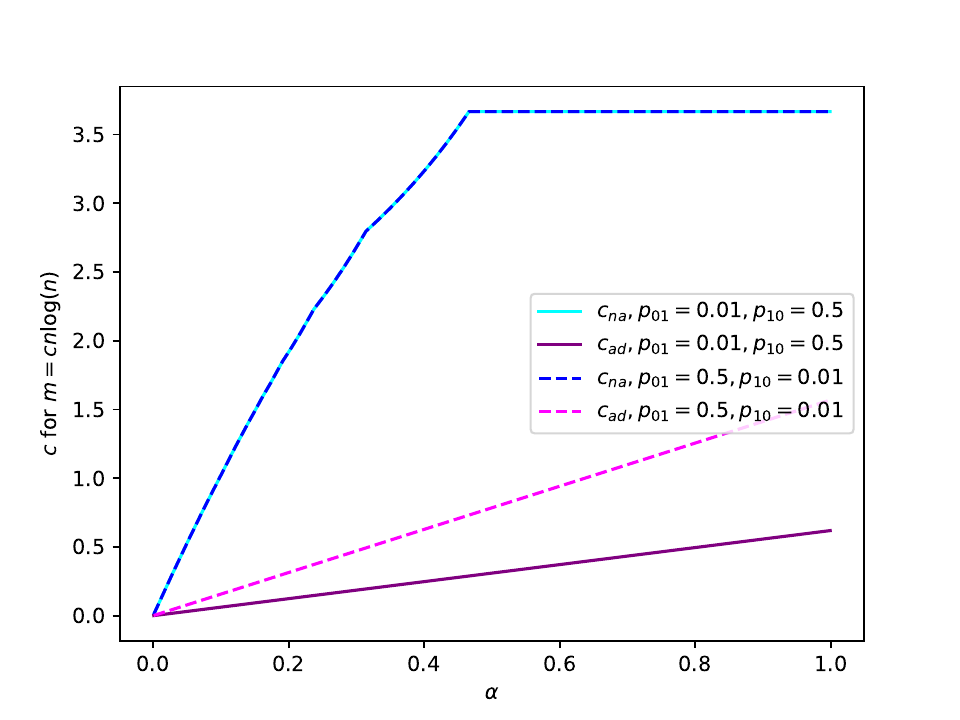}
         \label{fig:plotasym}
    }
    \caption{Comparison of adaptive and non-adaptive thresholds $\cad=\adm/(n\log n)$ and $\cna=\nonadm/(n\log n)$.}
    \label{fig:threshplot}
\end{figure}

\cref{fig:threshplot} depicts the thresholds' constants $c$ for different values of $\alpha$ and different channels.
It is quite blatant how adaptivity saves a significant amount of tests in all depicted scenarios:
    in \Cref{fig:plotsym}, we see that for a symmetric channel, the adaptive algorithm uses fewer tests for even tenfold higher noise.
Additionally, viewing the graphs of the non-adaptive threshold, we see that they do not seem to be differentiable for all $\alpha$.
Indeed, these ``kinks'' appear due to the optimization over the (discrete) test degrees.
On another note, \Cref{fig:plotasym} shows two additional interesting properties:
First, the non-adaptive threshold is not sensitive to switching $p_{01}$ and $p_{10}$, as the blue and cyan lines overlap.
On the other hand, in the adaptive case, false positive results require more tests than false negatives.

\subsection{Related work}\label{sec:intro:related}

In recent years, there has been a variety of work on group testing and its variants.
An excellent overview of results prior to 2019 can be found in \cite{AldridgeJS209}.
As mentioned above, the overwhelming majority of recent rigorous results focus on the case of a sublinear number of infected individuals.
Following a considerable amount of prior work \cite{aldridge14,scarlett2016phase,johnson2018performance,coja2020information}, in the sublinear regime, noiseless group testing is now understood completely \cite{CojaOghlanGHL21}, while noisy non-adaptive group testing is understood almost completely \cite{cojaoghlan2024noisygrouptestingspatial, scarlettchen}. 
However, results for the linear regime are much more limited.
In the noiseless case, one cannot improve on individual testing once the fraction of infected individual exceeds $\alpha = \frac{1}{2}(3-\sqrt{5})$ \cite{Ungar}.
Below this threshold, one has to differentiate between non-adaptive and adaptive group testing.
For non-adaptive group testing, one cannot do better than individual testing either \cite{Aldridge19}.
In the adaptive case, the most important contribution is an algorithm \cite{bumpyride} based on binary splitting \cite{HwangSplitting}.
However, determining a tight threshold for noiseless linear adaptive group testing remains open. 

If we turn from noiseless to noisy group testing rigorous results are even more sparse. 
To the best of our knowledge, the only result in this setting is a lower bound on the necessary tests for the special case of symmetric noise \cite{Scarlett19}, which is a factor $(1-2p_{01})^{-1}$ smaller than the actual bound.
However, \cite{Scarlett19} does not consider the question of achievability not to mention the question of efficient algorithms. 
A recent contribution \cite{cheraghchiNakosDecoding} studies the decoding complexity of group testing, i.e., the possibility of finding the infected individuals in optimal, possibly sublinear time.
Leaving classical group testing there is a huge diversity of variants of this problem, such as quantitative group testing \cite{feige2020quantitativegrouptestingrank,SoleymaniEtAl24,Tan24qgt}, pooled data \cite{hahn-klimroth22apooled,hahn-klimroth23pooled}, tropical group testing \cite{Wang23tropical,Paligadu24} and many more. 

The present work's proofs are partially inspired by techniques from community detection such as the use of a genie-based estimator for impossibility proofs \cite{jmlr/Abbe17} as well as focusing on a small portion of individuals that are already hard to estimate \cite{arxiv/AbbeS15}.

\section{Model and Results}\label{sec:model}
\paragraph{Model and commonly used notation}
We consider group testing under the \emph{i.i.d.\ prior},
    where the infection statuses of all of the $n$ individuals
        are independent, with each individual's infection probability being $\alpha \in (0, 1)$.
We call a $\pat \in \{0, 1\}^n$ an \emph{infection vector},
    where $\pat(i)=1$ if $i$ is infected under $\pat$, and $\pat(i)=0$ otherwise.
We then write $\pats$ for the random infection vector (the \emph{ground truth}) as described above.
Often, we consider a test design as a bipartite graph $G = (V, F, E)$,
    with $V = V(G) = [n]$ being the set of individuals,
    $F = F(G)$ being the set of tests,
    and the set $E$ consisting of the edges between individuals and tests,
        with an edge being present if and only if a given test contains a given individual.
We write $\soG_m$ for the set of all such graphs with $m$ tests.

For a test $a \in F$, we write $\partial_G a$ for the set of individuals contained in $a$,
    for an individual $i \in [n]$, we write $\partial_G i$ for the set of tests containing $i$.
When $G$ is clear from context (as is the case most of the time), we omit the subscript.
If the tests were perfectly accurate,
    then a test would appear positively if and only if it contains an infected individual.
So writing $\atests = (\atests(a))_{a \in F} \in \{0, 1\}^F$ for the vector of these hypothetical test results,
    we have $\atests(a) = \max\cbc{\pats(i) \mid i \in \partial a}$.
However, we consider the test results as being transmitted through a noisy channel:
    We write $\otests$ for the vector of observed test results.
    $\otests(a)$ depends only on $\atests(a)$, with the channel noise being independent for all tests.
Often, for a set of tests $S \subseteq F$,
    we write $S^+ = \cbc{a \in S \mid \otests(a) = 1}$ for the subset of $S$ with positive observed test result,
        and $S^- = S \setminus S^+$ for the subset of $S$ with negative observed test result.

For $k,l \in \{0, 1\}$,
    we let $p_{kl} \in \bc{0,1}$ be the probability that a test with hypothetical test result $\atests(a) = k$
    displays as $l$.
So formally, for any $\atest, \otest \in \{0, 1\}^m$,
    \(\ProbCond{\otests = \otest}{\atests = \atest} = \prod_{j \in [m]} p_{\atest_{j} \otest_{j}}.\)
We write $\vp = (\pflipoo, \pflipoi, \pflipio, \pflipii)$ for the vector of these probabilities,
    and call $\vp$ a \emph{noisy channel}.
Note that we must have $\pflipoo + \pflipoi = \pflipio + \pflipii = 1$,
    that $\pflipii = 1-\pflipio$ is the \emph{sensitivity} (true positive rate) of the test,
    and that $\pflipoo = 1-\pflipoi$ is the \emph{specificity} (true negative rate) of the test.
Additionally, we require that $\pflipoi + \pflipio < 1$:
    if $\pflipoi + \pflipio = 1$, then tests containing, or not containing, an infected individual would be indistinguishable (the channel has capacity $0$);
    and if $\pflipoi + \pflipio > 1$, one can flip all observed test results to obtain a channel with $\pflipoi + \pflipio < 1$.
We call a noisy channel satisfying these conditions \emph{valid}.

For $p, q \in [0, 1]$, we write $\KL{p}{q}$ for the Kullback--Leibler divergence (or relative entropy) of $\ber{p}$ from $\ber{q}$, i.e.,
    \(\KL{p}{q} = p\log\frac{p}{q} + (1-p)\log\frac{1-p}{1-q}.\)
Furthermore, for any valid noisy channel $\vp$,
    define 
    \begin{align}\label{eq:beta}
        \beta = \beta(\vp) = \max_{c \in [\pflipoi, \pflipii]} \min\cbc{\KL{c}{\pflipoi}, \KL{c}{\pflipii}},
    \end{align}
    and let $C = C(\vp)$ be that $c$ where the maximum is obtained.
Note that as $\pflipoi, \pflipii \in (0, 1)$,
    we have $\beta(\vp) = \KL{C(\vp)}{\pflipoi} = \KL{C(\vp)}{\pflipii}$,
    and $0 < \beta(\vp) < \infty$ for any valid noisy channel $\vp$ (see \Cref{lem:c-thr} for a proof). 
    
Finally, we say that an event occurs \emph{with high probability} (\whp) if it occurs with probability $1-o_n(1)$.

\paragraph{Results}
\label{sec:results}

\newcommand{\novars}{\vp, \alpha, \eps}
\newcommand{\fapre}{For any valid noisy channel $\vp$, $\alpha>0$ and $\eps>0$, there exists some $n_0 = n_0(\novars)$ such that for every $n>n_0$}

We now state our results formally.
For the non-adaptive case, we show in \cref{sec:nonadative_lb} that for
\begin{align}\label{def:ma}
    \mna = \mna(\alpha, \vp) = \min_{\Gamma \in \NN^+} \frac{n\log n}{-\Gamma \cdot \log\bc{1 - (1-\alpha)^{\Gamma - 1} \cdot \bc{1 - \eul^{-\beta(\vp)}}}}
\end{align}
    when $(1 - \varepsilon)\mna$ tests are used,
        any non-adaptive algorithm will fail with high probability:

\begin{theorem}\label{thm:nalower}
    \fapre, all test designs $G$ with $m<(1-\eps)\mna(\alpha, \vp)$ tests and every estimation function $f_G:\cbc{0,1}^m \rightarrow \cbc{0,1}^n$:
        \begin{align*}
            \SmallProb{f_G(\otests_G) = \pats} \leq \eps.
        \end{align*}
    \end{theorem}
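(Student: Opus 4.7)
The plan is a standard impossibility argument built around the \emph{genie-aided} estimator, lifted to a high-probability statement through a second-moment computation. First, I reduce to the genie: for any test design $G \in \soG_m$ and estimator $f_G$, compare with the oracle estimator $\genie$ that, separately for each $i \in [n]$, is handed the infection statuses $\pats_{-i}$ of all other individuals in addition to $\otests$. Since $\genie$ has strictly more information, $\Prob{f_G(\otests_G) = \pats} \leq \Prob{\genie = \pats}$, so it suffices to show the genie misclassifies at least one individual with probability $\geq 1-\eps$.

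Next I would analyse the per-individual error. Conditional on $\pats_{-i}$, only the \emph{good} tests $F^{\mathrm{good}}(i) := \cbc{a \in \partial i \,:\, \pats(j) = 0 \text{ for all } j \in \partial a \setminus \cbc{i}}$ carry information about $\pats(i)$: on these the noiseless value equals $\pats(i)$, so observations are i.i.d.\ $\ber{p_{01}}$ under $\pats(i)=0$ and i.i.d.\ $\ber{p_{11}}$ under $\pats(i)=1$. The Bayes-optimal decision with prior $\alpha$ reduces to a binary hypothesis test on $k := |F^{\mathrm{good}}(i)|$ samples, whose error probability satisfies $P_{\mathrm{err}}(k) \geq n^{-o(1)}\, \eul^{-\beta k}$ by the standard Chernoff-type lower bound, with $\beta$ as in~\eqref{eq:beta}. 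Since $\Prob{a \in F^{\mathrm{good}}(i)} = (1-\alpha)^{w_a-1}$ for each test $a \in \partial i$ of degree $w_a$ and the indicators $\ind\{a \in F^{\mathrm{good}}(i)\}$ are decreasing functions of $\pats$, the FKG inequality gives
\begin{align*}
    \Exp{\eul^{-\beta |F^{\mathrm{good}}(i)|}} \geq \prod_{a \in \partial i} \bc{1 - (1-\alpha)^{w_a-1}(1-\eul^{-\beta})}.
\end{align*}

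Third, I aggregate over $i$. Define $g(w) := -w\log\bc{1 - (1-\alpha)^{w-1}(1-\eul^{-\beta})}$, so that $\mna = n \log n / g(\Gamma)$ with $g(\Gamma) = \max_{w \in \NN^+} g(w)$, and let $N$ be the number of individuals on which $\genie$ errs. Combining the above with Jensen's inequality applied to the convex map $x \mapsto \eul^{-x}$,
\begin{align*}
    \Exp{N} \geq n^{-o(1)} \sum_{i\in [n]} \exp\bc{-\sum_{a \in \partial i} g(w_a)/w_a} \geq n^{1-o(1)} \exp\bc{-\tfrac{1}{n}\sum_{a \in F(G)} g(w_a)} \geq n^{1-o(1)} \exp\bc{-\tfrac{m \cdot g(\Gamma)}{n}},
\end{align*}
using that each test $a$ is incident to exactly $w_a$ individuals and $\sum_a g(w_a) \leq m \cdot g(\Gamma)$. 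Plugging in $m < (1-\eps)\mna$ gives $\Exp{N} \geq n^{\eps-o(1)} \to \infty$.

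Finally, I would boost this expectation to a high-probability statement via a second-moment argument. For distinct $i, j$ with $\partial i \cap \partial j = \emptyset$, the events $\cbc{\genie \text{ errs on } i}$ and $\cbc{\genie \text{ errs on } j}$ depend on disjoint portions of $(\pats, \otests)$ and are independent; an average individual degree of $O(\log n)$ implies that only an $O((\log n)^2/n)$ fraction of pairs share a test, yielding $\Var(N) = o(\Exp{N}^2)$, so that Chebyshev's inequality gives $\Prob{N=0} \to 0$ and hence $\Prob{N = 0} \leq \eps$ for $n \geq n_0$. The main technical obstacle is this final step for adversarial degree sequences --- a few very high-degree tests or individuals break the independence heuristic --- which I would address by first truncating $N$ to individuals with not-too-large $|\partial i|$ without affecting $\Exp{N}$ asymptotically, then bounding the variance uniformly over $G \in \soG_m$.
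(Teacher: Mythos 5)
Your proposal follows the same high-level blueprint as the paper: reduce to the genie estimator, lower-bound the per-individual error via a Cram\'er-type bound, use FKG to bound $\Exp{e^{-\beta\good_i}}$, aggregate over tests via Jensen to get $\Exp{N}\geq n^{\eps-o(1)}$, and conclude via a second-moment argument. The FKG and Jensen steps in particular are essentially identical to the paper's \Cref{lem:atozero}. Where you diverge is the final second-moment step: the paper splits it into (i) a \emph{conditional} Chebyshev bound $\ProbCond{\text{all correct}}{\pats}\leq 1/\ExpCond{\abs{\misD}}{\pats}$ over an explicitly constructed \emph{distant set} $D$ (second neighborhoods disjoint), and (ii) a second-moment bound showing $\ExpCond{\abs{\misD}}{\pats}\geq\sum_{i\in D}e^{-b\good_i}$ diverges \whp{} over $\pats$. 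You instead do one global variance computation $\Var{N}=o(\Exp{N}^2)$ over all of $[n]$, unconditionally. That route can be made to work, but it carries extra bookkeeping which your write-up does not fully resolve.

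Two genuine gaps in the argument as written. First, the independence criterion is misstated: $\partial i\cap\partial j=\emptyset$ is \emph{not} sufficient for the error events on $i$ and $j$ to be independent. The genie's error on $i$ depends on $\good_i(\pats)$, which is a function of $\pats$ restricted to the entire second neighborhood $\bigcup_{a\in\partial i}\partial a$ (not just $\otests(\partial i)$), so two individuals whose tests are disjoint but whose second neighborhoods overlap have correlated error indicators. You therefore need second-neighborhood disjointness --- precisely the paper's ``distant'' condition --- and your count of dependent pairs must be redone with the larger (second-neighborhood) radius. Second, your ``$n^{-o(1)}$'' factor from the Cram\'er lower bound is glossed over in a way that does not come for free: the lower bound holds as $e^{-k(\beta+\delta)}\cdot(1-o_k(1))$ for fixed $\delta>0$ and large $k$, so to extract a uniform $n^{-o(1)}$ you must (a) cap $\good_i$ (justifying the truncation of high-degree tests and individuals, as the paper does in \Cref{def:G_eta}), and (b) carry $b=\KL{C-\delta}{p_{11}}>\beta$ through the FKG/Jensen aggregation and only send $\delta\to 0$ at the very end via the continuity argument of \Cref{claim:cna'}, rather than replacing $b$ by $\beta$ from the outset. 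Relatedly, you do not add $\eta\log n$ individual tests per individual as the paper does; without some such device, nothing guarantees $\good_i$ is large enough for the Cram\'er asymptotics to apply to each $i$, and the per-individual lower bound degrades. Your acknowledgement that adversarial degree sequences need truncation points in the right direction, but these three points all need to be carried out for the argument to close.
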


Furthermore, we show that the non-adaptive algorithm $\SPOG$ given in \cref{sec:nonadative_ub}
    recovers $\pats$ with high probability using $(1+\varepsilon)\mna$ tests.
\begin{theorem}\label{thm:naalg}
    \fapre, there is a randomized test design $\vG$ using $m \leq (1+\eps)\nonadm(\alpha, \vp)$ tests \whp and a deterministic polynomial time algorithm $\SPOG$ such that\
    \begin{align*}
        \SmallProb{\SPOG(\vG, \otests_{\vG}) = \pats} \geq 1 - \eps.
    \end{align*}
\end{theorem}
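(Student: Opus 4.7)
The approach is to spend a vanishingly small fraction of the budget on individual tests, use these to produce a synthetic pseudo-genie $\hat\sigma_1$, and allocate the remaining tests to a random $\Gamma$-uniform bipartite graph of group tests, where $\Gamma$ is the integer minimizer in \eqref{def:ma}. Writing $\beta = \beta(\vp)$, $p = (1-\alpha)^{\Gamma-1}$, and $\eta = -\log(1 - p(1 - e^{-\beta}))$, we have $\nonadm = n \log n / (\Gamma \eta)$. The plan is, for a sufficiently small $\xi = \xi(\eps, \alpha, \vp) > 0$, to assign each individual $\lceil \xi \log n \rceil$ individual tests and then to draw a configuration-model bipartite graph in which each test has size $\Gamma$ and each individual has group-degree $\Delta = \lceil (1 + \eps/4) \log n / \eta \rceil$; picking $\xi \leq \eps/(2\Gamma\eta)$ ensures the total test count is at most $(1 + \eps) \nonadm$ (the small probability that the configuration model produces a non-simple graph can be absorbed by a standard rejection step).

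$\SPOG$ is the two-step decision rule: first, for each $j$, set $\hat\sigma_1(j) = 1$ iff the fraction of $j$'s positive individual tests exceeds $C(\vp)$; second, for each $i$, let $T_i$ denote the set of $i$'s group tests whose other $\Gamma - 1$ members are all labeled $0$ by $\hat\sigma_1$, and output $\hat\sigma(i) = 1$ iff the fraction of positives inside $T_i$ exceeds $C(\vp)$. The analysis splits into three steps. Step (a): by Chernoff on the binomial positive count of $j$'s individual tests and the identity $\beta = \KL{C(\vp)}{\pflipii} = \KL{C(\vp)}{\pflipoi}$, each $j$ is pseudo-misclassified with probability at most $n^{-\xi\beta}$; hence \whp{} $\hat\sigma_1$ agrees with $\pats$ on all but $n^{1 - \xi\beta/2}$ individuals. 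Step (b): fix $i$ and condition on a typical pseudo-genie. Since individual test noise, group test noise, and the random graph are mutually independent, $|T_i|$ is approximately $\bin{\Delta, p(1 + o(1))}$, and, given $|T_i| = k$, the number of positive tests is $\bin{k, \pflipii}$ if $\pats(i) = 1$ and $\bin{k, \pflipoi(1+o(1))}$ if $\pats(i) = 0$; the $o(1)$ perturbation comes from the tiny fraction of pseudo-good tests that are not actually good. Step (c): a Chernoff bound on the positive fraction versus threshold $C(\vp)$ gives a conditional error at most $\exp(-k\beta(1-o(1)))$, and averaging over $|T_i|$ using the binomial MGF yields a per-individual error at most $(1 - p(1 - e^{-\beta(1-o(1))}))^\Delta \leq \exp(-\Delta \eta (1 - o(1))) \leq n^{-(1 + \eps/8)}$; a union bound over individuals finishes the proof.

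The main obstacle is that the effective Chernoff rate per group test is $\eta = -\log(1 - p(1 - e^{-\beta}))$, which is strictly smaller than the naive guess $p\beta$ (since $p(1 - e^{-\beta}) < 1 - e^{-p\beta}$ for $p \in (0,1)$). Operationally, the error is dominated by configurations in which $|T_i|$ is atypically small, not by atypical test noise given $|T_i|$; the algorithm matches the lower bound precisely through the MGF identity $\Erw[\exp(-|T_i|\beta)] = (1 - p(1 - e^{-\beta}))^\Delta = \exp(-\Delta \eta)$, rather than through concentrating $|T_i|$ around its mean $\Delta p$. Secondary technical points include the mild dependencies between pseudo-genie labels and group test results, which decouple via independence of noise across tests and exchangeability in the configuration model, and the minor deviations from regularity in the configuration model, which contribute only lower-order corrections.
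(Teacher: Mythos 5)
Your proposal matches the paper's proof in its essentials: a vanishing fraction of the budget buys individual tests from which a pseudo-genie $\hat{\sigma}_1$ is formed by thresholding at $C$; the rest goes to size-$\Gamma$ group tests; the final classifier thresholds the fraction of positives among an individual's pseudo-good tests; and the per-individual error is computed not by concentrating $|T_i|$ around its mean but by averaging the Chernoff bound $e^{-|T_i|\beta(1-o(1))}$ over the (approximately binomial) $|T_i|$ via the moment generating function, giving the rate $\bigl(1-(1-\alpha)^{\Gamma-1}(1-e^{-\beta})\bigr)^{(1+\Theta(\eps))\log n/|\!\log(\,\cdot\,)|}$. You have identified that MGF step as the crux correctly, and the choice of a configuration model with a rejection step in place of the paper's i.u.r.\ group tests plus backup individual tests ($F_{2,idv}$) is a legitimate technical variant.

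There is, however, a genuine gap in the conditional-independence claims underlying step (b). You assert that, given $\pats(i)$ and the pseudo-genie labeling, the indicators $\ind\{a \in T_i\}$ and the observed results $\otests(a)$ across $i$'s group tests $a$ are independent, attributing this to ``independence of noise across tests and exchangeability in the configuration model.'' This is false as stated: if two group tests $a, a'$ both contain $i$ and also share a second individual $j$, then both indicators $\ind\{a \in T_i\}, \ind\{a' \in T_i\}$ depend on $\hat{\sigma}_1(j)$, and when $\pats(i) = 0$ both observed results depend on $\pats(j)$, so $|T_i|$ is not binomial and the positives in $T_i$ are not conditionally i.i.d. The paper resolves exactly this by first restricting to a \emph{distinctive} subset $D_i$ of $i$'s tests that pairwise intersect only in $i$ (Algorithm~\ref{algo:nonad}), proving by a birthday-paradox estimate (\Cref{lem:naalg_many_distinctive_tests}) that $|D_i| \geq (1+\eps/6)\xi\log n$ for all $i$ w.h.p., and only then applying the binomial domination and MGF argument (\Cref{lem:naalg_misclassify}) to $P_i \subseteq D_i$. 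Your write-up needs the analogous pruning to collision-free tests for the binomial claims to be rigorous. A secondary slip: step (a) bounds the \emph{global} count of pseudo-misclassifications, but what step (b) actually needs is the contamination probability of a test already conditioned to be pseudo-good; the paper obtains this directly by Bayes, $\ProbCond{\pats(j)=1}{\hat{\sigma}_1(j)=0}=\Theta(n^{-\beta\eta})$ (\Cref{lem:naalg_test_tainted}), which is what should feed the $\pflipoi(1+o(1))$ in your step (b) rather than the global count.
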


Our second pair of results states that
    $\adm = \adm(\alpha, \vp) = \frac{\alpha}{\KL{p_{11}}{p_{01}}} \cdot n \log n$
    is the same kind of threshold for adaptive schemes.
In \cref{sec:adaptive_impossible},
    we see that when only $(1 - \varepsilon)\adm$ tests are used,
    any adaptive algorithm fails with high probability:

\begin{theorem}\label{thm:adp:low}
    \fapre, any
    adaptive test scheme $\alg$ using $m_{\alg} < (1-\eps) \adm(\alpha, \vp)$, and any estimation algorithm $f_{G_{\alg}} :\cbc{0,1}^{m_{\alg}} \rightarrow \cbc{0,1}^n$:
    \begin{align}
         \SmallProb{f_{G_{\alg}}(\otests_\alg) = \pats} \leq \eps.
    \end{align}
\end{theorem}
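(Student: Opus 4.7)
The plan is to reduce the problem to the failure of a strictly more powerful \emph{genie-aided estimator}, and then establish the genie's failure via a counting argument, a large-deviation lower bound on per-coordinate miss probabilities, and an independence observation across infected coordinates. Concretely, I would introduce, for each $i \in [n]$, the per-coordinate genie $\genie_i$ which is granted oracle access to $\pats_{-i}$ and outputs the MAP estimate of $\pats_i$ from $(\pats_{-i}, \vG_{\alg}, \otests_{\alg})$. By the optimality of Bayes rules, any adaptive algorithm $f$ satisfies $\Pr[f_i = \pats_i] \leq \Pr[\genie_i = \pats_i]$ coordinate-wise, and the goal becomes to show that even this much stronger estimator misclassifies some infected individual with probability tending to $1$.

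The combinatorial core of the argument is a pigeonhole observation: a test $a$ is \emph{good} for $i$ (i.e., flipping $\pats_i$ changes the noiseless result $\atests(a)$) iff $i \in \partial a$ and every other member of $\partial a$ is healthy under $\pats$. Hence any single test is good for at most one infected individual, namely the unique infected member of $\partial a$ when $|\partial a \cap \infected| = 1$. Writing $g_i$ for the number of good tests for $i$, this yields the deterministic bound $\sum_{i \in \infected} g_i \leq m_{\alg} < (1-\eps)\adm$, which combined with $|\infected| = (1+o(1))\alpha n$ whp (by Chernoff) yields---via a Markov-type averaging argument---that $\Omega(n)$ infected individuals $i$ satisfy $g_i \leq (1-\eps/2)\log n / \KL{p_{11}}{p_{01}}$. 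For each such $i$, conditional on $\pats_i = 1$ and $\pats_{-i}$, the genie reduces to a binary hypothesis test of $\ber{p_{11}}$ versus $\ber{p_{01}}$ based on $g_i$ i.i.d.\ samples, and a Cram\'er-type large-deviation lower bound gives $\Pr[\genie_i = 0 \mid \pats_i = 1] \geq n^{-(1-\delta)}$ for some $\delta = \delta(\eps, \vp) > 0$. Since good tests for distinct infected individuals are disjoint, the events $\{\genie_i = 0\}_{i \in \infected}$ depend on disjoint channel-noise realizations and hence are conditionally independent given $(\pats, \vG_{\alg})$; the product bound then yields $\Pr[\forall i \in \infected: \genie_i = \pats_i \mid \pats, \vG_{\alg}] \leq \exp(-n^{\Omega(1)}) \to 0$.

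The main obstacle is twofold. First, in the adaptive setting the test design $\vG_{\alg}$ is itself a random function of the channel noise, so care is required to ensure the counting bound and the disjointness of good tests are valid for every realization---both are structural and follow directly from the definitions, so adaptivity enters only through how tests are selected, not through the counting. Second, and more delicately, per-coordinate comparisons do not automatically yield joint bounds, so I plan to lift the reduction in the spirit of the community-detection arguments cited in \cite{jmlr/Abbe17, arxiv/AbbeS15}: first invoking the optimality of the super-MAP to bound $\Pr[f = \pats] \leq \Erw[\max_\sigma \Pr[\pats = \sigma \mid \vG_{\alg}, \otests_{\alg}]]$, then decomposing the joint posterior maximum via the chain rule together with a Jensen step (more conditioning increases the max in expectation), and finally exploiting the disjoint-observation independence across infected coordinates to factorize the resulting expectation and conclude $\Pr[f_{G_{\alg}}(\otests_{\alg}) = \pats] \leq \eps$.
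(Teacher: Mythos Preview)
Your counting bound $\sum_{i\in\infected_{\pats}} g_i \le m_{\alg}$ and the resulting averaging are fine and survive adaptivity, but the conditional-independence step does not. In the adaptive setting, $\vG_{\alg}$ is a (random) function of $\otests$: the $k$-th test $\alg_k$ is chosen based on $(\otests_1,\ldots,\otests_{k-1})$. Therefore conditioning on $(\pats,\vG_{\alg})$ is \emph{not} the same as conditioning only on the noiseless inputs to the channel; the event $\{\vG_{\alg}=G\}$ is itself a constraint on the channel noise, and the residual distribution of $\otests$ given $(\pats,\vG_{\alg})$ is in general \emph{not} a product measure over tests. Consequently, the events $\{\genie_i\ne\pats_i\}_{i\in\infected_{\pats}}$ need not be conditionally independent, and the Cram\'er lower bound you invoke (which requires $g_i$ i.i.d.\ $\ber{p_{11}}$ observations) is not directly available either: an adaptive scheme can treat $g_i$ as a stopping time, biasing the empirical fraction $g_i^+/g_i$ away from $p_{11}$ on the realized design. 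The final paragraph's ``chain rule + Jensen + independence factorization'' does not escape this, since any factorization across infected coordinates runs into the same dependency through $\vG_{\alg}$.

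The paper sidesteps this by working pointwise on the realized pair $(G,\otest)$ rather than in distribution over the noise. For any $\sigma$ and any $j\in\infected_\sigma$, Bayes' rule gives an exact formula for the posterior odds $\ProbCond{\pats=\sigma^{\downarrow j}}{G,\otest}/\ProbCond{\pats=\sigma}{G,\otest}$ in terms of $g_j^+,g_j^-$ (\cref{cl:hd:pos-ratio}); no independence is needed here because it is a deterministic identity on the realized data. Summing over $j$ and using $\sum_j \ProbCond{\pats=\sigma^{\downarrow j}}{G,\otest}\le 1$ together with Jensen yields an upper bound on $\ProbCond{\pats=\sigma}{G,\otest}$ depending on $\sum_j g_j\le m_{\alg'}$ and on $g_j^+/g_j$ being close to $p_{11}$ for most $j$. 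The latter is enforced by defining $\vepp$-\emph{typical} configurations and proving $\pats$ is typical whp; that proof is where adaptivity is genuinely handled, by augmenting $\alg$ with $\Theta(\eta\log n)$ individual tests per individual (so every $g_i\ge\eta\log n$) and controlling an optional-stopping-type supremum. If you want to keep the genie viewpoint, you would need an analogue of this typicality step before any large-deviation lower bound can be applied.
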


And we show that the non-adaptive algorithm $\PRESTO$ given in \cref{sec:nonadative_ub}
    recovers $\pats$ with high probability using $(1+\varepsilon)\adm$ tests.
\begin{theorem}\label{thm:adp:upper}
  \fapre, the three-stage adaptive test scheme \PRESTO\ uses at most $m\leq(1+\eps) \adm(\alpha, \vp)$ 
  \whp such that
  \begin{align}
     \SmallProb{\PRESTO(\otests_\PRESTO) = \pats} \geq 1 - \eps.
  \end{align}
\end{theorem}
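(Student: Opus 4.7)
The plan is to analyze $\PRESTO$'s three stages in sequence. Stage~1 (pre-sorting) gives every individual $\eta\log n$ individual tests, for a small constant $\eta=\eta(\alpha,\vp,\eps)$ to be chosen, and thresholds the empirical positive rate at some fixed $c_1\in(\pflipoi,\pflipii)$ to split the population into $S_0$ (below~$c_1$) and $S_1$ (above). Standard Chernoff/Sanov bounds then yield $|S_1\cap\healthy|\le n^{1-\eta\KL{c_1}{\pflipoi}+o(1)}$ and $|S_0\cap\infected|\le n^{1-\eta\KL{c_1}{\pflipii}+o(1)}$ \whp, so $|S_1|=(\alpha+o(1))n$ \whp and both misclassified subsets are polynomially smaller than~$n$. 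Stage~2 (thresholder) performs $k=(1+\eps/4)\log(n)/\KL{\pflipii}{\pflipoi}$ fresh individual tests on each $i\in S_1$ and places~$i$ in $U_1$ if its empirical positive rate is at least $\pflipii-\eps_u$ for a small constant $\eps_u>0$, else in~$U_0$. Stage~3 runs $\SPOG$ (\Cref{thm:naalg}) separately on~$S_0$ and on~$U_0$; the final output is $U_1$ together with the two $\SPOG$ outputs.

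For the budget, Stage~1 uses $\eta n\log n=(\eta\KL{\pflipii}{\pflipoi}/\alpha)\adm$ tests, which is at most $(3\eps/4)\adm$ provided $\eta<3\eps\alpha/(4\KL{\pflipii}{\pflipoi})$. Stage~2 uses $|S_1|\cdot k\le(\alpha+o(1))n\cdot k=(1+\eps/4+o(1))\adm$ \whp by the stage-1 bound on~$|S_1|$. For Stage~3, the formula for $\nonadm$ in~\eqref{def:ma} satisfies $\nonadm(\alpha',\vp)=O(\alpha' n\log n)$ as $\alpha'\to 0$, so a $\SPOG$ subproblem on a set carrying a polynomially small effective infection proportion costs only $o(n\log n)$. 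Both subproblems are of this sparse type: $|S_0\cap\infected|\le n^{1-\Omega(1)}$ from Stage~1, while $|U_0|\le n^{1-\Omega(1)}$ is dominated by $|S_1\cap\healthy|$ together with at most $\alpha n\cdot\exp(-k\KL{\pflipii-\eps_u}{\pflipii})=n^{1-\Omega(\eps_u^2)}$ false-negative infected from Stage~2. Summing the three contributions gives total cost $\le(1+\eps)\adm$ \whp.

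Correctness reduces to the inclusion $U_1\subseteq\infected$, since then the two $\SPOG$ runs identify the remaining infected by \Cref{thm:naalg}. A healthy individual in $S_1$ enters $U_1$ only if its stage-2 empirical positive rate exceeds $\pflipii-\eps_u$, which happens with probability at most $\exp(-k\KL{\pflipii-\eps_u}{\pflipoi})=n^{-(1+\eps/4)\KL{\pflipii-\eps_u}{\pflipoi}/\KL{\pflipii}{\pflipoi}}$; for $\eps_u$ small enough this exponent exceeds $1+\eps/8$, so a union bound over the at most~$n$ healthy individuals makes $U_1\subseteq\infected$ \whp. The main obstacle is the simultaneous calibration of $(\eta,\eps_u,k)$: shrinking $\eps_u$ is needed to drive the stage-2 false-positive rate below~$1/n$, but since $\KL{\pflipii-\eps_u}{\pflipoi}\to\KL{\pflipii}{\pflipoi}$ as $\eps_u\downarrow 0$, picking $\eps_u$ \emph{after} $\eps$ (and before $k$) preserves the sharp test-count constant $\KL{\pflipii}{\pflipoi}$. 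A subtler point is that the stage-2 false-\emph{negative} rate is only $n^{-\Theta(\eps_u^2)}$, not $o(1/n)$, so a polynomially large sub-linear fraction of infected lands in $U_0$; this is tolerable because it keeps $|U_0|=n^{1-\Omega(1)}$, which in turn makes the stage-3 sparse $\SPOG$ run cheap.
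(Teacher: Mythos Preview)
Your three-stage outline, the budget accounting, and the $U_1\subseteq\infected_{\pats}$ argument via a union bound all match the paper's proof of \Cref{thm:adp:upper} essentially line for line. However, there is a genuine gap in how you handle Stage~3.

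You invoke \Cref{thm:naalg} for the two $\SPOG$ calls, but that theorem is stated for a \emph{fixed constant} $\alpha$, with the threshold $n_0$ depending on $\alpha$; its test design also chooses $\Gamma$ as a function of that constant. In the sub-problems on $S_0$ and $U_0$ the effective infection probability is $n$-dependent, polynomially small, and only known up to an upper bound, so \Cref{thm:naalg} does not apply as written. Your observation that $\nonadm(\alpha',\vp)=O(\alpha' n\log n)$ as $\alpha'\to 0$ identifies the right order but is not an algorithmic guarantee: achieving it requires running $\SPOG$ with a test degree $\Gamma$ that grows with $n$, and re-doing the analysis in that regime. The paper supplies exactly this as \Cref{prop:sublin}, where $\Gamma=\lceil\log n\rceil$ and only an upper bound $\alpha\le n^{-\hat\theta}$ is assumed.

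Two secondary points your sketch elides. First, for the $U_0$ call to \Cref{prop:sublin} you need $\Pr[\pats(i)=1\mid i\in U_0]\le |U_0|^{-\Omega(1)}$, not merely $|U_0|\le n^{1-\Omega(1)}$. Since the healthy contribution to $U_0$ scales like $n^{1-\eta\KL{c_1}{\pflipoi}}$ while the infected contribution scales like $n^{1-\Theta(\eps_u^2)}$, this forces the additional constraint that $\eta$ be small relative to $\eps_u$ (the paper's \eqref{eq:eps2_nu}); choosing $\eta$ from the Stage-1 budget alone is not sufficient. Second, applying any $\SPOG$-type result to $S_0$ or $U_0$ requires the conditional infection statuses there to remain i.i.d.; this holds because membership in each set is decided purely by an individual's own individual-test outcomes, but you should say so.
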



\section{Preliminaries}
\paragraph{MAP estimation}
For the impossibility results we have to prove that \emph{any} estimator fails \whp if not supplied with enough tests.
Since dealing with an arbitrary estimator is tedious, instead we first consider the estimator maximizing the probability of exactly recovering $\pats$, namely the maximum a posteriori (MAP) estimator, which chooses the infection vector that maximizes the a posteriori probability given observed test results.

\begin{definition}[MAP estimate]\label{def:map}
    For any test design $G\in\soG_m$ and any observed results $\otest\in \cbc{0,1}^m$,
        the \emph{MAP estimate} $\map(G, \otest) = \map^{g,\otest}$ of $\pats$ is given by 
    \begin{align}\label{eq:map-argmax}
        \map^{G,\otest} = \argmax_{\hat{\sigma}\in \cbc{0,1}^n} \SmallProbCond{\hat{\sigma}= \pats}{G,\otest},
    \end{align}
        where we choose that $\hat{\sigma}$ with the most zeros if the $\argmax$ is not unique.
\end{definition}

As is commonly known, no estimator is better than the MAP estimator  for achieving exact recovery (see, e.g., \cite[Section 3.2]{jmlr/Abbe17}):

\begin{fact}[MAP estimator is the best estimator given the tests and observed test results] \label{lem:map-best}
     For any test design $G\in\soG_m$ and any estimator $f:\soG_m\times  \cbc{0,1}^m\mapsto \cbc{0,1}^n$, $f$ is at most as good as the MAP estimate, i.e., 
    \begin{align*}
        \Prob{\pats = \map^{G,\otests}} \geq \SmallProb{\pats = f(G,\otests) }.
    \end{align*}
\end{fact}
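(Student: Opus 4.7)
The plan is to prove this by conditioning on the pair $(G, \otests)$ and arguing pointwise optimality. The core observation is that once $(G, \otests)$ is fixed, any estimator $f$ deterministically outputs some $\hat{\sigma} = f(G, \otests) \in \{0,1\}^n$, and the conditional probability that this particular $\hat{\sigma}$ equals $\pats$ is precisely $\SmallProbCond{\pats = \hat{\sigma}}{G, \otests}$. The MAP rule, by definition in \Cref{def:map}, picks the maximizer of this conditional probability over all candidate vectors. Hence pointwise, for every realization $(G, \otest)$,
\begin{align*}
\SmallProbCond{\pats = f(G, \otests)}{G, \otests = \otest} \;\leq\; \SmallProbCond{\pats = \map^{G, \otest}}{G, \otests = \otest}.
\end{align*}

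Next I would take the expectation over $(G, \otests)$ on both sides and invoke the tower property to convert the conditional probabilities back to the unconditional ones stated in the claim:
\begin{align*}
\SmallProb{\pats = f(G, \otests)} = \Exp{\SmallProbCond{\pats = f(G, \otests)}{G, \otests}} \leq \Exp{\SmallProbCond{\pats = \map^{G, \otests}}{G, \otests}} = \SmallProb{\pats = \map^{G, \otests}}.
\end{align*}
This gives the claimed inequality.

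There is essentially no obstacle here: the statement is the textbook optimality of MAP for the $0/1$ loss (exact recovery), and the only subtlety worth mentioning is the tie-breaking convention. Because the argmax in \Cref{def:map} may not be unique, one should note that ties are broken deterministically (the paper picks the vector with the most zeros), so $\map^{G, \otests}$ is a well-defined measurable function of $(G, \otests)$; any such deterministic tie-breaking rule preserves the pointwise inequality above, since ties do not change the value of the maximum. It may also be worth remarking that the inequality is vacuous if $f$ uses external randomness, in which case one averages over that randomness first and applies the same argument to the resulting (possibly randomized) conditional distribution of $f(G, \otests)$ given $(G, \otests)$, reducing again to the pointwise bound.
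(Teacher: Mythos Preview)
Your argument is correct and is precisely the standard pointwise-optimality-plus-tower-property proof of MAP optimality for the $0/1$ loss. The paper itself does not supply a proof of this statement: it records it as a known fact with a reference to \cite{jmlr/Abbe17}, so there is nothing to compare against beyond noting that your write-up is the textbook justification the paper implicitly defers to.
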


\paragraph{Notation used throughout}
For brevity, for any infection vector $\pat \in \{0,1\}^n$, we write $\infected_\pat$ for the set of infected individuals under $\pat$, i.e., $\infected_\pat = \cbc{i \in [n] \mid \pat(i) = 1}$, and hence $\overline{\infected_\pat} = [n] \setminus \infected_\pat$ is the set of uninfected individuals.
Consequently, $\infected_{\pats}$ is the (random) set of infected individuals as in the model.

Furthermore, throughout our paper, we distinguish between tests that are \emph{good} for estimating an individual and those that are not. 
This differentiation of tests has been commonly used in group testing (see, e.g., \cite{cojaoghlan2024noisygrouptestingspatial, Aldridge19}).
Intuitively, a good test for an individual actually carries information about the individual's infection status, which is only the case if no other infected individual is included.
\begin{definition}[good tests]\label{def:goodtest}
    For a given test design $G$,
        a test $a$ is \emph{good} for an individual $i$ under the infection vector $\pat \in \{0, 1\}^i$
        if $a$ contains $i$,
        and no individual in $a$, except for (perhaps) $i$, is infected.
    We write $\good_i(\pat, G)$ for the number of such tests for $i$ under $\pat$.
    For a given vector $\otest$ of observed test results,
        we also write $\good_i^-(\pat, G, \otest)$ (resp., $\good_i^+(\pat, G, \otest)$) for the number of such tests displaying negatively (resp., positively).
    When $\pat$ is clear from context, we may simply write $\good_i$ (etc.).
    Formally,
    \begin{align*}
        \good_i(\pat, G) &= \abs{\cbc{a \in \partial i \mid \bc{\partial a\setminus\cbc{i}} \cap \infected_\pat = \emptyset }} \\
        \good_i^-(\pat, G, \otests) &= \abs{\cbc{a\in \partial i \mid \otests(a)=0 
        \land \bc{\partial a\setminus\cbc{i}} \cap \infected_\pat= \emptyset}} \textup{ and }  \good_i^+(\pat, G, \otests) =  \good_i(\pat, G) - \good_i^-(\pat, G, \otests)\,.
        \end{align*}
    As in other places, we omit the parameters when the values are clear from context.
\end{definition}

\section{Non-adaptive Group Testing}
In this section, we outline the proof strategy for both impossibility as well as achievability in the non-adaptive setting.
In both,
    we (conceptually) use an estimator which is even better than the MAP estimator,
    the so-called \emph{genie-based estimator}.
This technique is inherited from community detection \cite{jmlr/Abbe17,arxiv/AbbeS15} but so far not used in the context of group testing to the best of our knowledge.

\begin{definition}[Genie estimator]\label{def:genie}
    Given a test design $G$ and the observed results $\otest$, with $\pats_{-i}$ the ground truth for every other individual except $i$, the \emph{genie-based estimator} (or just \emph{genie estimator}) is given by
    \begin{align}\label{eq:genie}
        \genie^{G,\otest}(i)=\genie^{G,\otest,\pats_{-i}}(i) = \argmax_{s\in \{0,1\}} \ProbCond{\pats(i) = s }{ G,\otests=\otest,\pats_{-i}}.
    \end{align}
    If both $0$ and $1$ maximize $\ProbCond{\pats(i) = s }{ G,\otests=\otest,\pats_{-i}}$, let $\genie^{G,\otest}(i)=0$.
\end{definition}

Intuitively, it is clear that the genie estimator outperforms the MAP estimator:
    it has access not only to the test design and the observed results but also to the whole ground truth $\pats_{-i}$ to determine $\pats(i)$.
So proving that even the genie estimator fails when the test design contains too few tests---as done in \cref{sec:nonadative_lb}---implies that all estimators fail.
The following lemma, proven in \cref{apx:geniebetter}, formalizes this intuition.
\begin{restatable}[Genie estimator is better than MAP estimator]{lemma}{geniebetter}
 \label{lem:geniebetter}
For every test design $G$, 
    \(\Prob{\pats = \genie^{G,\otests}} \geq \SmallProb{\pats = \map^{G,\otests}}.\)
\end{restatable}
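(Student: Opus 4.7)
The plan is to prove the event inclusion $\{\pats = \map^{G,\otests}\} \subseteq \{\pats = \genie^{G,\otests}\}$, from which the claim follows immediately by monotonicity of probability. The underlying intuition is that any joint MAP assignment is automatically a conditional maximizer along every single coordinate, so whenever the joint MAP recovers $\pats$, the coordinatewise genie must too. This mirrors the standard reduction used in community detection (e.g., \cite{jmlr/Abbe17}), which the preceding discussion already invokes as motivation.

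To make this precise, I would fix an outcome on which $\pats = \map^{G,\otests}$ and an arbitrary coordinate $i \in [n]$. On this outcome, $\pats_{-i} = \map^{G,\otests}_{-i}$. For each $s \in \{0,1\}$, let $\pat^{(s)}$ denote the assignment that agrees with $\map^{G,\otests}$ on every coordinate other than $i$ and satisfies $\pat^{(s)}(i) = s$. Factoring the joint posterior yields
\begin{align*}
    \Prob{\pats = \pat^{(s)} \mid G,\otests}
    = \Prob{\pats(i) = s \mid G,\otests, \pats_{-i} = \map^{G,\otests}_{-i}} \cdot \Prob{\pats_{-i} = \map^{G,\otests}_{-i} \mid G,\otests}.
\end{align*}
Since $\map^{G,\otests}$ is a joint maximizer of $\sigma \mapsto \Prob{\pats = \sigma \mid G,\otests}$, dividing through shows that $\map^{G,\otests}(i)$ lies in the argmax over $s \in \{0,1\}$ of the conditional posterior on the right, which is exactly the argmax defining $\genie^{G,\otests}(i)$.

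The only subtle point, and the step I expect to be the main technical obstacle, is reconciling the two tie-breaking rules. If the conditional argmax is a singleton, then $\genie^{G,\otests}(i) = \map^{G,\otests}(i) = \pats(i)$ and there is nothing more to check. If instead both $s = 0$ and $s = 1$ maximize the conditional posterior, then both $\pat^{(0)}$ and $\pat^{(1)}$ lie in the joint argmax; since $\pat^{(0)}$ has strictly more zeros than $\pat^{(1)}$, the MAP convention of breaking ties toward the assignment with the most zeros rules out $\map^{G,\otests} = \pat^{(1)}$ and forces $\map^{G,\otests}(i) = 0$. This matches the genie's tie-breaking default, so $\genie^{G,\otests}(i) = 0 = \pats(i)$ in this case as well. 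Intersecting over $i \in [n]$ gives $\pats = \genie^{G,\otests}$ on the event $\{\pats = \map^{G,\otests}\}$, which establishes the inclusion and hence the lemma.
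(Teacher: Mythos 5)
Your proposal is correct and takes essentially the same approach as the paper's proof: the paper likewise establishes that $\map^{G,\otests}(i)$ coincides with the coordinatewise conditional maximizer given $\pats_{-i}=\map^{G,\otests}_{-i}$ (handling the tie-break by observing that flipping a tied coordinate from $1$ to $0$ yields another joint maximizer with strictly more zeros), and then concludes via the same event inclusion $\{\pats=\map^{G,\otests}\}\subseteq\{\pats=\genie^{G,\otests}\}$. The only cosmetic difference is that the paper proves the identity $\map^{G,\otests}(i)=\genie^{G,\otests,\map^{G,\otests}_{-i}}(i)$ unconditionally before specializing to the event, whereas you work on the event throughout; the substance and the treatment of the tie-breaking conventions are identical.
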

   
Of course, the genie estimator is not realizable:
    when classifying a given individual,
        it requires oracle access to the true infection status of all other individuals.
However, our algorithm $\SPOG$ presented in \cref{sec:nonadative_ub}  uses the concept by emulating the oracle access well enough for the classification to still be correct \whp. 

\subsection{Impossibility: proof strategy for \Cref{thm:nalower}}\label{sec:nonadative_lb}
For the lower bound in the non-adaptive case, we consider an arbitrary test design $G$ with fewer than $(1-\eps) \mna$ tests.
First, we modify the test design to achieve some properties crucial to our analysis without decreasing the success probability of the genie estimator significantly, if at all.
We identify individuals that are especially hard to estimate for the genie estimator when using $G$.
Then, we show that there must be a large set of such individuals with independent tests, such that the genie estimator fails \whp.

To execute this plan,
    we first need to characterize the genie estimator.
Observe that its estimate for an individual $i$ depends only on the observed results of good tests for $i$, i.e., tests that do not contain any other infected individual.
The following lemma shows that the genie estimator is equivalent to a
thresholding function
    on the fraction of positively displayed good tests. A proof can be found in \Cref{apx:lem:geniethr}.
Note that the $C$ defined in the lemma statement is equal to the maximizer of \cref{eq:beta} as we prove in \Cref{lem:c-thr}.

\begin{restatable}{lemma}{geniethr} \label{lem:geniethr}
    Let $G$ be a test design, $C= \ln(\frac{p_{00}}{p_{10}}) / \ln(\frac{p_{11}p_{00}}{p_{01}p_{10}})$, and $\tconst = \tconst(\alpha,\vp) = \ln(\frac{\alpha}{1-\alpha}) / \ln(\frac{p_{01} p_{10}}{p_{11}p_{00}})$. 
    Then 
    \begin{align*}
        \genie^G(i) = \begin{cases}
            0 & \quad \textup{if $\good_i(\pats)=0$ and $\alpha \leq \frac{1}{2}$, or $\good_i^+(\pats) \leq C \good_i(\pats)  + \tconst$},\\
            1 & \quad\textup{otherwise.}
        \end{cases}
    \end{align*}
\end{restatable}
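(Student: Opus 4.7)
The plan is to perform a direct likelihood-ratio calculation. By definition, $\genie^G(i)$ compares $\ProbCond{\pats(i)=1}{G,\otests,\pats_{-i}}$ to $\ProbCond{\pats(i)=0}{G,\otests,\pats_{-i}}$, with ties broken in favor of $0$. Applying Bayes' rule and using that $\pats(i)$ is a priori independent of $\pats_{-i}$, the posterior ratio factors as the prior ratio $\alpha/(1-\alpha)$ times the likelihood ratio $\ProbCond{\otests}{\pats(i)=1,\pats_{-i},G} / \ProbCond{\otests}{\pats(i)=0,\pats_{-i},G}$. So I just need to understand how the test-observation likelihood depends on $\pats(i)$.

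The key observation is that only the good tests for $i$ carry information about $\pats(i)$ once $\pats_{-i}$ is revealed: tests $a$ not containing $i$ obviously don't depend on $\pats(i)$, and tests $a \in \partial i$ that contain some other infected individual have $\atests(a)=1$ regardless of $\pats(i)$, so their contribution cancels in the ratio. For a good test $a$, flipping $\pats(i)$ from $0$ to $1$ flips $\atests(a)$ from $0$ to $1$, so $a$'s contribution to the likelihood ratio is $p_{11}/p_{01}$ if $\otests(a)=1$ and $p_{10}/p_{00}$ if $\otests(a)=0$. Using $\good_i^+$ and $\good_i - \good_i^+$ as the number of positive/negative good tests, the full likelihood ratio becomes $(p_{11}/p_{01})^{\good_i^+}(p_{10}/p_{00})^{\good_i - \good_i^+}$.

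The genie picks $1$ iff the log posterior ratio is strictly positive, i.e.,
\begin{align*}
\good_i^+ \log\tfrac{p_{11}p_{00}}{p_{01}p_{10}} > \good_i \log\tfrac{p_{00}}{p_{10}} + \log\tfrac{1-\alpha}{\alpha}.
\end{align*}
Validity of the channel ($p_{01}+p_{10}<1$) gives $p_{11}p_{00} > p_{01}p_{10}$, so the coefficient of $\good_i^+$ is strictly positive and dividing through yields $\good_i^+ > C\good_i + \kappa$ with the stated constants. Equivalently, the genie outputs $0$ iff $\good_i^+ \le C\good_i + \kappa$.

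Finally, for the edge case $\good_i = 0$ (hence $\good_i^+ = 0$), the threshold condition reduces to $0 \le \kappa$, which, since the denominator defining $\kappa$ is negative, is equivalent to $\alpha \le 1/2$; this matches the explicit first branch in the lemma statement. There is no real obstacle here beyond keeping signs straight when dividing by $\log(p_{11}p_{00}/(p_{01}p_{10}))$ and correctly aligning the tie-breaking convention with the non-strict inequality in the $0$-case; this amounts to routine bookkeeping once the factorization of the likelihood ratio over good tests is written down.
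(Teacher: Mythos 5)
Your proposal is correct and takes essentially the same approach as the paper's: a direct comparison of the two posterior probabilities, which after cancelling all non-good tests reduces to the likelihood ratio $(p_{11}/p_{01})^{\good_i^+}(p_{10}/p_{00})^{\good_i - \good_i^+}$ times the prior odds $\alpha/(1-\alpha)$, giving the same linear threshold on $\good_i^+$ after taking logs and dividing by $\ln(p_{11}p_{00}/(p_{01}p_{10}))>0$. You are slightly more explicit than the paper in justifying why only good tests affect the ratio and in noting that the $\good_i=0$ branch is actually subsumed by the threshold inequality (since $\good_i^+=0\le\tconst$ is equivalent to $\alpha\le\tfrac12$ given that the denominator defining $\tconst$ is negative), but the substance is identical.
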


We now describe our modifications to the test design.
First, we limit the degrees of both individuals and tests in the modified design.
This enables us to find a sufficiently large set of individuals such that their tests, and hence their genie estimates, are independent.
As for tests, we show the following in \Cref{apx:lem:degtest}:
\begin{restatable}{lemma}{degtest} \label{lem:degtest}
    If $m = \Oh(n\log(n))$, then \whp, all tests $a$ with $\abs{\partial a} \geq \log^2(n)$ contain at least two infected individuals.
\end{restatable}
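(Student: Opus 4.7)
The plan is straightforward: for any test design $G$ with $m = \Oh(n\log n)$ tests, I would bound the single-test failure probability pointwise and then union bound over tests. Because the infection vector $\pats$ has i.i.d.\ $\ber{\alpha}$ entries and is independent of $G$, for any fixed test $a$ with $\abs{\partial a} = d_a$ the count $\abs{\partial a \cap \infected_{\pats}}$ is distributed as $\bin{d_a, \alpha}$, giving
\[
    \SmallProb{\abs{\partial a \cap \infected_{\pats}} \leq 1} \;=\; (1-\alpha)^{d_a} + d_a \alpha (1-\alpha)^{d_a-1} \;\leq\; \bc{1 + \tfrac{\alpha}{1-\alpha}\, d_a}(1-\alpha)^{d_a}.
\]

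A short calculation (differentiating the right-hand side with respect to $d$ treated as continuous) shows that this upper bound is eventually decreasing in $d_a$: the exponential factor $(1-\alpha)^{d_a}$ beats the linear prefactor once $d_a$ exceeds a constant depending only on $\alpha$. For $n$ large enough, the supremum over $d_a \geq \log^2 n$ is therefore attained at $d_a = \log^2 n$. Using the identity $(1-\alpha)^{\log^2 n} = n^{-\abs{\log(1-\alpha)}\log n}$, this worst-case single-test probability is $\Oh(\log^2(n)) \cdot n^{-\abs{\log(1-\alpha)}\log n}$, which decays faster than any polynomial in $n$.

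Applying a union bound over the at most $m = \Oh(n\log n)$ tests then bounds the overall failure probability by
\[
    \Oh(n \log^3(n)) \cdot n^{-\abs{\log(1-\alpha)}\log n} = o(1),
\]
which proves the claim \whp. If the test design is itself random, the bound above is uniform in $G$ and so passes through by taking expectations. The only point requiring care is verifying that the single-test bound is monotone on the relevant range of degrees so that substituting $d_a = \log^2 n$ is legitimate; beyond that the argument is just an elementary combination of a binomial two-term tail estimate and a union bound, with no genuine obstacle.
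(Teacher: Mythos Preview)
Your proposal is correct and follows essentially the same approach as the paper's proof: bound the single-test probability $\SmallProb{\abs{\partial a \cap \infected_{\pats}} \leq 1}$ by the two-term binomial tail, use monotonicity in the degree to substitute $d_a = \log^2 n$, and union bound over the $\Oh(n\log n)$ tests. The paper actually leaves the monotonicity step implicit, so your explicit justification of it is, if anything, a slight improvement in rigor.
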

Consequently,
    all such tests are not good for \emph{any} individual \whp,
    so they are not considered by the genie estimator,
Hence, they are safe to remove without affecting the genie estimator's success probability too much.

As for the individuals, we remove all which participate in too many tests, and assume that they are correctly estimated.
This cannot decrease the genie estimator's success probability.
Additionally, for each individual, we add $\eta \log(n)$ individual tests for some small $\eta>0$ to be determined later.
This enforces a lower bound on the number of good tests for each individual,
    which we use in a few calculations. 
To summarize:
\begin{definition}\label{def:G_eta}
    Given test design $G$, we the \emph{modified test design} $\modGraph$ is constructed as follows:
    \begin{itemize}
        \item remove tests in $L = \cbc{a \in F \mid \abs{\partial_G a} \geq \log^2(n) }$ to obtain design $G'$,
        \item remove individuals in $J = \cbc{i\in[n] \mid \abs{\partial_{G_{\eta,L}} i}>\log^4(n) - \ceil{\eta \log(n)}}$ to obtain design $G''$, and
        \item add $\lfloor\eta \log(n) \rfloor$ individual tests for each individual $i$ with $\eta>0$ to obtain design $\modGraph$. \qedhere
    \end{itemize}
\end{definition}
Note that once we remove individuals with a large degree we still have 
$\nnew= n-cn\log(n)\log^2(n)/\log^4(n) = n-cn/\log(n) = \Theta(n)$ 
individuals left to be estimated.
W.l.o.g., we assume that these are the first $\nnew$ individuals $[\nnew]$.
The following lemma confirms the intuition that the genie estimator can only be improved by the adjustments in the modified test design; we prove it in \cref{apx:lem:modifiedsetup}.

\begin{restatable}[modified test setup is easier]{lemma}{modifiedtest}
\label{lem:modifiedsetup}
    Let $\pats[\nnew]$ be the infection statuses for individuals in $[\nnew]$ and $\hat{\sigma}[\nnew]$ be the prediction of the infection statuses for individuals in  $[\nnew]$ based on estimator $\hat{\sigma}$. Then
    \begin{align}
        \label{eq:easierlong}
         \Prob{\pats[\nnew] = \genie^{\modGraph}[\nnew]}
        \geq\Prob{\pats[\nnew] = \map^{\modGraph}[\nnew]}
        \geq \Prob{\pats = \map^{G'}}
        \geq \Prob{\pats = \map^{G}} - o(1) \; .
    \end{align}
\end{restatable}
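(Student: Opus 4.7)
The chain consists of three inequalities, which I address in turn.

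\textit{Inequality (1):} $\Prob{\pats[\nnew] = \genie^{\modGraph}[\nnew]} \geq \Prob{\pats[\nnew] = \map^{\modGraph}[\nnew]}$. After the removal of individuals in $J$, the design $\modGraph$ contains only the individuals in $[\nnew]$, so both estimators on $\modGraph$ produce full-coordinate guesses, and the claim reduces directly to \Cref{lem:geniebetter}.

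\textit{Inequality (2):} $\Prob{\pats[\nnew] = \map^{\modGraph}[\nnew]} \geq \Prob{\pats = \map^{G'}}$. I would chain
\begin{align*}
\Prob{\pats = \map^{G'}} \leq \Prob{\pats[\nnew] = \map^{G'}[\nnew]} \leq \Prob{\pats[\nnew] = \map^{\modGraph}[\nnew]}.
\end{align*}
The first step is trivial, since joint recovery entails recovery of any sub-vector. For the second, the observations available in $\modGraph$ contain those in $G'$ (the results of the tests in $F(G)\setminus L$) together with the added individual tests and oracle access to $\pats|_J$; hence any estimator of $\pats[\nnew]$ based on the $G'$-observations lifts to one on $\modGraph$-observations achieving the same success probability, while $\map^{\modGraph}[\nnew]$ is optimal among the latter by \Cref{lem:map-best}.

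\textit{Inequality (3):} $\Prob{\pats = \map^{G'}} \geq \Prob{\pats = \map^{G}} - o(1)$. Let $\mathcal{E}$ denote the event that every $a \in L$ contains at least two infected individuals under $\pats$; by \Cref{lem:degtest}, $\Prob{\mathcal{E}} = 1 - o(1)$. On $\mathcal{E}$ we have $\atests(a) = 1$ for every $a \in L$, and because the channel noise is independent across tests given $\pats$, the conditional distribution of $\otests_L$ given $(\pats, \otests_{F(G)\setminus L}, \mathcal{E})$ is that of a vector of independent $\ber{p_{11}}$ random variables indexed by $L$. On an enlarged probability space I would introduce a ``phantom'' vector $\otests_L^\phi$ of i.i.d.\ $\ber{p_{11}}$ entries indexed by $L$, independent of $(\pats,\otests_{F(G)\setminus L})$ and coupled so that $\otests_L = \otests_L^\phi$ almost surely on $\mathcal{E}$. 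For any fixed realisation $w$ of $\otests_L^\phi$, the map $\otests_{F(G)\setminus L} \mapsto \map^G(\otests_{F(G)\setminus L}, w)$ is a deterministic estimator of $\pats$ using only the $G'$-observations, so by the MAP optimality (\Cref{lem:map-best}) its success probability is at most $\Prob{\pats = \map^{G'}}$. Averaging over $\otests_L^\phi$, restricting to $\mathcal{E}$, and absorbing $\Prob{\bar{\mathcal{E}}} = o(1)$ then yields
\begin{align*}
\Prob{\pats = \map^G} \leq \Prob{\pats = \map^G,\mathcal{E}} + \Prob{\bar{\mathcal{E}}} \leq \Prob{\pats = \map^{G'}} + o(1).
\end{align*}

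The main non-trivial step is the coupling-based argument for Inequality (3); its validity hinges on the conditional independence of $\otests_L$ and $\otests_{F(G)\setminus L}$ given $\pats$, which is immediate from the test-wise independence of the channel noise, and on verifying that on $\mathcal{E}$ the conditional law of $\otests_L$ really depends on $\pats$ only through $\mathcal{E}$ itself so that the phantom-coupling is well defined.
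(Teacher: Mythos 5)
Your proof of the first and third inequalities is sound. Inequality~(1) is exactly the paper's step (an appeal to \cref{lem:geniebetter}), and your treatment of Inequality~(3), via the phantom vector $\otests_L^\phi$ and the coupling on the event $\mathcal{E}$, is a valid and notably more explicit version of what the paper compresses into a bare reference to \cref{lem:degtest}. The coupling is indeed well defined: generating $\otests_L^\phi$ independently and then setting $\otests_L = \otests_L^\phi$ on $\mathcal{E}$ reproduces the correct conditional law because on $\mathcal{E}$ every $a \in L$ has $\atests(a)=1$, so $\otests_L$ is conditionally i.i.d.\ $\ber{p_{11}}$ regardless of the precise $\pats \in \mathcal{E}$.

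Your argument for Inequality~(2) has a genuine gap. You justify $\Prob{\pats[\nnew] = \map^{G'}[\nnew]} \leq \Prob{\pats[\nnew] = \map^{\modGraph}[\nnew]}$ by claiming that the observations available in $\modGraph$ \emph{contain} those in $G'$ together with the added individual tests and oracle access to $\pats|_J$. This is false on two counts. First, $\modGraph$ is obtained from $G'$ by \emph{removing the individuals in $J$ from every test}, so the test $a' \in F(\modGraph)$ corresponding to $a \in F(G')$ has a \emph{different} neighborhood $\partial_{\modGraph} a' = \partial_{G'} a \setminus J$; consequently $\atests_{\modGraph}(a')$ and $\atests_{G'}(a)$ (and hence the observed results) are different random variables---they are correlated through $\pats[\nnew]$, but neither observation vector contains the other. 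Second, the MAP estimator on $\modGraph$ has no oracle access to $\pats|_J$ whatsoever; those individuals simply do not appear in $\modGraph$. The actual relationship goes the other way around: conditioned on $\pats_J$, the vector $\otests_{G'}$ is a garbled version of $\otests_{G''}$---tests in $G'$ touching an infected $j \in J$ display pure noise independent of $\pats[\nnew]$, and the others coincide with their $G''$-counterparts. The paper therefore decomposes $\Prob{\pats = \map^{G'}}$ by conditioning on $\pats_J = \sigma_J$, bounds the conditional success probability by the supremum over estimators of $\pats[\nnew]$ given $\otests_{G''}$ and $\pats_J$, and crucially then observes that this supremum is \emph{unchanged} when the conditioning on $\pats_J$ is dropped, because $\pats_J$ is independent of both $\otests_{G''}$ and $\pats[\nnew]$. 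Adding the individual tests to pass from $G''$ to $\modGraph$ then yields the bound. Your sketch omits the conditioning and the independence argument that makes $\pats_J$ irrelevant, and the containment you invoke to replace them does not hold.
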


Now in this modified test design, we find a relatively large set of individuals whose tests are independent.
This is ensured by not including individuals whose tests intersect,
    i.e., by making sure that their second neighborhoods are disjoint.
We call such sets of individuals \emph{distant}.
We choose such a set greedily by repeatedly selecting an individual $i$ (by some arbitrary criterion) and removing every individual of distance at most four from $i$ in $\modGraph$.
Since there are at most $(\log^4(n) \cdot \log^2(n))^2 = \log^{12}(n)$ removed in every iteration (by our limit on the respective degrees),
    we can repeat this at least $\nnew/\log^{13}(n)$ times without running out of individuals.

Our goal is to show that in any modified test design we can choose a distant set such that \whp at least one individual in this set is misclassified.
To that end, for any (distant) set of individuals $D$,
    let $\misD$ be the set of those individuals in $D$ that are misclassified  by the genie estimator on $\modGraph$, i.e.,
        $\misD = \{i \in D \mid \genie^{\modGraph}[\nnew](i) \neq \pats[\nnew](i)\}$. 
\newcommand{\modgood}{\tilde{\good}}

We first bound the probability that the genie estimator is correct, conditioned on $\pats$.
The proof in \cref{apx:lem:genie_correct_upper_bound} exploits the fact that correctness for all individuals
    requires correctness for any given distant set of individuals.
Then, we apply Chebyshev's inequality to $\abs{\misD}$,
    also using the independence of distant individuals' tests.

\begin{restatable}{lemma}{geniecorrectupperbound}\label{lem:genie_correct_upper_bound}
    For any set of distant individuals $D \subseteq [\nnew]$,
        \(\SmallProbCond{\pats[\nnew] = \genie^{\modGraph}[\nnew]}{\pats}
            \leq \bc{\ExpCond{\abs{\misD}}{\pats}}^{-1}.\)
\end{restatable}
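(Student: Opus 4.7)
The plan is to observe that since $D\subseteq[\nnew]$, correct classification of every individual in $[\nnew]$ by the genie forces $\misD$ to be empty, and then to leverage the independence provided by distantness to bound the probability that $\misD$ is empty.

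First, I would note that the inclusion $\{\pats[\nnew]=\genie^{\modGraph}[\nnew]\}\subseteq\{|\misD|=0\}$ immediately gives
\[
\SmallProbCond{\pats[\nnew]=\genie^{\modGraph}[\nnew]}{\pats}\le\SmallProbCond{|\misD|=0}{\pats}.
\]

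Next, I would establish that, conditional on $\pats$, the events $\{i\in\misD\}_{i\in D}$ are mutually independent. By \Cref{lem:geniethr}, the estimate $\genie^{\modGraph}(i)$ is a deterministic function of $\pats_{-i}$ (fixed after conditioning on $\pats$) together with the observed outcomes $(\otests(a))_{a\in\partial i}$. Channel noise is independent across tests, so the events $\{i\in\misD\}$ and $\{i'\in\misD\}$ are independent whenever $\partial i\cap\partial i'=\emptyset$. Pairwise bipartite-graph distance strictly greater than $4$ in $\modGraph$ (as satisfied by any distant set) certainly implies this disjointness.

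Writing $q_i:=\SmallProbCond{i\in\misD}{\pats}$, this independence together with the elementary bound $1-x\le e^{-x}$ yields
\[
\SmallProbCond{|\misD|=0}{\pats}=\prod_{i\in D}(1-q_i)\le\exp\!\Bigl(-\sum_{i\in D}q_i\Bigr)=\exp\bigl(-\ExpCond{|\misD|}{\pats}\bigr).
\]
The claimed bound then follows from the elementary inequality $xe^{-x}\le 1$ for $x\ge 0$ (equivalently $e^{-x}\le 1/x$ for $x>0$, a consequence of $e^x\ge x$), with the degenerate case $\ExpCond{|\misD|}{\pats}=0$ being trivial since then the right-hand side is $+\infty$.

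The only mildly delicate step is the conditional independence: one must verify that $\genie^{\modGraph}(i)$ is measurable with respect to $\pats_{-i}$ and the noise on $\partial i$ only (which \Cref{lem:geniethr} makes transparent), and check that distantness genuinely forces disjoint test neighborhoods in $\modGraph$. Both are straightforward from the definitions, and everything else is a routine computation.
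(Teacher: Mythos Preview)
Your proof is correct and follows the same overall structure as the paper's: both start from the inclusion $\{\pats[\nnew]=\genie^{\modGraph}[\nnew]\}\subseteq\{|\misD|=0\}$ and then exploit conditional independence of the events $\{i\in\misD\}_{i\in D}$ given $\pats$. The only difference lies in how the final bound is extracted. The paper argues that $|\misD|$ is a sum of independent indicators, hence $\VarCond{|\misD|}{\pats}\le\ExpCond{|\misD|}{\pats}$, and applies Chebyshev's inequality to obtain $\ProbCond{|\misD|=0}{\pats}\le\VarCond{|\misD|}{\pats}/\ExpCond{|\misD|}{\pats}^2\le 1/\ExpCond{|\misD|}{\pats}$. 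You instead write out the product $\prod_{i\in D}(1-q_i)$ explicitly, bound it by $\exp(-\sum q_i)$, and finish with $e^{-x}\le 1/x$. Your route is slightly more direct and gives the intermediate exponential bound for free; the paper's Chebyshev argument is marginally less sharp but perhaps more suggestive of the second-moment method used elsewhere. Either way the substance is identical.
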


To bound $\ExpCond{\abs{\misD}}{\pats}$ from below,
    we obtain the following bound on the probability that any specific individual in $D$ is misclassified by the genie estimator (conditioned on $\pats$) in \cref{apx:lem:genie_problow}. 

\begin{restatable}[probability of misclassification]{lemma}{genieproblow}\label{lem:genie_problow}
Let $i \in [\nnew]$,
    and let $0 < \delta < 1 - p_{10} - C$.
Then for sufficiently large $n$, the probability that $i$ is misclassified by the genie estimator is bounded from below as
    \begin{align}
        \ProbCond{\genie^{\modGraph}(i)\neq \pats(i) }{\pats} \geq \exp\bc{-\good_i(\pats, \modGraph)\KL{C-\delta}{p_{11}}}.
    \end{align}
\end{restatable}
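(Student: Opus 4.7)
The plan is to reduce the misclassification event to a binomial tail probability via the threshold characterization \Cref{lem:geniethr}, and then invoke the method-of-types lower bound $\Pr[\mathrm{Bin}(n,p) = k] \geq (n+1)^{-1}\exp(-n\KL{k/n}{p})$ on an appropriately chosen single term. For $n$ large, the $\lfloor \eta \log n\rfloor$ individual tests in $\modGraph$ force $\good_i \geq 1$, so the edge case in \Cref{lem:geniethr} can be ignored. Conditional on $\pats$, every good test for $i$ has hypothetical result equal to $\pats(i)$, and by independence of the noise, $\good_i^+ \mid \pats \sim \mathrm{Bin}(\good_i, p_{11})$ if $\pats(i) = 1$ and $\good_i^+ \mid \pats \sim \mathrm{Bin}(\good_i, p_{01})$ if $\pats(i) = 0$.

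For $\pats(i) = 1$, \Cref{lem:geniethr} shows misclassification occurs iff $\good_i^+ \leq C\good_i + \tconst$. Setting $k^* = \lfloor C\good_i + \tconst\rfloor$ and using the method of types on the single term $k = k^*$ yields
\[
\ProbCond{\genie^{\modGraph}(i) \neq \pats(i)}{\pats} \geq \frac{1}{\good_i + 1}\exp\bc{-\good_i \KL{k^*/\good_i}{p_{11}}}.
\]
Since $\tconst$ is an $\Oh(1)$ constant and $\good_i \geq \lfloor \eta \log n\rfloor$, one has $k^*/\good_i = C + \Oh(1/\good_i)$, hence $k^*/\good_i \geq C - \delta/2$ for $n$ large. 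Combined with the monotonicity of $\KL{\cdot}{p_{11}}$ on $[0, p_{11}]$ (valid since $\delta < p_{11} - C$), this gives $\KL{k^*/\good_i}{p_{11}} \leq \KL{C - \delta/2}{p_{11}}$, and the strict gap $\KL{C-\delta/2}{p_{11}} < \KL{C-\delta}{p_{11}}$, together with $\good_i = \Omega(\log n)$, absorbs the polynomial prefactor.

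For $\pats(i) = 0$, misclassification corresponds to an upper tail of $\mathrm{Bin}(\good_i, p_{01})$, which a priori has nothing to do with the target $\KL{C-\delta}{p_{11}}$. The bridge is the defining property of $C$ proven in \Cref{lem:c-thr}, namely $\KL{C}{p_{01}} = \KL{C}{p_{11}} = \beta$. Since $\KL{C-\delta}{p_{11}} > \beta$ and $\KL{\cdot}{p_{01}}$ is continuous and strictly increasing on $[p_{01}, 1]$, we can pick $\delta' > 0$ satisfying $\KL{C + \delta'}{p_{01}} \leq \KL{C - \delta}{p_{11}}$; applying the method of types to $k^* = \lfloor C\good_i + \tconst\rfloor + 1$ (so that $k^*/\good_i \leq C + \delta'/2$ for $n$ large) then yields the claimed bound by the same absorption argument. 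The main obstacle, and the reason for this particular definition of $C$, is precisely this reconciliation: the single expression $\exp(-\good_i\KL{C-\delta}{p_{11}})$ has to dominate both the lower tail of $\mathrm{Bin}(\good_i, p_{11})$ and the upper tail of $\mathrm{Bin}(\good_i, p_{01})$ at the common threshold $C\good_i$, which is possible exactly because $C$ is the balancing point of the two KL divergences.
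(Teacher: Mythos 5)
Your proof is correct and follows essentially the same route as the paper: characterize misclassification via the threshold in \Cref{lem:geniethr}, split on $\pats(i)\in\{0,1\}$, lower-bound the resulting binomial tails by a large-deviations estimate, and use the slack between $\delta/2$ and $\delta$ together with $\good_i\geq\lfloor\eta\log n\rfloor$ to absorb constants and polynomial prefactors, with the identity $\KL{C}{p_{01}}=\KL{C}{p_{11}}=\beta$ bridging the two cases. The only cosmetic difference is that you invoke the non-asymptotic method-of-types single-point bound where the paper invokes Cramér's theorem, and you phrase the $\pats(i)=0$ case in terms of $\good_i^+$ against $p_{01}$ while the paper writes it equivalently via $\good_i^-$ against $p_{00}$.
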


All that remains is to find a distant set $D$ such that the sum of the lower bounds given by \cref{lem:genie_problow} over all $i \in D$ is large for most $\pats$.
The following lemma achieves this.
\begin{restatable}{lemma}{lematozero}\label{lem:atozero}
For any $\eps>0$, there exist $\delta$ and $\eta>0$ such that 
    for any $\delta'' > 0$, there is  a $n_0(\delta'')$ so that for all $n \geq n_0(\delta'')$, any test design $G$ on $n$ individuals using at most $(1-\eps)\mna$ tests, there is a distant set $D \subseteq [\nnew]$ so that
    \begin{align}
        \label{eq:atozero}
         \Prob{\sum_{i\in D} \exp\bc{-\KL{C-\delta}{ p_{11}}  \good_i(\pats, \modGraph)} > \frac{1}{2\delta''}} \geq 1 - 4\delta''.
    \end{align}
\end{restatable}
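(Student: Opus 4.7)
The plan is to construct a distant set $D \subseteq [\nnew]$ greedily and then bound $S := \sum_{i \in D} \xi_i$ from below, where $\xi_i := \exp(-\KL{C-\delta}{p_{11}} \good_i(\pats, \modGraph))$, by combining first- and second-moment computations. Throughout I write $\zeta := \KL{C-\delta}{p_{11}}$, $h(w) := 1 - (1 - e^{-\zeta})(1-\alpha)^{w-1}$, and $F(w) := -w \log h(w)$; the plan is to pick $\delta > 0$ small enough that $\zeta \in (0, \beta)$ and the hypothesis of \cref{lem:genie_problow} is satisfied, and $\eta \in (0, \eps/(2\zeta))$.

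For the first moment, FKG gives a clean lower bound. For each $a \ni i$ in $\modGraph$, the indicator $X_a := \teo{a \text{ is good for } i}$ is a non-increasing function of the i.i.d.\ Bernoullis $(\pats(j))_{j \neq i}$, so $\xi_i = \prod_{a \ni i}(1 - (1 - e^{-\zeta}) X_a)$ is a product of co-monotone functions in these Bernoullis, and FKG yields
\begin{align*}
    \Exp{\xi_i} \geq \prod_{a \ni i} \bc{1 - (1 - e^{-\zeta})(1-\alpha)^{|\partial a| - 1}} = \prod_{a \ni i} h(|\partial a|).
\end{align*}
Separating the $\lfloor \eta \log n \rfloor$ added individual tests per $i$ (each contributing the factor $h(1) = e^{-\zeta}$), applying AM-GM over $i \in [\nnew]$, and using $\log h \leq 0$ together with $|\partial a \cap [\nnew]| \leq |\partial a|$,
\begin{align*}
    \sum_{i \in [\nnew]} \Exp{\xi_i}
        \geq \nnew \cdot n^{-\zeta \eta} \exp\bc{-\frac{m \cdot (n \log n / \mna)}{\nnew}},
\end{align*}
where the exponent bound uses $\max_w F(w) \leq \max_w \bc{-w \log(1 - (1-\alpha)^{w-1}(1-e^{-\beta}))} = n \log n / \mna$, valid since $\zeta \leq \beta$ makes $F$ pointwise smaller than its $\beta$-counterpart. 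Plugging in $m \leq (1-\eps)\mna$ and $\nnew = (1-o(1))n$ yields $\sum_{i \in [\nnew]} \Exp{\xi_i} \geq n^{\eps - \zeta \eta - o(1)} \to \infty$ for our choice of $\eta$.

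To build $D$, I would iteratively select the $i$ maximizing $\Exp{\xi_i}$ among the remaining individuals, and remove $i$ together with its entire bipartite-graph $4$-neighborhood in $\modGraph$. The degree bounds in $\modGraph$ ($|\partial a| \leq \log^2 n$, $|\partial i| \leq \log^4 n$) ensure each step deletes at most $\log^{12} n$ individuals, so by the greedy-max rule $\Exp S = \sum_{i \in D} \Exp{\xi_i} \geq \sum_i \Exp{\xi_i} / \log^{12} n \to \infty$. For distinct $i, j \in D$, their 2-neighborhoods in $\modGraph$ are disjoint (distance $>4$), so $\good_i$ and $\good_j$, and hence $\xi_i$ and $\xi_j$, are independent; together with $\xi_i \in [0, 1]$ this gives $\Var{S} \leq \Exp{S}$. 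Chebyshev's inequality then yields $\Prob{S < \Exp{S}/2} \leq 4/\Exp{S}$, and for $n \geq n_0(\delta'')$ large enough that $\Exp{S} \geq 1/\delta''$ we conclude $\Prob{S > 1/(2\delta'')} \geq 1 - 4\delta''$.

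The main obstacle is the first-moment computation: threading FKG, AM-GM, and the $\min_\Gamma$-optimization in $\mna$ together to deliver an exponent strictly below $\log n$ by an $\eps$-margin. This is the step where the test-budget hypothesis $m < (1-\eps)\mna$ is consumed, and it drives the small choices of $\delta$ and $\eta$; a minor technical point is verifying that the greedy-max selection of $D$ retains at least a $\log^{-12} n$ fraction of the total first moment, which follows from the local maximality at every iteration.
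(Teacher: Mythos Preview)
Your overall architecture---FKG for the first moment, greedy selection of a distant set, then Chebyshev using independence and $\xi_i\in[0,1]$---matches the paper's proof. However, there is a genuine sign error at the heart of your first-moment bound.

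You assert that $\zeta=\KL{C-\delta}{p_{11}}\in(0,\beta)$ and deduce $\max_w F(w)\le n\log n/\mna$. In fact, since $C<p_{11}$ and $\delta>0$, the point $C-\delta$ is \emph{further} from $p_{11}$ than $C$ is, so $\zeta=\KL{C-\delta}{p_{11}}>\KL{C}{p_{11}}=\beta$. Consequently $1-e^{-\zeta}>1-e^{-\beta}$, hence $h(w)<1-(1-\alpha)^{w-1}(1-e^{-\beta})$ and $F(w)=-w\log h(w)$ is pointwise \emph{larger} than its $\beta$-counterpart. Your claimed inequality $\max_w F(w)\le n\log n/\mna$ therefore goes the wrong way, and the displayed bound on $\sum_i\Exp{\xi_i}$ does not follow.

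The paper handles this by working with the correct (larger) optimum: it sets $\cna'=\min_t\bigl(-t\log(1-(1-\alpha)^{t-1}(1-e^{-\zeta}))\bigr)^{-1}<\cna$, obtains an exponent $-(1+\delta)\bigl((1-\eps)\cna/\cna'+\eta/\cna'\bigr)$, and then proves (Claim~\ref{claim:cna'}) that $\cna'\uparrow\cna$ as $\delta\downarrow 0$. This continuity argument is what allows one to choose $\delta,\eta$ small enough that the exponent stays above $-1$; it is precisely the missing ingredient in your proposal. Once you replace the (false) inequality $\zeta\le\beta$ by this limiting argument, the rest of your outline goes through as written.
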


The proof in \cref{apx:lem:atozero}
    exploits the fact that for any distant set $D$,
        the sum in question
        is a sum of $\abs{D}$ independent and bounded random variables.
So it is well-concentrated, and it suffices to bound it from below in expectation.
That bound in turn hinges on the fact that the events of tests being good for individuals are monotonic in $\pats$ which enables the application of the FKG inequality.
Using the fact that a test $a$ is good for an individual with probability $(1-\alpha)^{\abs{\partial a}-1}$,
    some further calculation shows that,
        writing $b = \KL{C-\delta}{p_{11}}$,
    \[\log \bc{\Exp{\exp(-b\good_i(\pats,\modGraph))}} \geq \sum_{a \in \partial i}\log(1-(1-\alpha)^{\abs{\partial a}-1}(1-e^{-b})).\]
When choosing the distant set $D$ greedily to maximize the right-hand side,
    the average of this over $D$ is at worst just slightly below the average over \emph{all} individuals.
This average, after exchanging the order of summation and grouping terms, is
\[\frac{1}{n} \sum_{i \in [n]} \sum_{a \in \partial i} \log(1-(1-\alpha)^{\abs{\partial a}-1}(1-e^{-b}))
    = \frac{1}{n} \sum_{a \in F(\modGraph)} \abs{\partial a} \log(1-(1-\alpha)^{\abs{\partial a}-1}(1-e^{-b})).\]
For $\abs{\partial a} = \Gamma$,
    and $b \stackrel{\delta \to 0}{\to} \beta$,
    this is the value optimized over in the definition of the threshold $\mna$.
    
Now we plug everything together to conclude the proof of \cref{thm:nalower}.

\begin{proof}[Proof of \cref{thm:nalower}]
    For sufficient large $n$, the $o(1)$ term in \cref{lem:modifiedsetup} is at most $\eps/2$. It now suffices to show that the probability of the genie estimator being correct on $\modGraph$ is at most $\eps/2$ for sufficiently large $n$.
    To bound the probability of the genie estimator being correct on $[\nnew]$,
        combining \cref{lem:genie_correct_upper_bound,lem:genie_problow},
        for any set of distant individuals $D \subseteq [\nnew]$,
    \begin{align}
    \ProbCond{\pats[\nnew] = \genie^{\modGraph}[\nnew]}{\pats}
        \leq \biggl(\sum_{i \in D} \exp\bc{-\good_i(\pats, \modGraph) \KL{C-\delta}{p_{11}}}\biggr)^{-1}.\end{align}
    Now write $\bm U_D$ for this upper bound.
    By \cref{lem:atozero}, for any $\delta''$ and sufficiently small $\delta''$,
        we can choose $D$ so that $\Prob{\bm U_D \leq 2 \delta''} \geq 1 - 4 \delta''$.
    And as trivially $\SmallProbCond{\pats[\nnew] = \genie^{\modGraph}[\nnew]}{\pats} \leq 1$,
        we have
    \begin{align}
    \Prob{\pats[\nnew] = \genie^{\modGraph}[\nnew]}
       &= \Exp{\ProbCond{\pats[\nnew] = \genie^{\modGraph}[\nnew]}{\pats}}
        \leq \Exp{\min\cbc{1, \bm U_D}}
    \\ &\leq \ExpCond{\min\cbc{1, \bm U_D}}{\bm U_D \leq 2 \delta''} + \Prob{\bm U_D > 2\delta''}
        = 2\delta'' + 4\delta'',
    \end{align}
        which yields the theorem with $\delta'' = \eps/12$.
\end{proof}

\subsection{The algorithm $\SPOG$: proof strategy for \Cref{thm:naalg}}
\label{sec:nonadative_ub}

\newcommand{\patsSPG}{\hat{\pats}^{(1)}}
\newcommand{\patsNA}{\hat{\pats}_{\SPOG}}
\newcommand{\patSPG}[1]{\hat{\pats}^{(1)}(#1)}
\newcommand{\patNA}[1]{\hat{\pats}_{\SPOG}(#1)}
\newcommand{\TestSet}[1]{F_{#1}}
\newcommand{\NAIdvTests}{\TestSet{1}}
\newcommand{\NAGroupTests}{F_{2,grp}}
\newcommand{\NAAddIdvTests}{F_{2,idv}}

We now introduce our non-adaptive test design as well as the corresponding estimation algorithm and prove that it satisfies \cref{thm:naalg}.
Conceptually, the algorithm performs classification in two stages---however, all the tests are determined in advance, so it is indeed a non-adaptive algorithm.

In the first stage, $\ceil{\eta \log(n)}$ individual tests per individual (for a $\eta > 0$ we determine later) are used for pre-classification:
    $\patsSPG$ classifies an individual as infected if at least a $C$ proportion of these individual tests display positively,
    where $C$ is the threshold defined in \cref{sec:results}.
This then acts as a synthetic ``pseudo-genie'' in the second stage:
    it essentially behaves similarly to a genie estimator, with the always-correct genie being replaced by the mostly-correct ``pseudo-genie'' created in the first stage.

The second stage's tests are almost entirely group tests all having the same size $\Gamma$,
    with some extra individual tests to guarantee that each individual participates in sufficiently many tests.
The classification is not based on \emph{all} second-stage tests containing a given individual, but rather just a subset:
First, the algorithm constructs a set of tests $D_i$
    such that any two tests in $D_i$ only intersect in $i$.
We call such sets \emph{distinctive}.
These sets have the useful property their observed results are independent when conditioning on $\pats(i)$.
Then, $\SPOG$ only considers tests in $D_i$ which are good under the pseudo-genie $\patsSPG$;
    this is the set of \emph{pseudo-good} tests~$P_i$.

To formally describe the design $\vG_{\SPOG}$,
    let $\xi(\alpha, \vp, \Gamma) = \bc{-\log\bc{1-(1-\alpha)^{\Gamma - 1} \cdot (1-\eul^{-\beta(\vp)})}}^{-1}$.
This ensures that an individual participating in $\xi \log(n)$ tests of size $\Gamma$
    is misclassified by a genie estimator probability in $\Oh(n^{-1})$.

\begin{definition}[Test design for \SPOG]\label{def:g_spog}
    For $\hat{\alpha} \in (0,1)$, $\Gamma \in \NN^+$ with $\Gamma \leq \ceil{\hat{\alpha}^{-1}}$, a valid noisy channel $\vp$, and $\eta, \varepsilon > 0$,
        we let $\vG_{\SPOG}(n, \hat{\alpha}, \vp, \Gamma, \eta, \varepsilon)$ be the test design having $n$ individuals and with its test set $F$ being the union of\begin{itemize}
        \item the set $\NAIdvTests$ of $\ceil{\eta \log n}$ individual tests $F_1(i)$ per individual $i \in [n]$,
        \item the set $\NAGroupTests$ of $\ceil{(1+\varepsilon/3)\xi(\alpha, \vp, \Gamma) \cdot n \log n/\Gamma}$ group tests chosen i.u.r.\ among tests of size $\Gamma$, and
        \item the set $\NAAddIdvTests$ of $\max\cbc{0, \bc{1+\frac{\varepsilon}{6}} \xi \log n + 1 - \abs{\NAGroupTests \cap \partial i}}$ individual tests for each $i \in [n]$.
    \end{itemize}
    Furthermore, we write $\TestSet{2} = \NAGroupTests \cup \NAAddIdvTests$.
\end{definition}

\hspace{-7mm}
\begin{minipage}{0.55\linewidth}
\begin{algorithm}[H]
 	\KwData{Instance $G$ of $\vG_{\SPOG}$, $\otests_G \in \{0, 1\}^n$}

    \For{$i \in [n]$}{
        $\patSPG{i} \gets \vecone\brk{\abs{\NAIdvTests^+(i)} \geq C \cdot \abs{\NAIdvTests(i)}}$
    }

    \For{$i \in [n]$}{
        $S \gets \emptyset$, $D_i \gets \emptyset$
        
        \For{$a \in \TestSet{2} \cap \partial i$}{
            \If{$\partial a \setminus \cbc{i} \cap S \neq \emptyset$}{
                $D_i \gets D_i \cup \{a\}$,
                $S \gets S \cup (\partial a \setminus \{i\})$
            }
        }

        $P_i \gets \{a \in D_i \mid \forall j \in \partial a \setminus \cbc{i}: \patSPG{i} = 0\}$

        $\patNA{i} \gets \vecone\brk{\abs{a \in P_i \cap \partial i \mid \otests(a) = 1} \geq C \cdot \abs{P_i}}$
    }

    \Return $\patsNA$

    \caption{$\SPOG$}
    \label{algo:nonad}
\end{algorithm}    
\end{minipage}
\hspace{4mm}
\begin{minipage}{0.4\linewidth}
\vspace{-5mm}
    \begin{figure}[H]
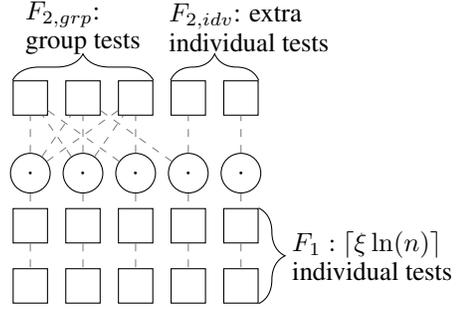

        \centering        
        \napic
        \caption{Illustration of $\vG_{\SPOG}$. Circles represent individuals and squares tests.}
    \end{figure}
\end{minipage}

\medskip

Our analysis makes frequent use of the following bounds on the value of $\xi$ proven in \cref{apx:nonad:xi_calc}.
Part \emph{(b)} justifies the restriction $\Gamma \leq \ceil{\alpha^{-1}}$ in the test design,
    as it implies that the test degree minimizing the total number of tests satisfies said restriction.

\begin{restatable}{lemma}{xicalc}
    \label{obs:naalg_xi}
    Let $\alpha \in (0, 1)$, and $\vp$ a valid noisy channel.
    \begin{enumerate}[label=(\alph*)]
    \item For all $\Gamma \leq \ceil{\alpha^{-1}}$, $\xi(\alpha, \vp, \Gamma) \in [\beta^{-1}, \frac{\eul}{(1-\alpha)(1-\eul^{-\beta})}]$,
        and $\xi(\alpha, \vp, \Gamma)$ is increasing in $\alpha$.
    \item Let $\hat{\Gamma} \in \argmax_{\Gamma \in \NN^+} -\Gamma \log\bc{1 - (1-\alpha)^{\Gamma - 1} \bc{1 - \eul^{-\beta(\vp)}}}$.
    Then $\hat{\Gamma} \leq \ceil{\alpha^{-1}}$.
    \end{enumerate}
\end{restatable}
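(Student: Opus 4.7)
My plan is to analyze, on $\Gamma \in [1, \infty)$, the single-variable function $f(\Gamma) = -\log(1 - (1-\alpha)^{\Gamma-1}(1 - \eul^{-\beta}))$ (so that $\xi(\alpha, \vp, \Gamma) = 1/f(\Gamma)$), together with $g(\Gamma) = \Gamma f(\Gamma)$, whose integer maximum is $\hat\Gamma$ in part~\emph{(b)}. For brevity I write $c = 1 - \eul^{-\beta} \in (0, 1)$ and $y(\Gamma) = (1-\alpha)^{\Gamma - 1}$. The two elementary inequalities I lean on throughout are
\[
x \;\leq\; -\log(1-x) \;\leq\; \frac{x}{1-x} \qquad \text{for } x \in [0, 1),
\]
specialised both to $x = cy$ and to $x = \alpha$; the latter gives $\alpha \leq -\log(1-\alpha) \leq \alpha/(1-\alpha)$.

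For part~\emph{(a)}, the lower bound $\xi \geq \beta^{-1}$ follows at once from $y \leq 1$: this gives $1 - cy \geq \eul^{-\beta}$ and hence $f(\Gamma) \leq \beta$. For the upper bound, $-\log(1-x) \geq x$ yields $f(\Gamma) \geq cy$, so it suffices to lower-bound $y(\Gamma)$ by $(1-\alpha)/\eul$. For $\Gamma \leq 2$ this is trivial since $y(\Gamma) \geq 1-\alpha \geq (1-\alpha)/\eul$. For $\Gamma \geq 3$, the constraint $\Gamma \leq \lceil \alpha^{-1}\rceil \leq \alpha^{-1} + 1$ gives $\Gamma - 2 \leq (1-\alpha)/\alpha$, which combined with $-\log(1-\alpha) \leq \alpha/(1-\alpha)$ delivers $(\Gamma - 2)(-\log(1-\alpha)) \leq 1$, i.e.\ $y(\Gamma) \geq (1-\alpha)/\eul$. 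In either case $f(\Gamma) \geq (1-\alpha)(1-\eul^{-\beta})/\eul$, which is the claimed upper bound on $\xi$. Monotonicity in $\alpha$ is immediate: $\beta$ depends only on $\vp$, while $(1-\alpha)^{\Gamma-1}$ strictly decreases in $\alpha$ for any fixed $\Gamma \geq 1$, so $f$ strictly decreases and $\xi = 1/f$ strictly increases in $\alpha$.

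For part~\emph{(b)}, I treat $g$ as a differentiable function on $[1, \infty)$ satisfying $g(1) = \beta > 0$ and $g(\Gamma) \to 0$ as $\Gamma \to \infty$ (since $-\log(1 - cy) \sim cy$ and $\Gamma(1-\alpha)^{\Gamma-1} \to 0$). Differentiating and using $y'(\Gamma) = y(\Gamma)\log(1-\alpha)$, the condition $g'(\Gamma^*) = 0$ rearranges to
\[
\Gamma^* \;=\; \frac{(1 - cy^*)\,\bigl[-\log(1-cy^*)\bigr]}{cy^* \cdot \bigl[-\log(1-\alpha)\bigr]}, \qquad y^* := (1-\alpha)^{\Gamma^* - 1}.
\]
Applying $-\log(1-cy^*) \leq cy^*/(1-cy^*)$ in the numerator and $-\log(1-\alpha) \geq \alpha$ in the denominator yields $\Gamma^* \leq 1/\alpha$. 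Consequently $g$ has no critical point in $(1/\alpha, \infty)$, and since $g > 0$ on this interval while $g \to 0$ at infinity, the continuous derivative $g'$ must be strictly negative throughout $(1/\alpha, \infty)$; by the mean value theorem $g$ is then strictly decreasing on $[1/\alpha, \infty)$.

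The conclusion of part~\emph{(b)} now falls out: if $\hat\Gamma \geq \lceil \alpha^{-1} \rceil + 1$, then $\hat\Gamma - 1 \geq \lceil \alpha^{-1} \rceil \geq 1/\alpha$, and strict monotonicity of $g$ on $[1/\alpha, \infty)$ forces $g(\hat\Gamma - 1) > g(\hat\Gamma)$, contradicting integer optimality. The step I expect to demand the most care is the critical-point bound $\Gamma^* \leq 1/\alpha$, since it is the only place where the two nontrivial inequalities for $-\log(1-x)$ must be applied in opposite directions simultaneously; the edge case $1/\alpha \in \NN$ is absorbed cleanly by the fact that the strict monotonicity conclusion holds on the \emph{closed} half-line $[1/\alpha, \infty)$.
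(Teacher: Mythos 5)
Your proof is correct. For part \emph{(a)} the route is essentially the paper's: both establish the lower bound from $(1-\alpha)^{\Gamma-1}\leq 1$ and the upper bound by showing $(1-\alpha)^{\Gamma-1}\geq(1-\alpha)/\eul$ via the same elementary inequality $-\log(1-\alpha)\leq\alpha/(1-\alpha)$; the paper does it in one stroke by using $\Gamma-1\leq\alpha^{-1}$ and $(1-x)^{1/x}\geq(1-x)/\eul$, where you split off $\Gamma\leq 2$ because your bound $\ceil{\alpha^{-1}}\leq\alpha^{-1}+1$ is marginally coarser, but the substance is identical. (One cosmetic slip: $(1-\alpha)^{\Gamma-1}$ is constant, not strictly decreasing, when $\Gamma=1$; this does not affect the claimed weak monotonicity.)

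For part \emph{(b)} you take a genuinely different route. The paper substitutes $z=(1-\alpha)^\Gamma$ and studies $f_c(z)=\log(z)\log(1-cz)$, showing that $\partial f_c/\partial z$ vanishes identically at $c=0$ and that the mixed partial $\partial^2 f_c/(\partial z\,\partial c)$ is positive on $(0,1/\eul)\times[0,\infty)$; this forces the function to be increasing in $z$ on $(0,1/\eul)$ for every $c>0$, so any discrete maximizer must either sit at $z=1-\alpha$ (i.e.\ $\Gamma=1$) or have $z\geq 1/\eul$, both of which force $\Gamma\leq\ceil{\alpha^{-1}}$. You instead work directly in the $\Gamma$ variable: you solve $g'(\Gamma^*)=0$ explicitly, obtain a closed-form expression for $\Gamma^*$, and bound it by $1/\alpha$ using the two one-sided log estimates in opposite directions, then invoke the absence of further critical points together with $g\to 0$ to get strict decrease on $[1/\alpha,\infty)$. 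The paper's comparison-at-$c=0$ trick sidesteps any explicit critical-point formula and makes the sign argument short; your approach is more hands-on and makes the source of the bound $\Gamma^*\leq 1/\alpha$ transparent, at the price of the extra limiting-behavior step to propagate the absence of critical points into monotonicity. Both arguments are complete and correct.
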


The following lemma bounds the number of tests in $G_{\SPOG}$.
The proof in \cref{apx:lem:naalg_test_count} is a standard calculation;
    its crux is that the probability of adding extra individual tests (i.e., tests in $\NAAddIdvTests$) for a given $i$ is polynomially small, so that even if $\Theta(\log(n))$ tests were added for any such individual, $\abs{\NAAddIdvTests}$ is negligible.

\begin{restatable}{lemma}{lemnaalgtestcount}\label{lem:naalg_test_count}
    $\vG_{\SPOG}(\hat{\alpha}, \vp, \Gamma, \eta, \eps)$ uses at most $\bc{\eta + \bc{1+\frac{\varepsilon}{3}}\frac{\xi(\alpha, \vp, \Gamma)}{\Gamma}}n\log n + \Oh(n)$ tests \whp.
\end{restatable}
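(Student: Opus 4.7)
The plan is to bound the three sets $\NAIdvTests$, $\NAGroupTests$, and $\NAAddIdvTests$ comprising $\vG_{\SPOG}$ separately and then sum. The first two bounds are deterministic and follow immediately from the definitions: $|\NAIdvTests| = n \lceil \eta \log n \rceil \leq \eta n \log n + n$ and $|\NAGroupTests| \leq (1+\eps/3) \xi n \log n / \Gamma + 1$ together supply the main term $\bc{\eta + (1+\eps/3)\xi / \Gamma} n \log n + \Oh(n)$ of the claimed bound. The real work is therefore to show that the random quantity $|\NAAddIdvTests|$ fits within the $\Oh(n)$ slack \whp.

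I will attack this via $\Exp{|\NAAddIdvTests|}$ together with Markov's inequality. For each $i \in [n]$, the contribution of $i$ to $\NAAddIdvTests$ is at most $(1+\eps/6)\xi \log n + 1$ and is nonzero only when $|\NAGroupTests \cap \partial i| < (1+\eps/6)\xi \log n + 1$. Since the tests in $\NAGroupTests$ are chosen independently as uniformly random $\Gamma$-subsets of $[n]$, the count $|\NAGroupTests \cap \partial i|$ is $\mathrm{Bin}(|\NAGroupTests|, \Gamma/n)$-distributed, with mean $\mu = (1+\eps/3)\xi \log n + o(1)$. A multiplicative Chernoff bound with relative deviation $\delta = (\eps/6)/(1+\eps/3)$, chosen so that $(1-\delta)\mu = (1+\eps/6)\xi \log n + o(1)$, then bounds $\Prob{|\NAGroupTests \cap \partial i| < (1+\eps/6)\xi \log n + 1}$ by $\exp(-\delta^2 \mu /2) = n^{-c}$ for a positive constant $c = c(\eps, \xi)$ independent of $i$.

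Substituting back gives $\Exp{|\NAAddIdvTests|} \leq n \cdot \Oh(\log n) \cdot n^{-c} = o(n)$, and Markov's inequality promotes this in-expectation bound to the pointwise bound $|\NAAddIdvTests| = o(n) \leq \Oh(n)$ \whp. Summing the three contributions yields the lemma. I anticipate no substantive obstacle: the only genuinely non-trivial ingredient is the Chernoff estimate, and everything else reduces to routine accounting. I note in passing that, should the exponent $c$ happen to exceed $1$, a direct union bound over $i$ would in fact force $|\NAAddIdvTests| = 0$ \whp; however, routing through Markov is agnostic to the size of $c$ and therefore handles every admissible parameter regime from \cref{obs:naalg_xi} uniformly.
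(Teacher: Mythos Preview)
Your proposal is correct and follows essentially the same route as the paper's proof: deterministic bounds on $|\NAIdvTests|$ and $|\NAGroupTests|$, then a Chernoff bound on the binomial $|\NAGroupTests \cap \partial i| \sim \bin{|\NAGroupTests|, \Gamma/n}$ to get a polynomially small probability that individual $i$ needs extra tests, followed by summing over $i$ and applying Markov's inequality to conclude $|\NAAddIdvTests| = o(n)$ \whp. The only cosmetic difference is that the paper uses the additive Chernoff form (\cref{thm:chernoff_t}) while you use the multiplicative one, which is immaterial here.
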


We now turn to the analysis of $\SPOG$'s correctness.
First, we see that the classification of the pseudo-genie $\patsSPG$ approximates $\pats$ well, in the sense that for each fixed individual $i$,
    it correctly identifies $i$'s status (although of course, this does not apply for all individuals at the same time).
The following lemma, proven in \cref{apx:lem:naalg_peudogenie}, collects this fact as well as some other useful properties of $\patsSPG$.
We note that the value $\beta$ stems from the choice of $C$ as threshold.

\begin{restatable}{lemma}{naalgpseudogenie}\label{lem:naalg_pseudogenie}
    For all $i \in [n]$: $\Prob{\patSPG{i} \neq \pats(i)} \leq n^{-\beta \eta}$, and these events are independent.
    Furthermore, the upper bound on the probability also holds when conditioning on $\pats(i) = 1$ or $\pats(i) = 0$.
\end{restatable}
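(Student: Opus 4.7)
The plan is to exploit three structural facts about the pseudo-genie classification: (i) $\patSPG{i}$ is a deterministic function of $\pats(i)$ and of the channel noise on the $\ceil{\eta \log n}$ individual tests in $\NAIdvTests(i)$; (ii) these inputs are mutually independent across individuals $i$, since the sets $\NAIdvTests(i)$ are pairwise disjoint and both the infection statuses and the channel noise are independent; and (iii) the choice $C \in (p_{01}, p_{11})$ that attains the max--min in the definition of $\beta$ makes misclassification a classical binomial large-deviations event.

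Independence will follow immediately from (i) and (ii): the collection of pairs $\cbc{(\pats(i), (\otests(a))_{a \in \NAIdvTests(i)})}_{i \in [n]}$ is mutually independent, and $\patSPG{i}$ is measurable with respect to the $i$-th coordinate. Hence the events $\cbc{\patSPG{i} \neq \pats(i)}$ are independent across $i$ as claimed.

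For the probability bound, write $N = \ceil{\eta \log n}$. Conditional on $\pats(i) = s \in \cbc{0, 1}$, the count $\abs{\NAIdvTests^+(i)}$ is distributed as $\operatorname{Bin}(N, p_{s1})$, since the noise on distinct individual tests is i.i.d. Misclassification occurs either when $s = 1$ and $\abs{\NAIdvTests^+(i)} < C N$, or when $s = 0$ and $\abs{\NAIdvTests^+(i)} \geq C N$. Since $p_{01} < C < p_{11}$ (as established in \Cref{lem:c-thr}), both are tail events, and Cram\'er's theorem (the standard Chernoff bound for binomials) yields
\[
\ProbCond{\abs{\NAIdvTests^+(i)} < CN}{\pats(i) = 1} \leq \exp\bc{-N \KL{C}{p_{11}}}, \qquad \ProbCond{\abs{\NAIdvTests^+(i)} \geq CN}{\pats(i) = 0} \leq \exp\bc{-N \KL{C}{p_{01}}}.
\]
By the definition of $\beta$ and $C$, both exponents equal $N \beta \geq \eta \beta \log n$, so each conditional probability is at most $n^{-\eta \beta}$. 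Averaging over $\pats(i)$ yields the unconditional bound as well.

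I do not anticipate any real obstacle; the argument is a clean Chernoff bound combined with an independence observation. The only mild subtlety is justifying $p_{01} < C < p_{11}$ so that both Chernoff bounds are non-trivial, which is handled by the cited \Cref{lem:c-thr}: the maximum in the definition of $\beta$ is attained in the interior of $[p_{01}, p_{11}]$ for any valid noisy channel, because both endpoints would make one of the two KL divergences vanish whereas $\beta > 0$.
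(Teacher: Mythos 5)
Your proposal is correct and follows essentially the same approach as the paper: establish independence via disjointness of the individual-test pools, identify $\abs{\NAIdvTests^+(i)}$ conditioned on $\pats(i)$ as $\bin{N,p_{s1}}$, and apply the Kullback--Leibler-form Chernoff bound on both tails using $\KL{C}{p_{01}} = \KL{C}{p_{11}} = \beta$. The only cosmetic wrinkle is your invocation of ``Cram\'er's theorem'' for a non-asymptotic bound, which you immediately and correctly clarify as the Chernoff bound (the paper cites its Theorem B.1).
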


Due to the restriction $\Gamma \leq \ceil{\hat{\alpha}^{-1}}$,
    it is just as unlikely that a pseudo-good test contains \emph{any} individual misclassified by $\patsSPG$,
        i.e., that a pseudo-good test is \emph{not} good.
The proof of this bound in \cref{apx:lem:naalg_test_tainted} additionally uses \Cref{lem:naalg_pseudogenie} as well as Bayes' theorem.

\begin{restatable}{lemma}{lemnaalgtesttainted}\label{lem:naalg_test_tainted}
    For any $i \in [n]$, the probability of a test $a \in P_i$ containing a misclassified individual (other than $i$) is in $\Oh(n^{-\beta \eta})$.
    To be precise: for all $i \in n$ and $a \in \partial i$,
        \[\ProbCond{\bigvee_{j \in \partial a \setminus \cbc{i}} \patSPG{j} \neq \pats(j) }{ a \in P_i} = \Oh(n^{-\beta\eta}).\]
\end{restatable}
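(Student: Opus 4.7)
The plan is to reduce the statement to a per-individual bound and then apply Bayes' rule together with \Cref{lem:naalg_pseudogenie}. First, observe that the event $a \in P_i$ depends on the test design (deterministically through $a \in D_i$) and on $\patsSPG$ only through the values $\patSPG{j}$ for $j \in \partial a \setminus \cbc{i}$ (namely, they must all equal $0$). Moreover, the crucial structural fact is that since $\patSPG{j}$ is computed purely from the individual tests $F_1(j)$ and the noise on those tests, the pairs $(\pats(j), \patSPG{j})_{j \in \partial a \setminus \cbc{i}}$ are \emph{mutually independent}, because the individual tests $F_1(j)$ and $F_1(j')$ for $j \neq j'$ are disjoint and the channel noise acts independently.

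If $a$ is an individual test, then $\partial a \setminus \cbc{i} = \emptyset$ and the probability to bound is zero, so we restrict to $a \in \NAGroupTests \cap \partial i$, for which $\abs{\partial a} = \Gamma$ is a constant depending only on $\alpha$ and $\vp$. By the union bound and the independence just noted,
\begin{align*}
\ProbCond{\bigvee_{j \in \partial a \setminus \cbc{i}} \patSPG{j} \neq \pats(j)}{a \in P_i}
  \leq \sum_{j \in \partial a \setminus \cbc{i}} \ProbCond{\pats(j) = 1}{\patSPG{j} = 0},
\end{align*}
since $a \in P_i$ fixes $\patSPG{j} = 0$ for each such $j$ and provides no additional information about $(\pats(j), \patSPG{j})$ beyond this. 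By Bayes' theorem,
\begin{align*}
\ProbCond{\pats(j) = 1}{\patSPG{j} = 0}
  = \frac{\ProbCond{\patSPG{j} = 0}{\pats(j) = 1} \cdot \alpha}{\SmallProb{\patSPG{j} = 0}}.
\end{align*}
The numerator is at most $\alpha \cdot n^{-\beta\eta}$ by \Cref{lem:naalg_pseudogenie}, and the denominator satisfies $\SmallProb{\patSPG{j} = 0} \geq (1-\alpha)(1 - n^{-\beta\eta})$, again by \Cref{lem:naalg_pseudogenie} applied conditionally on $\pats(j) = 0$. Hence each summand is $\Oh(n^{-\beta\eta})$, and summing over at most $\Gamma - 1 = \Oh(1)$ values of $j$ preserves this rate.

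There is no serious obstacle here: the argument is essentially bookkeeping on top of \Cref{lem:naalg_pseudogenie}. The only subtlety worth double-checking is that conditioning on $a \in P_i$ does not secretly couple the random variables across $j \in \partial a \setminus \cbc{i}$; this holds because $a \in D_i$ is a deterministic property of the fixed test design $\vG_{\SPOG}$, and the additional conditioning $\patSPG{j} = 0$ factorises across $j$ thanks to the disjointness of the individual-test blocks $F_1(j)$.
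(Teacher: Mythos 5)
Your argument follows the same route as the paper's: condition on $a \in P_i$ (equivalently, on $\patSPG{j}=0$ for all $j\in\partial a\setminus\{i\}$), apply Bayes' theorem with \Cref{lem:naalg_pseudogenie}, and take a union bound over $\partial a \setminus\{i\}$. The additional care you take in justifying the conditioning (the factorisation across $j$ coming from the disjoint blocks $F_1(j)$) is correct and makes explicit something the paper leaves tacit.

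However, there is a small but genuine gap in your final step. You conclude by asserting that each summand is $\Oh(n^{-\beta\eta})$ and that ``$\Gamma - 1 = \Oh(1)$'', treating $\Gamma$ as a constant because $\alpha$ is. That is fine for the use in \Cref{thm:naalg}, but the lemma is also invoked (via \Cref{lem:naalg:misclassify_probcond_abs_Pi} and \Cref{lem:naalg_misclassify}) in the proof of \Cref{prop:sublin}, where $\hat\alpha = n^{-\hat\theta}$ and $\Gamma = \lceil\log n\rceil$, so $\Gamma$ is \emph{not} $\Oh(1)$. The paper's proof handles this by not dropping the $\alpha$ factor from the Bayes bound: the per-individual bound is $\frac{\alpha}{1-\alpha}\cdot\frac{n^{-\beta\eta}}{1-n^{-\beta\eta}}$, and after summing over $\Gamma - 1$ terms one uses that $\Gamma \le \lceil\hat\alpha^{-1}\rceil \le \lceil\alpha^{-1}\rceil$, hence $\Gamma\alpha \le 1 + \alpha = \Oh(1)$. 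You actually have the $\alpha$ factor available in your Bayes computation, so the fix is just to carry it through to the union bound and appeal to $\Gamma\alpha = \Oh(1)$ rather than to $\Gamma = \Oh(1)$.
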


This entails that the probability of an individual $i$ being misclassified by $\SPOG$
    is close to the probability of it being misclassified by the genie estimator if all pseudo-good tests were good.
The proof of this lemma, found in \cref{apx:lem:naalg:misclassify_probcond_abs_Pi},
    additionally uses the independence of observed test results $P_i$ conditioned on $\sigma(i)$ (as they are distinctive by construction).

\begin{restatable}{lemma}{lemnaalgmisclassifyprobcondabsPi}\label{lem:naalg:misclassify_probcond_abs_Pi}
    For any individual $i \in [n]$, \(\ProbCond{\patNA{i} \neq \pats(i)}{\abs{P_i}} \leq \exp\bc{-\abs{P_i}\bc{\beta - \Oh(n^{-\beta\eta})}}.\)
\end{restatable}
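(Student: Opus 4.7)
The plan is to condition on both $\pats(i)$ and the entire realized set $P_i$ (not just $\abs{P_i}$), reduce the classification event to a Chernoff-type tail bound for a sum of conditionally independent Bernoulli variables, and then take expectations over the conditioning. The bound obtained will be uniform in $P_i$ and $\pats(i)$, which yields the statement after averaging out these variables while holding $\abs{P_i}$ fixed.

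The key structural fact is that $D_i$, and hence any realization $P \subseteq D_i$, is \emph{distinctive}: the neighborhoods $\partial a \setminus \cbc{i}$ are pairwise disjoint across $a \in D_i$. Consequently, the randomness relevant to $\otests(a)$ and to $\cbc{a \in P_i}$ partitions into independent blocks indexed by $a \in D_i$, where block $a$ consists of the infection statuses $\pats(j)$ for $j \in \partial a \setminus \cbc{i}$, the noise on the individual tests $F_1(j)$ for these $j$ (which determines the pseudo-genie values $\patSPG{j}$), and the noise on the test $a$ itself. The event $\cbc{P_i = P}$ is a conjunction of single-block events, so conditioning on it together with $\pats(i)$ preserves the independence of $(\otests(a))_{a \in P}$.

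Set $q_a = \ProbCond{\otests(a) = 1}{\pats(i), P_i = P}$. If $\pats(i) = 1$, then $a$ contains an infected individual, so $\atests(a) = 1$ and $q_a = p_{11}$. If $\pats(i) = 0$, letting $T_a$ indicate that some $j \in \partial a \setminus \cbc{i}$ is infected, we have $q_a = p_{01} + \ProbCond{T_a = 1}{a \in P_i}(p_{11} - p_{01})$; since $T_a$ depends only on individuals $j \neq i$, the conditioning on $\pats(i)$ is irrelevant, and \cref{lem:naalg_test_tainted} gives $\ProbCond{T_a = 1}{a \in P_i} = \Oh(n^{-\beta\eta})$, so $q_a = p_{01} + \Oh(n^{-\beta\eta})$. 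The error event is $\cbc{\abs{P_i^+} < C\abs{P_i}}$ when $\pats(i) = 1$ and $\cbc{\abs{P_i^+} \geq C\abs{P_i}}$ when $\pats(i) = 0$. The Chernoff--Hoeffding bound for sums of independent (not necessarily identically distributed) Bernoullis then gives, in either tail direction, an upper bound of $\exp\bc{-\abs{P}\KL{C}{\bar q}}$ with $\bar q = \abs{P}^{-1}\sum_{a\in P} q_a$. Using $C \in (p_{01}, p_{11})$ and $\KL{C}{p_{01}} = \KL{C}{p_{11}} = \beta$, this is $\exp(-\abs{P}\beta)$ in the first case, and by smoothness of $q \mapsto \KL{C}{q}$ at $q = p_{01}$, at most $\exp\bc{-\abs{P}(\beta - \Oh(n^{-\beta\eta}))}$ in the second.

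The delicate part is the independence argument: since $P_i$ is defined through the pseudo-genie, which intertwines infection statuses with individual-test noise, it is not immediate that conditioning on the joint event $\cbc{P_i = P}$ preserves block-decomposition. The distinctiveness of $D_i$ is exactly what makes the decomposition work, but one has to check that conditioning on the \emph{set} $P_i$ (and not merely on the marginal events $\cbc{a \in P_i}$) does not create cross-block correlations; this is the main technical point to verify carefully.
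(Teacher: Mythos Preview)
Your proposal is correct and follows essentially the same approach as the paper: split on $\pats(i)$, use that distinctive tests yield conditionally independent observed results, apply \cref{lem:naalg_test_tainted} to control the success probability in the $\pats(i)=0$ case, and conclude via a Chernoff bound together with $\KL{C}{p_{01}}=\KL{C}{p_{11}}=\beta$ and continuity. Your conditioning on the full realized set $P_i=P$ (rather than just $\abs{P_i}$) is in fact slightly more careful than the paper's write-up, and the cross-block correlation concern you flag dissolves exactly as you suspect, since $\cbc{P_i=P}=\bigcap_{a\in D_i}\cbc{\vecone[a\in P_i]=\vecone[a\in P]}$ is an intersection of single-block events.
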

    
To actually use this bound,
    we require a lower bound on the number of distinctive tests for each individual.
The following bound is good enough for our purposes;
    its proof in \cref{apx:lem:naalg_many_distinctive_tests} is rather similar to birthday-paradox-type calculations.

\begin{restatable}{lemma}{lemnaalgmanydistinctivetests}
\label{lem:naalg_many_distinctive_tests}
    Let $\cD$ be the event that for all $i \in [n]$, we have $\abs{D_i} \geq \bc{1 + \frac \eps 6}\xi \log n$.
    Then
    \(\Prob{\cD} = 1 - \Oh\bc{\frac{\Gamma^4 \log^4 n}{n}}.\)
\end{restatable}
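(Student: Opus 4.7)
My plan is to fix an individual $i \in [n]$, reduce the failure event $\cbc{|D_i| < (1+\eps/6)\xi \log n}$ to the occurrence of at least two \emph{conflicting pairs} of group tests sharing a non-$i$ individual, bound that via a factorial-moment calculation, and then take a union bound over $i$.

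Fix $i \in [n]$ and write $T_i = \TestSet{2} \cap \partial i$ together with $T_{g,i} = \NAGroupTests \cap \partial i$. By \cref{def:g_spog}, the extra individual tests in $\NAAddIdvTests$ are added to $i$ precisely to ensure
\[
|T_i| \geq (1+\eps/6)\xi \log n + 1
\]
deterministically. Since every $a \in \NAAddIdvTests$ has $\partial a \setminus \cbc{i} = \emptyset$, the greedy distinctness filter never rejects individual tests; hence every rejected test lies in $T_{g,i}$ and must conflict with some strictly earlier $b \in D_i$, i.e.\ $(\partial a \cap \partial b) \setminus \cbc{i} \neq \emptyset$. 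Mapping each rejection $a$ to the unordered pair $\cbc{a,b}$ formed with its earliest conflicting predecessor in $D_i$ defines an injection into the set of \emph{conflicting pairs}
\[
\mathcal{C}_i := \cbc{\cbc{a,b} \subseteq T_{g,i} : a \neq b,\ (\partial a \cap \partial b) \setminus \cbc{i} \neq \emptyset}.
\]
Consequently $|D_i| \geq |T_i| - |\mathcal{C}_i|$, so the bad event forces $|\mathcal{C}_i| \geq 2$.

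I would then control $\Prob{|\mathcal{C}_i| \geq 2}$ through the factorial-moment bound $\Prob{|\mathcal{C}_i| \geq 2} \leq \tfrac{1}{2}\Exp{|\mathcal{C}_i|(|\mathcal{C}_i|-1)}$. Since the tests of $\NAGroupTests$ are independent uniform random $\Gamma$-subsets of $[n]$, any two fixed distinct such tests $A,B$ form a conflicting pair with probability $(\Gamma/n)^2 \cdot \Oh(\Gamma^2/n) = \Oh(\Gamma^4/n^3)$, where the first factor is the probability that both contain $i$ and the second is a union bound over the $(\Gamma-1)^2$ choices of a shared further individual. Expanding $\Exp{|\mathcal{C}_i|(|\mathcal{C}_i|-1)}$ as a sum over ordered pairs of distinct pair-of-tests, the dominant contribution comes from pair-of-pair-of-tests sharing no test: by independence this is at most $\binom{M}{2}^2 \cdot \Oh(\Gamma^4/n^3)^2 = \Oh(\Gamma^4 \log^4 n / n^2)$, where $M = |\NAGroupTests| = \Oh(n \log n / \Gamma)$. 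Pair-of-pair-of-tests sharing exactly one test contribute an analogous but smaller $\Oh(\Gamma^4 \log^3 n / n^2)$ term. Hence $\Prob{|\mathcal{C}_i| \geq 2} = \Oh(\Gamma^4 \log^4 n / n^2)$.

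A union bound over $i \in [n]$ then yields $\Prob{\cD^c} = \Oh(\Gamma^4 \log^4 n / n)$, as claimed. Conceptually the argument rests on two ingredients: the deterministic lower bound $|T_i| \geq (1+\eps/6)\xi \log n + 1$ coming from the padding individual tests (which reduces the question to ``at least two conflicts''), and the independence of the random group tests (which powers the second-moment bound). The main technical obstacle is the bookkeeping in the factorial moment, separating disjoint from overlapping pair-of-pair-of-tests, but this is routine once the per-pair estimate $\Oh(\Gamma^4/n^3)$ is in hand.
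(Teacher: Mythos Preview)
Your proposal is correct and follows the same overall skeleton as the paper: use the padding tests in $\NAAddIdvTests$ to guarantee $|T_i| \geq (1+\eps/6)\xi\log n + 1$, reduce the failure event for a fixed $i$ to ``at least two conflicts'', bound that probability by $\Oh(\Gamma^4\log^4 n/n^2)$, and union-bound over $i$.

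The implementation of the ``at least two conflicts'' bound differs. The paper restricts attention to the first $|X_i| = \lceil(1+\eps/6)\xi\log n+1\rceil$ tests in $T_i$, argues that in each iteration the running set $S$ has size at most $\Gamma|X_i|$, and hence that the number of conflicts is stochastically dominated by $\bin{|X_i|,\Gamma^2|X_i|/n}$; the elementary bound $\Prob{\bin{k,p}\geq 2}\leq (kp)^2$ then gives $|X_i|^4\Gamma^4/n^2$. You instead work globally over all of $\NAGroupTests$: you define the set $\mathcal{C}_i$ of conflicting pairs, show via an injection that the number of rejections is at most $|\mathcal{C}_i|$, and bound $\Prob{|\mathcal{C}_i|\geq 2}$ by the factorial moment $\Exp{|\mathcal{C}_i|(|\mathcal{C}_i|-1)}$, splitting into pairs-of-pairs that do or do not share a test. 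Your route avoids the sequential-conditioning/domination step (which the paper leaves somewhat implicit), at the cost of the extra bookkeeping for the overlapping case. Both yield the same $\Oh(\Gamma^4\log^4 n/n^2)$ per-individual bound.
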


The final lemma bounds the probability of misclassifying an individual (given that there are sufficiently many distinctive tests).
In its proof (in \cref{apx:lem:naalg_misclassify}),
    we see that the probability of an individual $i$ being misclassified
    is bounded by $\bc{1 - (1-\alpha)^{\Gamma - 1}(1-e^{-\beta}) - \Oh(n^{-\delta})}^{\abs{D_i}}$ given $\abs{D_i}$.
And $\xi$ is defined to obtain the upper bound of the lemma as long as $\abs{D_i} \geq (1+\frac \eps 6)\xi \log n$.

\begin{restatable}{lemma}{lemnaalgmisclassify}\label{lem:naalg_misclassify}
    Assume that $\Gamma = \Oh(n^{\beta \eta - \delta})$ for some $0 < \delta \leq \beta\eta$,
        and that $\alpha \leq \hat{\alpha}$.
    Then for $\mathcal{D}$ as in \Cref{lem:naalg_many_distinctive_tests},
        we have, for all $i \in [n]$,
        \[\ProbCond{\patNA{i} \neq \pats(i) }{ \mathcal{D}} = n^{-\bc{1+\frac{\varepsilon}{6}}+\Oh(n^{-\delta})}.\]
\end{restatable}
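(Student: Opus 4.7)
Fix $i\in[n]$. The plan is to condition on $\abs{D_i}$ and convert the bound from \Cref{lem:naalg:misclassify_probcond_abs_Pi} (expressed in terms of $\abs{P_i}$) into one depending only on $\abs{D_i}$, by exploiting that $\abs{P_i}$ given $\abs{D_i}$ is essentially binomial. Plugging in the lower bound on $\abs{D_i}$ supplied by $\cD$ will then finish the job.

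First I would observe that, conditional on $\abs{D_i}$, the events $\cbc{a\in P_i}_{a\in D_i}$ are mutually independent and each group test satisfies them with the same probability $q$. The distinctiveness of $D_i$ (tests in $D_i$ share no individual other than $i$ by construction) combined with the independence of the pseudo-genie outputs $\patSPG{j}$ across distinct $j$ from \Cref{lem:naalg_pseudogenie} delivers the independence. For a group test $a$, marginalising over $\pats(j)$ and invoking \Cref{lem:naalg_pseudogenie} gives $\SmallProb{\patSPG{j}=0}=(1-\alpha)+\Oh(n^{-\beta\eta})$ for each $j\in\partial a\setminus\cbc{i}$, and therefore
\[
q \;=\; \bc{(1-\alpha)+\Oh(n^{-\beta\eta})}^{\Gamma-1} \;=\; (1-\alpha)^{\Gamma-1}\bc{1+\Oh(n^{-\delta})},
\]
where the assumption $\Gamma=\Oh(n^{\beta\eta-\delta})$ together with $\delta\leq\beta\eta$ collapses the $\Gamma-1$ multiplicative factors into a single $1+\Oh(n^{-\delta})$. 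Individual tests in $D_i$ trivially lie in $P_i$; since their contribution only improves the bound, I would absorb them into the worst-case group-test estimate below.

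Next, applying \Cref{lem:naalg:misclassify_probcond_abs_Pi} and the binomial moment generating function yields
\[
\ProbCond{\patNA{i}\neq\pats(i)}{\abs{D_i}}
\leq \ExpCond{\exp\bc{-\abs{P_i}\bc{\beta-\Oh(n^{-\beta\eta})}}}{\abs{D_i}}
= \bc{1-q\bc{1-\eul^{-\beta+\Oh(n^{-\beta\eta})}}}^{\abs{D_i}}.
\]
Expanding $1-\eul^{-\beta+\Oh(n^{-\beta\eta})}=(1-\eul^{-\beta})(1+\Oh(n^{-\beta\eta}))$ and plugging in the definition of $\xi$, which rewrites $(1-\alpha)^{\Gamma-1}(1-\eul^{-\beta})=1-\eul^{-1/\xi}$, the base simplifies to $\eul^{-1/\xi}+\Oh(n^{-\delta})$; by the boundedness of $\xi$ from \Cref{obs:naalg_xi} this equals $\eul^{-1/\xi+\Oh(n^{-\delta})}$.

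Finally, this base is strictly less than $1$ for large $n$, so the bound is monotone decreasing in $\abs{D_i}$. Substituting the lower bound $\abs{D_i}\geq(1+\eps/6)\xi\log n$ from $\cD$ then produces the claimed $n^{-(1+\eps/6)+\Oh(n^{-\delta})}$. I expect the main obstacle to be the careful bookkeeping of the error terms: the pseudo-genie's $\Oh(n^{-\beta\eta})$ per-individual error compounds across the $\Gamma-1$ individuals inside a group test, and the assumption $\Gamma=\Oh(n^{\beta\eta-\delta})$ is precisely what keeps the total correction bounded by $\Oh(n^{-\delta})$, which then translates into the $\Oh(n^{-\delta})$ shift in the exponent after raising to the $\Theta(\log n)$-th power.
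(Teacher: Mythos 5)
Your overall plan---condition on (the set) $D_i$, use the distinctiveness of $D_i$ to get that $\{a\in P_i\}_{a\in D_i}$ are independent, pass to a binomial bound via \Cref{lem:naalg:misclassify_probcond_abs_Pi} and the MGF, rewrite in terms of $\xi$, and plug in the lower bound from $\cD$---is exactly the paper's argument, and the error-term bookkeeping ($\Gamma = \Oh(n^{\beta\eta-\delta})$ collapsing the $(\Gamma-1)$-fold correction into $1+\Oh(n^{-\delta})$, and the power $\Theta(\log n)$ converting $\Oh(n^{-\delta})$ in the base into $\Oh(n^{-\delta})$ in the exponent) also matches.

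However, there is one genuine gap: you never use the hypothesis $\alpha\leq\hat\alpha$, and you conflate two different $\xi$'s. The per-test probability $q'=e^{-1/\xi(\alpha,\vp,\Gamma)}$ that arises from the analysis is governed by the \emph{true} infection rate $\alpha$, whereas the event $\cD$ (and the test-design budget in \Cref{def:g_spog}) is stated in terms of $\xi(\hat\alpha,\vp,\Gamma)$, the $\xi$ computed from the algorithm's \emph{parameter} $\hat\alpha$. After substituting $\abs{D_i}\geq(1+\eps/6)\xi(\hat\alpha,\vp,\Gamma)\log n$ you actually obtain exponent $-(1+\eps/6)\,\xi(\hat\alpha)/\xi(\alpha)+\Oh(n^{-\delta})$, and you must still argue $\xi(\hat\alpha)/\xi(\alpha)\geq 1$; this is precisely what the paper gets from $\alpha\leq\hat\alpha$ together with the monotonicity of $\xi$ in $\alpha$ (\Cref{obs:naalg_xi}(a)). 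This matters because the lemma is invoked in \Cref{prop:sublin} with $\hat\alpha\neq\alpha$. A smaller stylistic point: since the bound is decreasing in $q$, you want a \emph{one-sided} lower bound on $q$ (as the paper writes $q\geq(1-\alpha-n^{-\beta\eta})^{\Gamma-1}$), not a two-sided $\Oh$ estimate; your final inequality tacitly relies on the correct sign of the error, so it is worth making that explicit.
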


The proof of \cref{thm:naalg} now boils down to choosing appropriate parameters and combining the lemmas.

\begin{proof}[Proof of \cref{thm:naalg}]
    We run $\SPOG$ with on $G \sim \vG_{\SPOG}(n, \alpha, \vp, \Gamma, \eta, \varepsilon)$,
        where $\Gamma \in \NN^+$ is a maximizer of $- \Gamma\log\bc{1 - (1-\alpha)^{\Gamma - 1}\cdot \bc{1 - \eul^{-\beta}}}$ (and this $\Gamma$ is at most $\ceil{\alpha^{-1}}$ by \cref{obs:naalg_xi} (b), as required),
        and $\eta = \frac{\varepsilon}{3} \cdot \frac{\xi(\hat{\alpha}, \beta, \Gamma)}{\Gamma}$.
    This graph contains at most $\bc{1 + \frac{2\varepsilon}{3}} \cdot \frac{\xi}{\Gamma} \cdot n \log n + \Oh(n)$ tests \whp by \cref{lem:naalg_test_count},
        which is at most $\bc{1 + \varepsilon} \mna$ for sufficiently large $n$
        (as $\mna = \frac{\xi}{\Gamma} n\log n = \Theta(n \log n)$).

    Now since $\xi(\alpha, \vp, \Gamma) = \Theta(1)$ and $\Gamma = \Theta(1)$ (by \Cref{obs:naalg_xi} and the fact that $\alpha = \Theta(1)$),
        the event $\mathcal{D}$ of \cref{lem:naalg_many_distinctive_tests} occurs with probability $1 - \Oh(\log^4 n / n)$,
        which is at least $1 - n^{-\varepsilon/6}$ for sufficiently large $n$.
    Combining this with \cref{lem:naalg_misclassify} (where $\delta = \beta\eta$) and the union bound over all $i \in [n]$,
        the probability of there being \emph{any} individual which is misclassified is in $\Oh(n^{-\frac{\varepsilon}{6} + \Oh(n^{-\beta\eta})}) = \Oh(n^{-\frac{\varepsilon}{6}})$,
        which is at most $\varepsilon$ for sufficiently large $n$, as claimed.
\end{proof}

We also show that for suitable parameters, $\SPOG$ is correct for sub-constant $\alpha$:
The adaptive algorithm $\PRESTO$ requires a non-adaptive algorithm in this regime that is still correct even if we only know an upper bound on the true value of $\alpha$, of which we are otherwise aware.
As an added benefit, our results are self-contained.
We defer the proof to \cref{apx:prop:sublin}.

\begin{restatable}{proposition}{propsublin}\label{prop:sublin}
    For any valid noisy channel $\vp$, and any $\hat{\theta} \in (0, 1)$ and $\varepsilon > 0$,
            there is a randomized test design $G$ using $m \leq \eps n \log n$ tests \whp so that as long as $\alpha \leq n^{-\hat{\theta}}$,
    \begin{align*}
        \SmallProb{\SPOG(G, \otests_G) = \pats} \geq 1 - \eps.
    \end{align*}
\end{restatable}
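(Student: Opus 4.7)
The plan is to instantiate $\SPOG$ on a sub-constant-regime version of the design $\vG_{\SPOG}(n, \hat{\alpha}, \vp, \Gamma, \eta, \eps)$ with $\hat{\alpha} := n^{-\hat{\theta}}$ serving as a known upper bound on the true $\alpha$, and to reuse the lemmas already developed for \Cref{thm:naalg} essentially verbatim. The guiding observation is that sub-constant $\hat{\alpha}$ lets us take $\Gamma$ polynomially large in $n$, which drives the group-test count $\xi n \log n/\Gamma$ well below $\eps n \log n$, so we can easily afford the sub-optimal budget $\eps n \log n$.

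Concretely, I would set $\gamma := \min\cbc{\beta(\vp)\eps/4,\; 1/5,\; \hat{\theta}}$, a strictly positive constant, and pick $\Gamma := \ceil{n^\gamma}$, $\eta := \eps/2$. The third term in the minimum ensures the constraint $\Gamma \leq \ceil{\hat{\alpha}^{-1}}$ from \Cref{def:g_spog}, so the design is well-defined; \Cref{obs:naalg_xi}(a) then bounds $\xi(\hat{\alpha}, \vp, \Gamma)$ by a constant (as $\hat{\alpha} \to 0$). Plugging these into \Cref{lem:naalg_test_count} yields total test count $(\eta + (1+\eps/3)\xi/\Gamma) n \log n + \Oh(n) = (\eps/2 + \Oh(n^{-\gamma})) n \log n$ \whp, which is at most $\eps n \log n$ for large $n$.

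For correctness, the cap $\gamma < 1/4$ gives $\Gamma^4 \log^4 n = o(n)$, so \Cref{lem:naalg_many_distinctive_tests} yields $\Prob{\cD} = 1-o(1)$. Furthermore, $\gamma \leq \beta\eps/4 = \beta\eta/2$ means $\Gamma = \Oh(n^{\beta\eta-\delta})$ with $\delta := \beta\eta/2 > 0$, and the hypothesis $\alpha \leq \hat{\alpha}$ holds by assumption, so \Cref{lem:naalg_misclassify} bounds the per-individual misclassification probability by $n^{-(1+\eps/6)+\Oh(n^{-\delta})}$. A union bound over $[n]$, plus $\Prob{\neg \cD} = o(1)$, gives total error $\Oh(n^{-\eps/6}) \leq \eps$ for large $n$, as required.

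The only thing not immediate from the existing lemmas is choosing $\gamma$ so as to simultaneously meet three competing constraints: (i) the test-budget inequality $\eta + \xi/\Gamma \leq \eps$ (asking $\gamma > 0$); (ii) the hypothesis $\Gamma^4 \log^4 n = o(n)$ of \Cref{lem:naalg_many_distinctive_tests} (asking $\gamma < 1/4$); and (iii) the hypothesis $\Gamma = \Oh(n^{\beta\eta - \delta})$ of \Cref{lem:naalg_misclassify} (asking $\gamma < \beta\eta$). The triple minimum above handles all three uniformly, with the $\hat{\theta}$ term simultaneously covering the design constraint $\Gamma \leq \ceil{\hat{\alpha}^{-1}}$. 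No new conceptual ideas beyond the proof of \Cref{thm:naalg} are required, since sub-constant $\alpha$ only makes the classification problem easier.
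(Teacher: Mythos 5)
Your proof is correct and takes essentially the same approach as the paper: instantiate $\vG_{\SPOG}$ with $\hat\alpha = n^{-\hat\theta}$, $\eta = \eps/2$, a large $\Gamma$ (so that the group-test term $\xi/\Gamma$ is negligible), and then reuse \Cref{lem:naalg_test_count}, \Cref{lem:naalg_many_distinctive_tests}, and \Cref{lem:naalg_misclassify} verbatim. The only difference is that the paper uses $\Gamma = \ceil{\log n}$, which makes all three side constraints on $\Gamma$ automatic, whereas you take $\Gamma = \ceil{n^\gamma}$ for a small constant $\gamma$ and handle the constraints by the explicit triple minimum; both choices are valid.
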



\section{Adaptive Group Testing}
In the following, we give an overview of the proof of impossibility when less than $\adm$ tests are used in the adaptive case,
    as well as the efficient algorithm \SPOG\ that correctly identifies $\pats$ \whp using no more than $(1+\eps)\adm$ tests.  

\subsection{Impossibility: proof strategy for \Cref{thm:adp:low}}
\label{sec:adaptive_impossible}

\newcommand{\veppp}{\vepp'}
\newcommand{\TypicalConfigs}{\cC}
The high-level proof strategy is as follows:
As in the proof of \cref{thm:nalower},
    we first modify the test scheme by adding a few individual tests to ease analysis, which can only increase the probability of successful estimation (\cref{obs:adp:reduction}).
For a ``typical'' infection vector $\pat$ (see \cref{def:typical_config}),
    we see that its posterior probability mass is dwarfed by that of a set of vectors that only differ in one coordinate from $\pat$.
This gives an upper bound on the posterior probability of $\pat$ (\cref{lem:hd:ubound-posterior-typical}).
Further, the ground truth $\pats$ is indeed ``typical'' \whp for our modified test design (\cref{lemma:prob_J}).
Combining all of the above then yields \Cref{thm:adp:low}.

\medskip

First, we formalize the notion of an adaptive test scheme. 
\begin{definition}[Adaptive test scheme]\label{def:ada-alg}
    An adaptive test scheme $\alg=(\alg_i)_{i\in[m]}$ is a sequence of sets forming a test design $G_\alg = (V_\alg, F_\alg, E_\alg)$ with $F_\alg=\cbc{a_1,\ldots,a_m}$ such that for each $i\in[m]$, $\alg_i=\alg_{a_i}=\partial_\alg a_i$ is a (possibly random) function on the previous tests and their observed results. Formally, 
    \( \alg_i=\alg_i((\alg_j,\otest_j)_{j\in[i-1]})\subseteq [n],\)
    with $\otest_j$ as the observed test result of $\alg_j$. 
\end{definition}
Consider any adaptive test scheme $\alg$ and fix $\eps > 0$.
Let $\eta = \frac{\eps}{2} \cdot \frac{\adm}{n\log n} = \frac{\eps}{2} \cdot \frac{\alpha}{\KL{p_{11}}{p_{01}}}$.
We obtain a modified test scheme $\alg'$ from $\alg$ by adding $\lfloor \eta \ln(n)\rfloor$ individual tests for each individual.
Similar to the proof of \cref{lem:modifiedsetup}, the best possible estimator on $\alg'$ is superior to any estimator using $\alg$:
\begin{observation}\label{obs:adp:reduction}
    For any valid noisy channel $\vp$, $\alpha > 0$, any adaptive test scheme $\alg$ and estimator $f$,
        \[\Prob{\pats = f(G_\alg, \otests_{\alg})} \leq \max_{f'} \Prob{\pats = f'(G_{\alg'}, \otests_{\alg'})}.\]
\end{observation}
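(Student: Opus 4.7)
The plan is to exhibit, for every estimator $f$ usable on $\alg$, a matching estimator $f'$ on $\alg'$ with the same success probability; taking the supremum over $f'$ then yields the stated inequality. The key structural observation I would rely on is that $\alg'$ can be realised concretely as $\alg$ followed by the $\lfloor \eta \log n \rfloor$ appended individual tests per individual. Because those appended tests are fixed given the individual labels (they do not depend on past observations) and do not feed back into any of $\alg$'s adaptive selections, the joint law of the tests $F_\alg \subseteq F_{\alg'}$ chosen inside $\alg'$ together with their noisy outputs $\otests_{\alg}$ coincides with the law obtained by running $\alg$ on its own; the extra, independent channel noise on the appended tests is irrelevant to this marginal.

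Given this, I would define $f'(G_{\alg'}, \otests_{\alg'}) := f(G_\alg, \otests_{\alg'}|_{F_\alg})$, i.e.\ $f'$ discards the observed results of the appended individual tests and invokes $f$ on the remaining coordinates. Equality of the relevant joint distributions together with the chain rule for probability gives
\[\Prob{\pats = f'(G_{\alg'}, \otests_{\alg'})} = \Prob{\pats = f(G_\alg, \otests_{\alg})}.\]
Since the left-hand side is at most $\max_{f'} \Prob{\pats = f'(G_{\alg'}, \otests_{\alg'})}$, the observation follows at once. There is no substantial obstacle here; the only point requiring a hint of care is the adaptive nature of $\alg$, which is handled simply by scheduling the $n\lfloor \eta \log n \rfloor$ appended individual tests after all tests of $\alg$ so that the embedding of $\alg$ as a prefix of $\alg'$ preserves both the test selections and their observed outcomes.
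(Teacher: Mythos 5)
Your proof is correct and follows essentially the same approach the paper intends: the paper does not give a standalone proof of this observation, referring instead to the analogous argument in the non-adaptive Lemma~4.5, where the key step for the "adding tests only helps" direction is precisely that the extra tests can be ignored by the estimator. Your explicit handling of the adaptive case — scheduling the appended individual tests after all of $\alg$'s tests so that $\alg$ embeds as a prefix of $\alg'$ with the joint law of $(\pats, G_\alg, \otests_\alg)$ preserved, then setting $f'$ to be $f$ composed with restriction to $F_\alg$ — is exactly the right way to make the cited analogy rigorous.
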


Recall that $\infected_{\pat}$ is the set of infected individuals under $\pat$, and that $g_i^\pm$ is the number of good tests (\Cref{def:goodtest}).
Given this, a \emph{typical} infection vector is defined as follows:
\begin{definition}[typical infection vector]\label{def:typical_config}
    Let $\vepp>0$.
    For any infection vector $\pat \in \{0, 1\}^n$, test design $G$, and displayed test results $\otest$, define the set of infected individuals with a $\vepp$-typical ratio of good tests displaying positively as
    \begin{align*}
    \infectedprime^{\vepp}_{\pat}
        = \infectedprime^{\vepp}_{\pat,G,\otest}
        = \cbc{i\in \infected_{\pat} \mid \,\abs{\good_i^+(\pat)-p_{11}\good_i(\pat)} \leq \vepp \good_i(\pat)}.
    \end{align*}
    An infection vector $\sigma$ is \emph{$\vepp$-typical} for $G$ and $\otest$ if $|\infectedprime^{\vepp}_{\pat,G,\otest} - \alpha n| \leq \vepp n$;
        Let $\TypicalConfigs^{\vepp}(G, \otest)$ be the set of these~$\sigma$.
\end{definition}

Note that the number of infected individuals is concentrated on $\alpha n$.
When an infected individual $i \in \infected_{\pats}$ has a sufficient number of good tests---which is given in $\alg'$---then since $g_i^+$ is concentrated, $i \in \infectedprime^{\vepp}_{\pats}$ \whp.
Indeed, $\pats$ is typical \whp for $\alg'$ and $\otests_{\alg'}$ as stated in the following lemma.
We prove this in \Cref{sec_lemma:prob_J}.
\begin{restatable}{lemma}{probJ}\label{lemma:prob_J}
For any $\vepp > 0$, $\pats$ is $\vepp$-typical for $G_{\alg'}$ and $\otests_{\alg'}$ \whp, i.e.,
    \(\SmallProb{\pats \not\in \TypicalConfigs^{\vepp}(G_{\alg'}, \otests_{\alg'})} = o(1).\)
\end{restatable}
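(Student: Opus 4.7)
The plan is to exploit the fact that the modified scheme $\alg'$ appends $\lfloor \eta \log n \rfloor$ individual tests per individual. For any $i \in \infected_{\pats}$, each of these individual tests contains only $i$, who is infected, so each is automatically good for $i$. This immediately gives a deterministic floor $\good_i(\pats, G_{\alg'}) \geq \lfloor \eta \log n \rfloor$ for every $i \in \infected_{\pats}$, supplying a guaranteed logarithmic number of good tests per infected individual.

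The key step is then the conditional-i.i.d.\ property of the good-test outcomes. Each good test for an infected $i$ has true result $1$, so its observed result is $\ber{p_{11}}$ independently of everything else. Working with the filtration $\cF_t = \sigma(\pats,\,(\alg_s,\otests_s)_{s\leq t})$, we note that for every step $t$ at which $\alg_t$ is good for $i$, we have $\otests_t\mid\cF_{t-1}\sim\ber{p_{11}}$; a tower-property/optional-sampling argument then yields that conditional on $\pats$ and $\good_i$, $\good_i^+\sim\bin{\good_i,p_{11}}$. Hoeffding's inequality then produces
\[\ProbCond{\abs{\good_i^+ - p_{11}\good_i} > \vepp\good_i}{\pats(i)=1} \leq 2\exp\bc{-2\vepp^2\lfloor\eta\log n\rfloor} = O\bc{n^{-2\vepp^2\eta}}.\]

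Setting $\bm X = \abs{\infected_{\pats}\setminus\infectedprime^{\vepp}_{\pats}}$ and summing the above bound over $i$ yields $\Exp{\bm X} \leq 2\alpha n\cdot n^{-2\vepp^2\eta} = o(n)$, so by Markov $\Prob{\bm X > \vepp n/2} = o(1)$. Independently, $\abs{\infected_{\pats}}\sim\bin{n,\alpha}$, so standard Chernoff gives $\Prob{\abs{\abs{\infected_{\pats}}-\alpha n}>\vepp n/2}=o(1)$. Since $\infectedprime^{\vepp}_{\pats}\subseteq\infected_{\pats}$, combining the two by the triangle inequality and a union bound yields
\[\Prob{\abs{\abs{\infectedprime^{\vepp}_{\pats}}-\alpha n}>\vepp n} \leq \Prob{\bm X > \vepp n/2} + \Prob{\abs{\abs{\infected_{\pats}}-\alpha n}>\vepp n/2} = o(1),\]
which is exactly the claim.

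The main subtlety lies in the conditional-i.i.d.\ step, because the adaptive algorithm's choice of later tests depends on the noise of earlier ones, making the set of tests that end up being good for a fixed $i$ a data-dependent random object. The filtration argument sidesteps this by using that the noise of test $\alg_t$ is drawn fresh, independent of $\cF_{t-1}$, and that goodness of $\alg_t$ for $i$ depends only on $\pats$ and on the identity of the tested individuals---never on the noise attached to $\alg_t$ itself. Once that point is made rigorous, the remainder is routine concentration, and the fact that $\eta$ is a fixed positive constant ensures the exponent $2\vepp^2\eta$ is strictly positive so all the tail bounds actually vanish.
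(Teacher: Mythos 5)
The proposal has a genuine gap at the conditional-i.i.d.\ step. You claim that, conditional on $\pats$ and $\good_i$, we have $\good_i^+\sim\bin{\good_i,p_{11}}$, and justify this with a ``tower-property/optional-sampling argument.'' This is false: optional sampling controls expectations of stopped martingales, not the conditional distribution given the stopping value. In the adaptive setting, the algorithm is free to make $\good_i$ an adversarial stopping time with respect to the filtration $\cF_t$. Concretely, consider an adversary that, after the $\lfloor\eta\log n\rfloor$ forced individual tests of $i$, keeps testing $i$ individually exactly until the first time $\ell$ at which $\abs{\bm s_i(\ell) - p_{11}\ell}>\vepp\ell$, and then stops. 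Conditional on $\good_i=\ell$ for this scheme, the event $\abs{\good_i^+ - p_{11}\good_i}>\vepp\good_i$ has probability essentially one, not the Hoeffding tail that a $\bin{\ell,p_{11}}$ distribution would give. So your bound on $\ProbCond{i\not\in\infectedprime^{\vepp}_{\pats}}{\pats(i)=1}$ does not follow, and the proof breaks at that point.

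What the adversary can \emph{not} do is guarantee a deviation within the fixed test budget, and the correct workaround is the one the paper uses: do not condition on the value of $\good_i$, but instead note that the good-test outcome stream for $i$ is an i.i.d.\ $\ber{p_{11}}$ sequence that is \emph{independent of the algorithm's choices}, so that the prefix sums $\bm s_i(\ell)$ are genuinely $\bin{\ell,p_{11}}$ for each fixed $\ell$, and then take a union bound over all $\ell\geq\lfloor\eta\log n\rfloor$,
\[
\ProbCond{i\not\in\infectedprime^{\vepp}_{\pats}}{\pats}
    \leq \sum_{\ell\geq\lfloor\eta\log n\rfloor}\Prob{\abs{\bin{\ell,p_{11}}-p_{11}\ell}>\vepp\ell},
\]
which sums to a polynomially small quantity because the Chernoff terms decay geometrically in $\ell$. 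This dominates whatever stopping rule the adversary uses. Your downstream steps (Markov on $\abs{\infected_{\pats}\setminus\infectedprime^{\vepp}_{\pats}}$, concentration of $\abs{\infected_{\pats}}$, triangle inequality) are fine and essentially match the paper once this step is repaired, so the fix is local: replace the false conditional-binomial claim by the union-over-prefix-lengths argument.
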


We now turn to bounding the posterior probability of typical infection vectors $\pat$.
To that end, for any infection vector $\pat \in \{0,1\}^n$ and any $i \in \infected_{\pat}$,
    write $\pat^{\downarrow i}$ for the infection vector obtained from $\pat$ by setting the $i$th coordinate to $0$.
Given any test design and observed test results,
    we compare the posterior odds ratio of 
    $\pat$ and the vectors $\pat^{\downarrow j}$
        for individuals $j \in \infectedprime^{\vepp}_{\pat,G,\otest}$ in \cref{apx:lem:hd:lbound-nei-vec}, leading to the following lower bound:

\begin{restatable}{lemma}{lboundneivec}\label{lem:hd:lbound-nei-vec}
For any $\pat \in \{0, 1\}^n$, test design $G$, and observed test vector $\otest$, it holds for all $j \in \infectedprime^{\vepp}_{\pat, G, \otest}$ that
\begin{align}\label{eq:hd:lbound-nei-vec}
\frac{\ProbCond{\pats=\sigma^{\downarrow j}}{G_\alg=G,\otests_\alg=\otest}}{\ProbCond{\pats=\pat}{G_\alg=G,\otests_\alg=\otest}}
    \geq \frac{1-\alpha}{\alpha}\exp\bc{-\good_j(\pat, G) \bc{\KL{p_{11}}{p_{01}}+\vepp \bc{\ln\frac{p_{00}}{p_{10}}+\ln\frac{p_{11}}{p_{01}}}}}.
\end{align}
\end{restatable}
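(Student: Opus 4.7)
My plan is to prove the inequality via Bayes' rule, reducing the posterior ratio to a prior ratio times a likelihood ratio, and then exploiting the fact that the two hypotheses $\pats=\pat$ and $\pats=\pat^{\downarrow j}$ produce identical noiseless outcomes \emph{except} on tests that are good for $j$ under $\pat$.

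First, by Bayes' rule,
\[
\frac{\ProbCond{\pats=\pat^{\downarrow j}}{G_\alg=G,\otests_\alg=\otest}}{\ProbCond{\pats=\pat}{G_\alg=G,\otests_\alg=\otest}}
= \frac{\Prob{\pats=\pat^{\downarrow j}}}{\Prob{\pats=\pat}}\cdot \frac{\ProbCond{G_\alg=G,\otests_\alg=\otest}{\pats=\pat^{\downarrow j}}}{\ProbCond{G_\alg=G,\otests_\alg=\otest}{\pats=\pat}}.
\]
Since $\pat^{\downarrow j}$ contains exactly one fewer infected individual than $\pat$, the i.i.d.\ prior gives a ratio of $(1-\alpha)/\alpha$. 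For the likelihood ratio, I will write out the joint distribution tests-by-test in the adaptive order: each factor decomposes into the (random) choice of the next test given the history, and the observed outcome given the noiseless outcome. The choice of each test depends only on previously chosen tests and their observed results, not on $\pats$ directly, so those factors cancel between the two hypotheses.

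Next, compare the noiseless outcomes on each test $a \in F(G)$: $\atests(a)$ changes under the flip $\pat \mapsto \pat^{\downarrow j}$ iff $j \in \partial a$, $\pat(j)=1$, and no other individual of $\partial a$ is infected under $\pat$; this is exactly the condition for $a$ being good for $j$ under $\pat$ (\Cref{def:goodtest}). On such tests, $\atests(a) = 1$ under $\pat$ and $\atests(a) = 0$ under $\pat^{\downarrow j}$; on all other tests $\atests(a)$ is identical under both hypotheses and the corresponding likelihood factors cancel. Splitting the good tests by observed result thus yields
\[
\frac{\ProbCond{G_\alg=G,\otests_\alg=\otest}{\pats=\pat^{\downarrow j}}}{\ProbCond{G_\alg=G,\otests_\alg=\otest}{\pats=\pat}}
= \bc{\tfrac{p_{01}}{p_{11}}}^{\good_j^+(\pat,G,\otest)}\bc{\tfrac{p_{00}}{p_{10}}}^{\good_j^-(\pat,G,\otest)}.
\]

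Finally, I lower-bound the exponent $\good_j^+\ln(p_{01}/p_{11})+\good_j^-\ln(p_{00}/p_{10})$ using the typicality of $j$. Validity of the channel gives $\ln(p_{01}/p_{11})<0$ and $\ln(p_{00}/p_{10})>0$, so to obtain a lower bound I plug in $\good_j^+ \leq (p_{11}+\vepp)\good_j$ (which, multiplied by the negative coefficient, gives a lower bound on the first summand) and $\good_j^- = \good_j - \good_j^+ \geq (p_{10}-\vepp)\good_j$ (lower bounding the second summand since its coefficient is positive). Collecting terms,
\begin{align*}
\good_j^+\ln\tfrac{p_{01}}{p_{11}}+\good_j^-\ln\tfrac{p_{00}}{p_{10}}
&\geq \good_j\bc{p_{11}\ln\tfrac{p_{01}}{p_{11}}+p_{10}\ln\tfrac{p_{00}}{p_{10}}}
  -\vepp\good_j\bc{\ln\tfrac{p_{11}}{p_{01}}+\ln\tfrac{p_{00}}{p_{10}}} \\
&= -\good_j\bc{\KL{p_{11}}{p_{01}}+\vepp\bc{\ln\tfrac{p_{11}}{p_{01}}+\ln\tfrac{p_{00}}{p_{10}}}},
\end{align*}
where the identity $p_{11}\ln(p_{11}/p_{01})+p_{10}\ln(p_{10}/p_{00})=\KL{p_{11}}{p_{01}}$ uses $p_{10}=1-p_{11}$ and $p_{00}=1-p_{01}$. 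Exponentiating and multiplying by the prior ratio yields \cref{eq:hd:lbound-nei-vec}.

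The main conceptual step is the cancellation argument in the second paragraph: one must confirm carefully that under an adaptive scheme the sequence of realized tests has the same conditional distribution under $\pat$ and $\pat^{\downarrow j}$ given the observed history, so that all randomness internal to $\alg$ cancels and the only surviving factors come from the noisy channel acting on tests whose noiseless outcomes disagree between the hypotheses. Once this is established, the remaining computation is a direct evaluation using the typicality bounds on $\good_j^\pm$.
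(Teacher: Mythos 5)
Your proposal is correct and follows essentially the same route as the paper: Bayes' rule to reduce to a prior ratio times a likelihood ratio, the observation that the adaptive algorithm's internal randomness (the choice of next test given the observed history) cancels between the two hypotheses, the identification of the surviving factors with good tests for $j$, and finally bounding the exponent via the typicality of $j$. The paper isolates the exact likelihood ratio $\frac{1-\alpha}{\alpha}\exp\bc{-\good_j^-\ln\frac{p_{10}}{p_{00}}-\good_j^+\ln\frac{p_{11}}{p_{01}}}$ as a separate claim and then applies the triangle inequality to bound the deviation from $\good_j\KL{p_{11}}{p_{01}}$, whereas you substitute the one-sided typicality bounds directly with attention to signs --- an equivalent computation.
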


Using this, as well as the fact that typical infection vectors $\pat$ have many such individuals $i$,
    we bound the posterior probability for a typical infection vector $\pat$ in \cref{sec:hd:ubound-posterior-typical}.
There, we exploit the fact that the sets of good tests for infected individuals are disjoint,
    and hence the total number of such tests is bounded by $m_{\alg'}$.

\begin{restatable}{lemma}{upt}\label{lem:hd:ubound-posterior-typical}
    For all sufficiently small $\vepp > 0$,
        any test design $G$ and observed test results $\otest$,
        and any $\vepp$-typical infection vector $\pat \in \TypicalConfigs^{\vepp}(G, \otest)$,
    \(\SmallProbCond{\pats=\pat}{G_{\alg'}=G, \otests_{\alg'}=\otest}
        = o(1).\)
\end{restatable}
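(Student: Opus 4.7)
The plan is to bound $\SmallProbCond{\pats=\pat}{\cdot}$ by exploiting that, for any typical $\pat$, the single-coordinate flips $\pat^{\downarrow j}$ with $j \in \infectedprime^{\vepp}_{\pat,G,\otest}$ collectively carry overwhelmingly more posterior mass than $\pat$ itself. Starting from the trivial bound $1 \geq \ProbCond{\pats=\pat}{\cdot} + \sum_{j \in \infectedprime^{\vepp}_{\pat}} \ProbCond{\pats=\pat^{\downarrow j}}{\cdot}$ and inserting \cref{lem:hd:lbound-nei-vec} term by term yields
\begin{align*}
1 \geq \ProbCond{\pats=\pat}{G_{\alg'}=G, \otests_{\alg'}=\otest}\cdot\bc{1 + \frac{1-\alpha}{\alpha}\sum_{j\in\infectedprime^{\vepp}_{\pat}}\exp\bc{-\good_j(\pat,G)\cdot b}},
\end{align*}
where $b = \KL{p_{11}}{p_{01}} + \vepp\bc{\log\tfrac{p_{00}}{p_{10}}+\log\tfrac{p_{11}}{p_{01}}}$. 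Hence it suffices to show that the sum diverges with $n$, uniformly over all $\pat \in \TypicalConfigs^{\vepp}(G, \otest)$.

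The main estimate comes from applying Jensen's inequality to the convex map $x \mapsto \exp(-bx)$:
\begin{align*}
\sum_{j \in \infectedprime^{\vepp}_{\pat}} \exp\bc{-b\good_j(\pat,G)} \geq \abs{\infectedprime^{\vepp}_{\pat}} \cdot \exp\bc{-b\bar{\good}}, \qquad \bar{\good} = \frac{1}{\abs{\infectedprime^{\vepp}_{\pat}}}\sum_{j \in \infectedprime^{\vepp}_{\pat}} \good_j(\pat,G).
\end{align*}
The combinatorial input is that the sets of tests good for distinct infected individuals are pairwise disjoint: if $a$ is good for $i \in \infected_\pat$, then $\partial a \cap \infected_\pat = \cbc{i}$, so $a$ cannot be good for any other infected $j$. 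Consequently $\sum_{j \in \infectedprime^{\vepp}_{\pat}} \good_j \leq m_{\alg'}$. With the earlier choice $\eta = \tfrac{\eps}{2}\cdot\tfrac{\alpha}{\KL{p_{11}}{p_{01}}}$, we have $m_{\alg'} \leq m_{\alg} + n\lfloor \eta \log n \rfloor \leq (1-\eps/2)\adm + O(n)$, while typicality of $\pat$ gives $\abs{\infectedprime^{\vepp}_{\pat}} \geq (\alpha - \vepp)n$.

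Plugging everything together,
\begin{align*}
b\bar{\good} \leq \bc{1 - \tfrac{\eps}{2}}\cdot\frac{\alpha}{\alpha-\vepp}\cdot\bc{1 + O(\vepp)}\log n + o(\log n),
\end{align*}
which, for $\vepp$ chosen small enough in terms of $\eps$, $\alpha$ and $\vp$, is at most $(1 - \eps/3)\log n$. Therefore $\exp(-b\bar{\good}) \geq n^{-(1-\eps/3) + o(1)}$ and the sum is bounded below by $(\alpha - \vepp)\cdot n^{\eps/3 - o(1)} \to \infty$, forcing $\SmallProbCond{\pats=\pat}{\cdot} = o(1)$ uniformly over $\pat \in \TypicalConfigs^{\vepp}(G, \otest)$. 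The main obstacle I anticipate is the simultaneous appearance of $\vepp$ inside $b$ (as an additive logarithmic correction) and in the denominator via $(\alpha - \vepp)n$; one has to verify that both perturbations remain strictly below the $\eps/2$-gap opened up by $m_{\alg'} \leq (1-\eps/2)\adm$ before choosing $\vepp$ small and passing $n\to\infty$.
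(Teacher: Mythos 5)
Your proof is correct and follows essentially the same route as the paper: compare the posterior of $\pat$ against the collective posterior mass of the single-flip vectors $\pat^{\downarrow j}$ for $j\in\infectedprime^{\vepp}_{\pat}$, plug in \cref{lem:hd:lbound-nei-vec}, apply Jensen's inequality, bound $\sum_j \good_j(\pat) \leq m_{\alg'} \leq (1-\eps/2)\adm$ via disjointness of good-test sets, and use $\abs{\infectedprime^{\vepp}_{\pat}}\geq(\alpha-\vepp)n$ from typicality. The only cosmetic differences are that you keep $\pat$ itself in the mass inequality (giving the extra $+1$, which is harmless) and you phrase the conclusion as a divergence of the sum rather than writing out the explicit polynomial exponent of $n$, but the closing step of taking $\vepp$ small enough so that the perturbations stay inside the $\eps/2$-gap is identical.
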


With all this in place, we can prove the main theorem of this section.

\begin{proof}[Proof of \Cref{thm:adp:low}]
For any $f$, any $\vepp > 0$,
    let $\bm f = f_{G_{\alg'}}(\otests_{\alg'})$
        and $\bm \TypicalConfigs^{\eps'} = \TypicalConfigs^{\eps'}(G_{\alg'}, \otests_{\alg'})$. Then
    \begin{align*}
    \Prob{\pats = \bm f}
        = \Prob{\pats = \bm f \wedge \bm f \in \bm \TypicalConfigs^{\vepp}}
        + \Prob{\pats = \bm f \wedge \bm f \not\in \bm \TypicalConfigs^{\vepp}}
        \leq 
        \ProbCond{\pats = \bm f}{\bm f \in \bm \TypicalConfigs^{\vepp}}
           + \Prob{\pats \not\in \bm \TypicalConfigs^{\vepp}}.
    \end{align*}
Since $\SmallProb{\pats \not\in \bm \TypicalConfigs^{\vepp}} = o(1)$ for any $\vepp > 0$ by \cref{lemma:prob_J}, 
 the claim follows by \Cref{obs:adp:reduction} 
 if the first term is in $o(1)$ as well.
To that end, we see that for a sufficiently small $\vepp > 0$,
    by \Cref{lem:hd:ubound-posterior-typical} and the law of total probability:
\begin{align*}
\ProbCond{\pats = \bm f}{\bm f \in \bm \TypicalConfigs^{\vepp}}
   &= \hspace{-0.45cm}\sum_{\substack{G,\otest,\\ \pat\in\TypicalConfigs^{\vepp}(G, \otest)}} \hspace{-0.45cm}\ProbCond{\vsigma=\pat}{\pat\in \bm \TypicalConfigs^{\vepp}, \bm f = \pat, G_{\alg'}=G,\otests_{\alg'}=\otest} \cdot \ProbCond{\bm f = \pat,G_{\alg'}=G,\otests_{\alg'}=\otest}{\bm f \in \bm \TypicalConfigs^{\vepp}}
\\ &= o(1) \cdot \sum_{\pat}\ProbCond{\bm f = \pat}{\bm f \in \bm \TypicalConfigs^{\vepp}}
    = o(1). \quad \qedhere
\end{align*}
\end{proof}

\subsection{The algorithm $\PRESTO$: proof strategy for \Cref{thm:adp:upper}}
\label{sec:adaptive_algo}
%
%

\newcommand{\epsuthresh}{\delta_1}
\newcommand{\epssizes}{\delta_2}
\newcommand{\epsnaalg}{\delta_{\SPOG}}
\newcommand{\epsnaalgproof}{\delta'}
\hspace{-8mm}
\begin{minipage}{0.59\linewidth}
The adaptive algorithm runs three stages of tests.
In the first stage, each individual is tested $\ceil{\eta \log(n)}$ times individually (for a small $\eta$).
The population is then partitioned via thresholding on the fraction of positive tests:
    the set $S_1$ contains almost only infected individuals,
        while the other, $S_0$, contains almost only uninfected individuals.
In the second stage,
    all individuals in $S_1$ (of which there are $\approx \alpha n$) undergo further $\ceil{\bc{1+\frac \eps 4} \adm / (\alpha n)}$ individual tests.
Again, $S_1$ is partitioned via thresholding:
    the threshold is $p_{11} - \epsuthresh$ for a small $\epsuthresh$
    so that one set, $U_1$, contains \emph{only} infected individuals \whp,
        while the other, $U_0$, again contains almost only healthy individuals.
The third stage cleans up $S_0$ and $U_0$ by performing \SPOG\ separately on both sets, using the variant for sub-constant $\alpha$.
We keep the sets separate to maintain an i.i.d.\ prior. 

The complete algorithm is given formally in \cref{algo:adap}.
The parameters $\eta > 0$ and $\epsuthresh \in (0, p_{11} - p_{10})$ are chosen so that
\begin{align}
    \bc{1+\frac\eps4} \frac{\KL{p_{11} - \epsuthresh }{p_{01}}}{\KL{p_{11} }{p_{01}}} &> 1 \label{eq:eps2_bound2},
\end{align}
    and
\begin{align}
    \eta\KL{C}{p_{01}} < \min\cbc{1, \bc{1+\frac{\eps}{4}}\frac{\KL{p_{11}-\epsuthresh}{p_{11}}}{\KL{p_{11}}{p_{01}}}} \label{eq:eps2_nu}.
\end{align}
\end{minipage}
\hspace{0mm}
\begin{minipage}{0.39\linewidth}
    \begin{figure}[H]
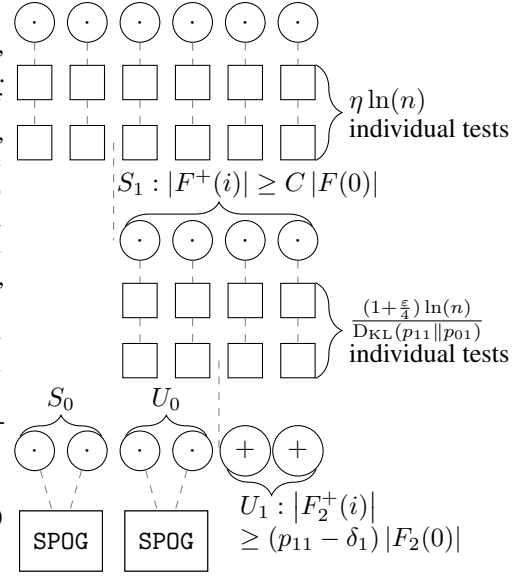

        \vspace{-8mm}
        \adpic
        \caption{An illustration of the adaptive test scheme of \PRESTO. The circles represent the individuals and the squares represent tests.}
        \label{fig:adtests}
    \end{figure}
\end{minipage}

\begin{algorithm}[H]
    \SetKwInOut{KwParameters}{Parameters}
 	\KwParameters{$\eps>0$, $\epsuthresh \in (0, p_{11} - p_{01}), \eta > 0$ satisfying \cref{eq:eps2_bound2,eq:eps2_nu}.}
        \label{alg:ind1}
            Conduct $\ceil{\eta \log(n)}$ individual tests $F_1(i)$ for each $i \in [n]$ \label{alg:stg1}; let $F^+_1(i)$ be the positive ones.

        $S_1 \gets \bc{i \in [n] \mid \abs{F_1^+(i)} > C\abs{F_1(i)}}$, $ \qquad S_0 \gets [n] \setminus S_1$
        
        \label{alg:ind2}
        Conduct $\ceil{\bc{1+\frac \eps 4} \frac{ \log(n) }{ \KL {p_{11}} {p_{01}}  }}$ individual tests $F_2(i)$ for each $i \in S_1$\label{alg:testS1};
        let $F^+_2(i)$ be the positive ones.

        $U_1 \gets \bc{i \in S_1 \mid \abs{F_2^+(i)} > (p_{11}-\epsuthresh) \abs{F_2(i)}}$, $ \qquad U_0 \gets S_1 \setminus U_1$
        
        $\estSPEX^S \gets $ result of $\SPOG$ run on $S_0$ as described in \cref{prop:sublin}, with $\hat{\theta} = \hat{\theta}^S = \frac{\eta}{2}\KL{C}{p_{11}}$.\label{alg:NAS} \\
        $\estSPEX^U \gets $ result of $\SPOG$ sun on $U_0$ as described in \cref{prop:sublin}, with $\hat{\theta} = \hat{\theta}^U = \frac{1}{2}\bc{\bc{1+\frac \eps 4}\frac{\KL{p_{11}-\epsuthresh}{p_{11}}}{\KL{p_{11}}{p_{01}}}-\eta \KL{C}{p_{01}}}$.\label{alg:NAU}\\
        \Return $\hat{\pats} = \bc{\vecone\brk{i \in U_1 \cup \infected_{\estSPEX^S} \cup \infected_{\estSPEX^U}}}_{i \in [n]}$\label{alg:ret}
        \caption{\PRESTO}
        \label{algo:adap}
\end{algorithm}
To prove the correctness of $\PRESTO$, we require the following intermediate results:
First, we need to ensure that the tests used do not exceed $(1+\eps)\adm$ \whp. 
The main share of tests is used on individual tests for $S_1$ in \cref{alg:testS1} of \cref{algo:adap},
    so we require an upper bound on $\abs{S_1}$.
This is given by the following lemma, proven in \cref{sec:prf:lem_sizeS}.
\begin{restatable}[Size of $S_1, S_0$]
{lemma}{lemsizeS} \label{lem_sizeS}
    Let $\epssizes > 0$.
    Then \whp,
        $\abs{\abs{S_1} - \alpha n } \leq \alpha n \epssizes$
        and
        $\abs{\abs{S_0} - (1-\alpha) n } \leq \alpha n \epssizes$
\end{restatable}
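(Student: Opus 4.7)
The plan is to decompose $\abs{S_1}$ as the true number of infected individuals plus a misclassification correction, and then concentrate each piece separately. Since $\abs{S_0} = n - \abs{S_1}$ and both bounds in the statement have the same magnitude $\alpha n \epssizes$, it suffices to control $\abs{S_1}$.

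First I would apply a standard Chernoff bound to the binomial $\abs{\infected_{\pats}} \sim \bin{n, \alpha}$ to obtain $\abs{\,\abs{\infected_{\pats}} - \alpha n\,} \leq \tfrac{\epssizes}{2}\alpha n$ whp. Next, observe that the stage-1 classifier of $\PRESTO$ is structurally identical to the pseudo-genie of $\SPOG$: it thresholds $\ceil{\eta \log n}$ independent individual tests at the Chernoff-optimal threshold $C$. Hence \cref{lem:naalg_pseudogenie} applies verbatim and yields that each individual $i \in [n]$ is placed on the wrong side of the partition $(S_0, S_1)$ with probability at most $n^{-\beta\eta}$, independently across $i$, and with the same bound conditional on $\pats(i) = 0$ or $\pats(i) = 1$.

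Now let $\bm X$ count the stage-1 misclassifications. Then $\Exp{\bm X} \leq n \cdot n^{-\beta\eta} = n^{1-\beta\eta} = o(n)$, because $\eta > 0$ by the admissible parameter choice in $\PRESTO$ and $\beta(\vp) > 0$ for every valid noisy channel. Markov's inequality therefore gives $\bm X \leq \tfrac{\epssizes}{2}\alpha n$ whp. Since $\abs{S_1}$ is obtained from $\abs{\infected_{\pats}}$ by adding the healthy individuals classified as infected and subtracting the infected individuals classified as healthy, $\abs{\,\abs{S_1} - \abs{\infected_{\pats}}\,} \leq \bm X$. Combining the two concentration bounds via the triangle inequality yields $\abs{\,\abs{S_1} - \alpha n\,} \leq \epssizes \alpha n$ whp, and the bound for $\abs{S_0}$ follows immediately from $\abs{S_0} = n - \abs{S_1}$.

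The argument is essentially routine once one identifies \cref{lem:naalg_pseudogenie} as supplying exactly the per-individual misclassification bound required; no deeper structural analysis of stage 1 is needed. The only delicate point is verifying $\beta\eta > 0$ so that $n^{1-\beta\eta} = o(n)$, which is immediate from the strict positivity of $\eta$ in $\PRESTO$ and the fact that $\beta(\vp)$ is strictly positive for any valid noisy channel.
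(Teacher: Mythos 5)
Your proof is correct, but it takes a genuinely different organizational route from the paper's. The paper decomposes the event $\abs{\abs{S_1}-\alpha n} > \epssizes\alpha n$ into three sub-events ($\abs{S_1\cap\infected_{\pats}} < (1-\epssizes)\alpha n$, $\abs{S_1\cap\overline{\infected_{\pats}}} > \frac{\epssizes}{2}\alpha n$, $\abs{\infected_{\pats}} > (1+\frac{\epssizes}{2})\alpha n$) and bounds each directly by a Chernoff bound on the relevant binomial counts, after first establishing the per-individual misclassification rates \cref{eq:S1H,eq:S0I} from scratch. You instead decompose $\abs{S_1} - \alpha n$ as $\bc{\abs{\infected_{\pats}}-\alpha n}$ plus a misclassification correction bounded by the count $\bm X$, import the per-individual bound from \cref{lem:naalg_pseudogenie}, and apply a single first-moment/Markov argument. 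Your route is more modular and shorter, at the cost of (a) relying on the structural identification of \PRESTO's stage-1 classifier with the \SPOG\ pseudo-genie --- correct in substance, though ``applies verbatim'' is a mild overstatement since \PRESTO\ uses strict thresholding ($>$) while \SPOG\ uses $\geq$, a distinction that does not affect the one-sided Chernoff tails but should be acknowledged --- and (b) obtaining a polynomially small failure probability via Markov rather than the stronger exponential tails the paper gets. Both suffice for the whp claim. The identity $\abs{\abs{S_1}-\abs{\infected_{\pats}}} = \abs{\abs{S_1\cap\overline{\infected_{\pats}}}-\abs{S_0\cap\infected_{\pats}}} \leq \bm X$ and the $\epssizes/2$ split are handled cleanly.
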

        
Second, we need to show the following, which we prove in \cref{sec:prf:lem_falsPos}.
\begin{restatable}[No false positives in $U_1$]{lemma}{lemfalsPos}\label{lem_falsPos}
    With high probability, all individuals in $U_1$ are infected, i.e., $U_1 \subseteq \infected_{\pats}$.
\end{restatable}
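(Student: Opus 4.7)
The plan is to show that for each healthy individual $i \in [n]$ (i.e.\ with $\pats(i) = 0$), the probability that $i$ ends up in $U_1$ is $o(1/n)$, and then to apply a union bound over all at most $n$ such individuals. So the core step is a large-deviation bound for the second-stage tests of a single healthy individual.

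Fix $i \in [n]$. By construction, $i \in U_1$ requires (i) $i \in S_1$, and (ii) more than $(p_{11} - \epsuthresh)|F_2(i)|$ of the $|F_2(i)| = \ceil{(1+\eps/4)\log(n)/\KL{p_{11}}{p_{01}}}$ second-stage individual tests of $i$ displaying positively. The key observation is that the tests in $F_2(i)$ are freshly generated in stage~2 and, conditionally on $\pats(i) = 0$, their observed results are i.i.d.\ $\ber{p_{01}}$, \emph{independent} of the first-stage tests and in particular of the event $\{i \in S_1\}$. Since $\epsuthresh \in (0, p_{11} - p_{01})$ we have $p_{11} - \epsuthresh > p_{01}$, so the event in (ii) is a genuine large-deviation event for a $\bin{|F_2(i)|, p_{01}}$ random variable. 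The standard Chernoff--Cram\'er bound in Kullback--Leibler form therefore yields
\begin{align*}
\ProbCond{|F_2^+(i)| > (p_{11} - \epsuthresh)|F_2(i)|}{\pats(i) = 0, i \in S_1}
    \leq \exp\bc{-|F_2(i)| \cdot \KL{p_{11} - \epsuthresh}{p_{01}}}.
\end{align*}

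Plugging in $|F_2(i)| \geq (1+\eps/4)\log(n)/\KL{p_{11}}{p_{01}}$, this upper bound equals $n^{-(1+\eps/4)\KL{p_{11}-\epsuthresh}{p_{01}}/\KL{p_{11}}{p_{01}}}$. By the choice of $\epsuthresh$ in \cref{eq:eps2_bound2}, the exponent is strictly greater than $1$, so this quantity is $o(1/n)$. Combining with $\Prob{\pats(i) = 0} \leq 1$ gives $\Prob{i \in U_1 \wedge \pats(i) = 0} = o(1/n)$, and a union bound over $i \in [n]$ yields $\Prob{U_1 \not\subseteq \infected_{\pats}} = o(1)$, as claimed. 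The only subtlety worth flagging is the independence of the second-stage outcomes from $\ind[i \in S_1]$ conditional on $\pats(i)$, which is what lets us drop the conditioning on $i \in S_1$ in the Chernoff bound; this is immediate from the fact that first- and second-stage tests are run on disjoint sets of (simulated) channel invocations.
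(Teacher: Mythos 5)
Your proof is correct and takes essentially the same route as the paper: bound $\ProbCond{i \in U_1}{i \in S_1 \cap \overline{\infected_{\pats}}}$ by a Chernoff bound against $\bin{|F_2(i)|, p_{01}}$, obtain the exponent $(1+\eps/4)\KL{p_{11}-\epsuthresh}{p_{01}}/\KL{p_{11}}{p_{01}} > 1$ by \cref{eq:eps2_bound2}, and conclude via a union bound over all $i$. Your remark that the second-stage noise is conditionally independent of the event $\{i \in S_1\}$ given $\pats(i)$ makes explicit a step the paper leaves implicit, which is a useful clarification but not a different argument.
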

Third, the correctness of \SPOG\ in \cref{alg:NAS,alg:NAU} is left to prove.
The following lemma, proven in \cref{sec:prf:lem:use_of_naalg_correc} provides that \SPOG\ indeed estimates $S_0$ and $U_0$ correct \whp.
\begin{restatable}{lemma}{lemuseofnaalgcorrect}\label{lem:use_of_naalg_correct}
    For any $\epsnaalg > 0$,
    $\estSPEX^S$ (resp., $\estSPEX^U$) correctly identify all individuals in $S_0$ (resp., $U_0$) using at most $\epsnaalg n \log n$ tests, \whp.
\end{restatable}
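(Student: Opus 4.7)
The strategy is to show that, after conditioning on membership in $S_0$ (resp., $U_0$), the remaining individuals' infection statuses form an i.i.d.\ Bernoulli vector whose parameter is small enough for \cref{prop:sublin} to apply. Because the decisions partitioning individuals into $S_0/S_1$ and then $U_0/U_1$ depend only on each individual's own first- and second-stage tests, and both tests and infection statuses are independent across individuals under the model, the conditional structure is indeed i.i.d., with some effective rate $\alpha_{S_0}$ (resp., $\alpha_{U_0}$).

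First I would bound $\alpha_{S_0} := \ProbCond{\pats(i)=1}{i \in S_0}$. By Bayes,
\[\alpha_{S_0} = \frac{\alpha \cdot \ProbCond{i \in S_0}{\pats(i)=1}}{\Prob{i \in S_0}}.\]
A Chernoff bound on $\mathrm{Bin}(\ceil{\eta \log n}, p_{11})$ yields $\ProbCond{i \in S_0}{\pats(i)=1} \leq n^{-\eta \KL{C}{p_{11}}(1+o(1))}$, while $\Prob{i \in S_0} = \Theta(1)$ since $C > p_{01}$ makes an uninfected individual land in $S_0$ with probability tending to $1$. Hence $\alpha_{S_0} \leq n^{-2\hat{\theta}^S(1+o(1))}$ by the choice $\hat{\theta}^S = \frac{\eta}{2}\KL{C}{p_{11}}$; combined with $|S_0|=\Theta(n)$ \whp (from \cref{lem_sizeS}) this gives $\alpha_{S_0} \leq |S_0|^{-\hat{\theta}^S}$ for all sufficiently large $n$.

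The bound for $\alpha_{U_0}$ follows the same Bayesian recipe, but is more delicate. An infected individual enters $U_0$ only by passing stage~$1$ and then failing the stronger threshold $p_{11}-\epsuthresh$ in stage~$2$, so a Chernoff bound on $\mathrm{Bin}(\ceil{(1+\eps/4)\log n/\KL{p_{11}}{p_{01}}}, p_{11})$ gives
\[\ProbCond{i \in U_0}{\pats(i)=1} \leq n^{-(1+\eps/4)\,\KL{p_{11}-\epsuthresh}{p_{11}}/\KL{p_{11}}{p_{01}}\,(1+o(1))}.\]
An uninfected individual enters $U_0$ when, despite a test-positive rate of only $p_{01}$, it accidentally exceeds threshold $C$ in stage~$1$, contributing $\Omega(n^{-\eta \KL{C}{p_{01}}(1+o(1))})$ to $\Prob{i\in U_0}$. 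Plugging both bounds into Bayes,
\[\alpha_{U_0} \leq n^{-\bc{(1+\eps/4)\KL{p_{11}-\epsuthresh}{p_{11}}/\KL{p_{11}}{p_{01}} - \eta \KL{C}{p_{01}}}(1+o(1))} = n^{-2\hat{\theta}^U(1+o(1))}.\]
Positivity of the bracketed exponent is precisely \cref{eq:eps2_nu}; together with $|U_0|=\Theta(n)$ \whp (via \cref{lem_sizeS} and \cref{lem_falsPos}), this yields $\alpha_{U_0} \leq |U_0|^{-\hat{\theta}^U}$ for large $n$.

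With the effective rates under control, I would apply \cref{prop:sublin} separately to $S_0$ (with parameter $\hat{\theta}^S$) and $U_0$ (with parameter $\hat{\theta}^U$), each with tolerance $\epsnaalg/2$. Since $|S_0|,|U_0| \leq n$, the total test count is at most $\epsnaalg n \log n$ \whp, as required. The main obstacle is the calculation for $U_0$: one must simultaneously lower-bound the uninfected contribution to the denominator of Bayes' formula while upper-bounding the infected contribution to the numerator, with strict positivity of the resulting exponent being exactly what \cref{eq:eps2_nu} was tuned to guarantee. A minor subtlety is that \cref{prop:sublin} as stated gives error $\eps$ rather than $o(1)$; this is reconciled by letting $\epsnaalg$ shrink with $n$, or equivalently by inspecting its proof for polynomially decaying error.
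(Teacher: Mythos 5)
Your proposal follows the same route as the paper: Bayes' theorem to bound the residual infection rate conditional on landing in $S_0$ (resp., $U_0$), then invoke \cref{prop:sublin}. The Bayesian computations for $\alpha_{S_0}$ and $\alpha_{U_0}$ match the paper's, and you correctly identify \cref{eq:eps2_nu} as the condition making the $U_0$ exponent strictly positive, and the factor-of-$2$ cushion built into $\hat{\theta}^S,\hat{\theta}^U$ as what absorbs the $|S_0|,|U_0|$-vs.-$n$ discrepancy.

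There is, however, a genuine gap. Your claim that $|U_0| = \Theta(n)$ \whp is false: both populations feeding $U_0$ are polynomially small. Uninfected individuals reach $S_1$ only at rate $\approx n^{-\eta\KL{C}{p_{01}}}$, and infected individuals in $S_1$ drop to $U_0$ at rate $\approx n^{-(1+\eps/4)\KL{p_{11}-\epsuthresh}{p_{11}}/\KL{p_{11}}{p_{01}}}$, so $|U_0| = n^{1-\Omega(1)} = o(n)$. More importantly, the lemmas you cite do not even establish $|U_0| \to \infty$: \cref{lem_sizeS} gives $|S_1| \approx \alpha n$ and \cref{lem_falsPos} gives $U_1 \subseteq \infected_{\pats}$ (hence $U_0 \supseteq S_1 \cap \overline{\infected_{\pats}}$), but neither rules out $S_1 \cap \overline{\infected_{\pats}} = \emptyset$. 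This lower bound is needed to apply the asymptotic guarantee of \cref{prop:sublin} on the $U_0$ sub-instance; the paper supplies it as a separate claim (\cref{lem:size_s1_notinf}), combining a lower Cram\'er bound on $\ProbCond{i \in S_1}{i\notin\infected_{\pats}}$ with a Chernoff bound on the resulting binomial count to obtain a polynomial (sub-linear) lower bound on $|S_1\cap\overline{\infected_{\pats}}|$. You should add such an argument rather than appeal to $\Theta(n)$.
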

Finally, we can plug everything together to prove \cref{thm:adp:upper}.
\begin{proof}[Proof of \cref{thm:adp:upper}]
    The sets $S_0$, $U_0$, and $U_1$ partition $[n]$,
        all individuals in the former two sets are correctly classified \whp by \cref{lem:use_of_naalg_correct},
        and all individuals in $U_1$ are correctly classified as infected \whp by \cref{lem_falsPos}.
    So all individuals are correctly classified \whp.

    As for the number of tests:
        the first stage uses $n \ceil{\eta \log(n)} \leq \eta n \log(n) + n$ tests,
        the second stage uses $\abs{S_1} \ceil{\bc{1+\frac \eps 4} \frac{\log n}{\KL{p_{11}}{p_{10}}}}$ tests,
        and the third stage uses $\epsnaalg n \log n$ tests \whp.
    Since $\abs{S_1} \leq (1 + \epssizes)\alpha n$ \whp by \cref{lem_sizeS},
        the theorem follows for sufficiently small $\eta$, $\epssizes$, $\epsnaalg$ and sufficiently large $n$.
\end{proof}


\newpage

\appendix
\section{Table of Notation}
\begin{table}[H]
\begin{tabularx}{\linewidth}{l p{6.3cm} X}
Symbol    & Meaning & Definition or Domain \\ \toprule
$\alg$ & Adaptive test scheme & \cref{def:ada-alg}\\
$\alpha$ & Probability of an individual being infected  & $\alpha \in (0, 1)$ \\
$\beta$                     & KL-Divergence of $p_{11}$ and optimal threshold                        & $ \beta= \max_{c \in [\pflipoi, \pflipii]} \min\cbc{\KL{c}{\pflipoi}, \KL{c}{\pflipii}}$     \\
$C$                         & Optimal threshold                              & $C=\argmax_{c \in [\pflipoi, \pflipii]} \min\cbc{\KL{c}{\pflipoi}, \KL{c}{\pflipii}}$ \\
$\TypicalConfigs^{\vepp}(G, \otest)$ & Set of typical configurations & \cref{def:typical_config} \\
$\KL{p}{q}$                 & Relative entropy                                                     & $\KL{p}{q} = \sum_{x \in \cX} p(x) \log\bc{\frac{p(x)}{q(x)}}$ \\
$\partial_G a$        & Individuals participating in a test $a$ in $G$ &  $\partial_G a = \cbc{i \in V(G) \mid (i,a) \in E(G)}$\\
$\partial_G i$        & Tests containing  an individual $i$ in $G$                                  &  $\partial_G i = \cbc{a \in F(G) \mid (i,a) \in E(G)}$\\
$F(G)$                   & Set of all tests in test design $G$                                  & \\
$G$& Test design as bipartite graph& $G=(V, F, E)$ \\
$\modGraph$ & Modified test design in non-adaptive impossibility proof &  \cref{def:G_eta}\\
$\vG_\SPOG$                            & Proposed non-adaptive test design      &      \cref{def:g_spog}\\
$\good_i$                   & Good test for individual $i$                                         & $\good_i = \abs{\cbc{a \in \partial i \mid \bc{\partial a\setminus\cbc{i}} \cap \infected_\pat = \emptyset }}$  \\
$\good_i^+$                 & Good positive displayed test for individual $i$                      & $\good_i^+=\abs{\cbc{a\in \partial i \mid \otests(a)=1 \wedge 				\bc{\partial a\setminus\cbc{i} \cap \infected_\pat} = \emptyset}}$ \\
$\good_i^-$                 & Good negative displayed test for individual $i$                      & $\good_i^+= \abs{\cbc{a\in \partial i \mid \otests(a)=0 				\land \bc{\partial a\setminus\cbc{i}} \cap \infected_\pat= \emptyset}}$  \\
$\Gamma$                    & Optimal test degree for non-adaptive case                                 &\\
$\infected_\pat$            & Infected individuals for infection vector $\pat$                     & $\infected = \cbc{i \mid \pat_i = 1}$     \\
$\overline{\infected_\pat}$ & Uninfected individuals for infection vector $\pat$                 & $\overline{\infected_\pat} = [n]\setminus\infected_\pat$  \\
$\infectedprime^{\vepp}_\pat$ & Set of infected individuals in typical infection vectors & \cref{def:typical_config}\\
$m$ & Number of tests &      \\
$\adm$                      & Threshold adaptive group testing                                     & $\adm = \cad n \log(n) $               \\
$\nonadm$                   & Threshold non-adaptive group testing                                 & $\nonadm = \cna n \log(n) $            \\
$n$ & Number of individuals &\\  
$\nnew$ & Number of individuals in $\modGraph$ & $\nnew =  \abs{V(\modGraph)}$ \\
$p_{01}, p_{10}$                & Test flip probabilities                                              &\\
$\vp$                       & Noise vector                                                       & $\vp=(p_{00}, p_{01}, p_{10}, p_{11} )$ \\
\PRESTO                            & Proposed adaptive algorithm &        \cref{algo:adap}            \\
$S^+$                   & Subset of positively displayed tests of a test set $S$                                   & $S^+ = \cbc{a \in S \mid \otests(a) = 1}$ \\
$S^-$                   & Subset of negatively displayed tests of a test set $S$                                 &$S^- = \cbc{a \in S \mid \otests(a) = 0}$  \\
\SPOG                            & Proposed estimator for non-adaptive case                             &  \cref{algo:nonad}\\
$\pats$ & Ground truth (random variable), true infection states of individuals & $\pats = (\pats(i))_{i \in [n]} \in \{0,1\}^n$\\
$\pats_{-i}$ & Ground truth without $i$     &               \\
$\genie^{G,\otests}$          & Genie estimator on test setup $G$                                    & \cref{def:genie}                                   \\
$\map^{G,\otests}$            & MAP estimator on test setup $G$                                      & \Cref{def:map}                                 \\
$\estSPEX$                            & Estimation result of \SPOG                             &      \\
$\atests$                   & Actual, true test results before noise                               & $\atests = (\atests(a))_a \in [m] \in \{0,1\}^m$ \\
$\otests$                   & Observed test results, after noise                                   & $\otests = (\otests(a))_a \in [m] \in \{0,1\}^m$ \\
\bottomrule
\end{tabularx}
\end{table}

\section{A Basic Probabilistic Toolbox}

Here, we give some fundamental probabilistic tools in the form we use throughout.

\renewcommand{\thetheorem}{\Alph{section}.\arabic{theorem}}
\setcounter{theorem}{0}

\begin{theorem}[Chernoff bound, cf.\ eqn.\ (2.4) and proof of Thm.\ 2.1 in \cite{JLRRandomGraphs00}]\label{thm:chernoff_kl}
    Let $X \sim \bin{n, p}$. Then
    \begin{align}
        \Prob{X \geq q n} &\leq \exp\bc{-n \KL{q}{p}}\quad\textup{for $p < q < 1$,} \\
        \Prob{X \leq q n} &\leq \exp\bc{-n \KL{q}{p}}\quad\textup{for $0 < q < p$.}
    \end{align}
\end{theorem}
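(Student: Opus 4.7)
The plan is to apply the classical Cram\'er--Chernoff exponential-moment method and optimize over the free parameter to produce an exponent matching the Kullback--Leibler divergence. I present the upper-tail bound; the lower-tail bound is analogous.

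Fix $p<q<1$ and let $t>0$ be a free parameter to be chosen later. By Markov's inequality applied to the nonnegative random variable $\exp(tX)$,
\begin{align*}
    \Prob{X\geq qn} \;=\; \Prob{\exp(tX)\geq \exp(tqn)}
    \;\leq\; \exp(-tqn)\,\Exp{\exp(tX)}.
\end{align*}
Since $X\sim\bin{n,p}$ is a sum of $n$ i.i.d.\ Bernoulli$(p)$ variables, its moment-generating function factorizes as $\Exp{\exp(tX)}=\bc{1-p+p\exp(t)}^n$. Substituting back yields
\begin{align*}
    \Prob{X\geq qn} \;\leq\; \exp(-tqn)\bc{1-p+p\exp(t)}^n.
\end{align*}

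Next I optimize the right-hand side over $t>0$. Elementary calculus (differentiating the logarithm of the bound in $t$) shows that the minimizer satisfies $\exp(t^\star)=\frac{q(1-p)}{p(1-q)}$, which is strictly larger than $1$ precisely because $p<q$, so $t^\star>0$ is admissible. Plugging $t^\star$ into the bound gives
\begin{align*}
    \Prob{X\geq qn}
    \;\leq\; \bc{\frac{p(1-q)}{q(1-p)}}^{qn}\bc{\frac{1-p}{1-q}}^{n}.
\end{align*}
Taking logarithms and regrouping terms produces exactly
\begin{align*}
    qn\log\frac{p}{q} + (1-q)n\log\frac{1-p}{1-q}
    \;=\; -n\KL{q}{p},
\end{align*}
which yields the claimed upper-tail bound.

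For the lower-tail bound with $0<q<p$ the argument is symmetric: one applies Markov's inequality to $\exp(-tX)$ for $t>0$, uses the same factorization of the MGF, and chooses the optimal parameter $t^\star$ for which $\exp(-t^\star)=\frac{q(1-p)}{p(1-q)}<1$ (now admissible because $q<p$). The same algebraic simplification reduces the resulting exponent to $-n\KL{q}{p}$. Neither step is delicate; the only thing to verify carefully is the sign of $t^\star$ in each case, which is the reason the two tails are stated under the two disjoint parameter ranges $p<q$ and $q<p$.
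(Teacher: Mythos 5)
Your proof is correct and takes exactly the standard Cram\'er--Chernoff exponential-moment approach that the paper's cited reference \cite{JLRRandomGraphs00} (Theorem 2.1 and eqn.\ (2.4)) uses; the paper itself simply cites the result without reproducing a proof, so there is nothing to compare beyond noting that your optimization of $t$ and algebraic reduction to the Kullback--Leibler exponent are accurate and match what the reference does.
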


\begin{theorem}[Chernoff bound, cf.\ Theorem 2.1 in \cite{JLRRandomGraphs00}]\label{thm:chernoff_t} 
    Let $X \sim \bin{n, p}$, and $t \geq 0$. Then with $\mu = \Exp{X} = np$, we have:
    \begin{align}
        \Prob{X \geq \mu + t} &\leq \exp\bc{-\,\frac{t^2}{2(\mu+t/3)}} \\
        \Prob{X \leq \mu - t} &\leq \exp\bc{-\,\frac{t^2}{2\mu}}
    \end{align}
\end{theorem}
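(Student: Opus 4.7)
The plan is to deploy the standard Cram\'er--Chernoff recipe: apply Markov's inequality to the exponential moment generating function of $X$, evaluate this MGF exactly for $X \sim \bin{n,p}$, optimize over the free exponential parameter, and finally convert the resulting Legendre-type expression into the advertised quadratic-in-$t$ bounds.

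Concretely, for the upper tail I would start from the observation that for every $s > 0$,
\begin{align*}
    \Prob{X \geq \mu + t} = \Prob{e^{sX} \geq e^{s(\mu + t)}} \leq e^{-s(\mu+t)} \Exp{e^{sX}}.
\end{align*}
Writing $X$ as a sum of $n$ i.i.d.\ Bernoulli$(p)$ variables and applying $1 + x \leq e^x$ gives $\Exp{e^{sX}} = (1 - p + p e^s)^n \leq \exp\bc{np(e^s - 1)}$. A short computation shows that the resulting exponent $np(e^s - 1) - s(\mu + t)$ is minimized at $s = \log(1 + t/\mu)$, which substitutes back to yield
\begin{align*}
    \Prob{X \geq \mu + t} \leq \exp\bc{-\mu \, \phi(t/\mu)}, \qquad \phi(u) := (1+u)\log(1+u) - u.
\end{align*}
The lower-tail bound $\Prob{X \leq \mu - t} \leq \exp\bc{-\mu \, \phi(-t/\mu)}$ drops out of the same recipe with $s < 0$ (or, equivalently, by applying the upper-tail argument to $n - X \sim \bin{n, 1-p}$).

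The main obstacle is then purely calculus-driven: I need to squeeze $\phi$ into the quadratic expressions appearing on the right-hand sides of the statement. For the upper tail I would verify the inequality $\phi(u) \geq u^2/(2 + 2u/3)$ for $u \geq 0$; I expect to prove this by checking that the difference $(2 + 2u/3)\phi(u) - u^2$ vanishes to second order at $u = 0$ and has a nonnegative derivative throughout $u \geq 0$. Multiplying by $\mu$ then turns the right-hand exponent into $-t^2/(2(\mu + t/3))$, exactly as required. For the lower tail I would similarly establish $\phi(-u) \geq u^2/2$ on $u \in [0, 1]$, again by a derivative comparison after Taylor-expanding $\log(1 - u)$, which yields the slightly sharper $\exp(-t^2/(2\mu))$ bound. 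Everything else is a mechanical application of Markov's inequality and optimization; these two elementary inequalities for $\phi$ are the only steps requiring actual (if routine) work, and once they are in place the two displayed bounds follow immediately.
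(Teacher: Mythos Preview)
Your proposal is correct and follows the standard Cram\'er--Chernoff argument. The paper does not give its own proof of this statement: it is quoted as a tool from \cite{JLRRandomGraphs00} (Theorem~2.1 there) and used as a black box, so there is nothing to compare against beyond noting that your approach is essentially the textbook derivation.
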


\setcounter{lemma}{2}

The following statement in terms of relative error is a immediate consequence for $t = \eps \Exp{X}$:
\begin{corollary}\label{thm:chernoff_eps}
    Let $X \sim \bin{n, p}$, and $\eps \geq 0$. Then with $\mu = \Exp{X} = np$, we have:
    \begin{align}
        \Prob{X \geq (1+\eps)\mu} &\leq \exp\bc{-\,\frac{\eps^2 \mu}{2(1+\eps/3)}} \\
        \Prob{X \leq (1-\eps)\mu} &\leq \exp\bc{-\,\frac{\eps^2 \mu}{2}}
    \end{align}
\end{corollary}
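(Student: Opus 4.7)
The plan is to deduce the corollary as a direct specialization of Theorem~A.2 by choosing the additive deviation $t = \varepsilon \mu$. Since the corollary is phrased in relative (multiplicative) form while Theorem~A.2 is phrased in additive form, no new probabilistic work is required; I only need to substitute and simplify the exponents.

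For the upper tail, I would set $t = \varepsilon \mu$ in the first bound of Theorem~A.2. Then $\mu + t = (1+\varepsilon)\mu$, so the event $\{X \geq \mu + t\}$ becomes $\{X \geq (1+\varepsilon)\mu\}$. The exponent becomes
\[
-\frac{t^2}{2(\mu + t/3)} = -\frac{\varepsilon^2 \mu^2}{2(\mu + \varepsilon\mu/3)} = -\frac{\varepsilon^2 \mu}{2(1+\varepsilon/3)},
\]
yielding the claimed bound. For the lower tail, the same substitution in the second bound of Theorem~A.2 gives the event $\{X \leq (1-\varepsilon)\mu\}$, and the exponent reduces to
\[
-\frac{t^2}{2\mu} = -\frac{\varepsilon^2 \mu^2}{2\mu} = -\frac{\varepsilon^2 \mu}{2},
\]
as desired.

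The only thing to be slightly careful about is that the lower-tail statement is only meaningful (and the event nontrivial) when $\varepsilon \leq 1$, since otherwise $(1-\varepsilon)\mu \leq 0$ and the bound is vacuous; this is compatible with the hypothesis $\varepsilon \geq 0$ stated in the corollary, so no further case distinction is necessary. There is no real obstacle here: this is a one-line reformulation, and the proposal is essentially to write out the two substitutions and simplifications above.
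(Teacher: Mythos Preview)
Your proposal is correct and matches the paper's approach exactly: the paper states that the corollary is an immediate consequence of Theorem~A.2 for $t = \eps \Exp{X}$, which is precisely the substitution you carry out.
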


We also have the following two-sided bound:
\begin{corollary}[Corollary 2.3 in \cite{JLRRandomGraphs00}]\label{thm:chernoff_eps_twoside}
    Let $X \sim \bin{n, p}$, and $0 \leq \eps \leq 3/2$. Then with $\mu = \Exp{X} = np$, we have:
    \begin{align}
        \Prob{\abs{X - \mu} \geq \eps\mu} &\leq 2\exp\bc{-\,\frac{\eps^2 \mu}{3}}.
    \end{align}
\end{corollary}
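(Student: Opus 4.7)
The plan is to derive the two-sided bound by combining the two one-sided tail bounds provided by \Cref{thm:chernoff_eps} via a union bound, taking advantage of the restriction $\eps \leq 3/2$ to obtain a common exponent.

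First I would apply \Cref{thm:chernoff_eps} to the upper tail, which gives
\[\Prob{X \geq (1+\eps)\mu} \leq \exp\bc{-\,\frac{\eps^2 \mu}{2(1+\eps/3)}}.\]
The assumption $\eps \leq 3/2$ yields $1 + \eps/3 \leq 3/2$, hence $2(1+\eps/3) \leq 3$, so the right-hand side is bounded above by $\exp(-\eps^2\mu/3)$. Next, for the lower tail, \Cref{thm:chernoff_eps} gives
\[\Prob{X \leq (1-\eps)\mu} \leq \exp\bc{-\,\frac{\eps^2 \mu}{2}} \leq \exp\bc{-\,\frac{\eps^2 \mu}{3}},\]
since $1/2 \geq 1/3$. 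Finally, a union bound over the two disjoint events $\{X \geq (1+\eps)\mu\}$ and $\{X \leq (1-\eps)\mu\}$, whose union is $\{\abs{X-\mu} \geq \eps\mu\}$, yields the claim with a factor of $2$ in front of $\exp(-\eps^2\mu/3)$.

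No substantial obstacle is anticipated; the entire argument is a two-line bookkeeping exercise once the two one-sided bounds are in hand, and the only subtlety is noting that the hypothesis $\eps \leq 3/2$ is exactly what allows the upper-tail denominator $2(1+\eps/3)$ to be replaced by $3$ uniformly. In fact, since the statement is explicitly attributed to Corollary~2.3 of \cite{JLRRandomGraphs00}, one could alternatively just cite the reference, but the direct derivation above from \Cref{thm:chernoff_eps} is self-contained and short.
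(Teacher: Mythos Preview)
Your derivation is correct and is precisely the standard argument: bound each tail separately via \Cref{thm:chernoff_eps}, use $\eps\le 3/2$ to replace $2(1+\eps/3)$ by $3$ in the upper-tail exponent, and combine by a union bound. The paper itself offers no proof and simply cites \cite{JLRRandomGraphs00}, so your self-contained version matches (and makes explicit) what the reference does.
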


\setcounter{theorem}{4}

\begin{theorem}[Cram\'er's Theorem for Binomial r.v.s, cf.\ Thm.\ 2.2.3 and Exercise 2.2.23 (b) in \cite{largedeviationstechniquesandapplications}]\label{thm:cramers}\ 
    \begin{enumerate}[label=(\alph*)]
        \item For any closed set $F \subseteq [0, 1]$,
        \[\limsup_{n \to \infty} \frac{1}{n} \log \Prob{\frac{1}{n} \bin{n, p} \in F } \leq -\inf_{q \in F} \KL{q}{p}.\]
        \item For any open set $G \subseteq [0, 1]$,
        \[\liminf_{n \to \infty} \frac{1}{n} \log \Prob{\frac{1}{n} \bin{n, p} \in G} \geq -\inf_{q \in G} \KL{q}{p}.\]
    \end{enumerate}
\end{theorem}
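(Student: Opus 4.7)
My plan is to handle the two parts separately, deriving part~(a) from the Chernoff bound already in place (\Cref{thm:chernoff_kl}) and part~(b) from the classical exponential-tilt / change-of-measure argument. Throughout, I will use that $q \mapsto \KL{q}{p}$ is continuous on $[0,1]$, vanishes only at $q = p$, strictly decreases on $[0,p]$, and strictly increases on $[p,1]$. Write $\bm X_n = \frac{1}{n}\bin{n,p}$ under the law $\PP_p$ for brevity.

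For part~(a), I first note that if $p \in F$, then $\inf_{q \in F}\KL{q}{p} = 0$ and the inequality is trivial. Otherwise, I decompose $F = F_- \sqcup F_+$ into $F_- = F \cap [0,p)$ and $F_+ = F \cap (p,1]$, both closed subsets of $[0,1]$. If $F_+$ is non-empty, it attains its minimum at some $q_+ > p$, and by the monotonicity above, $\KL{q_+}{p} = \min_{q \in F_+}\KL{q}{p}$. Hence $\Prob{\bm X_n \in F_+} \leq \Prob{\bm X_n \geq q_+} \leq \exp\bc{-n\KL{q_+}{p}}$ by \Cref{thm:chernoff_kl}. The analogous argument applied to $F_-$ (with maximum element $q_-$), combined with a union bound and a final $\limsup_{n \to \infty} \frac{1}{n}\log(\cdot)$, retains only the smaller of the two divergences, which is exactly $\inf_{q \in F}\KL{q}{p}$.

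For part~(b), I would proceed by exponential tilting. Fix $q \in G$ and $\delta > 0$ with $(q-\delta, q+\delta) \subseteq G$, possible since $G$ is open; the edge cases $q \in \{0, 1\}$ are handled directly since the all-ones or all-zeros outcome already has probability $p^n$ or $(1-p)^n$, matching the target exponent $-\KL{1}{p}$ or $-\KL{0}{p}$. Let $\PP_q$ be the law of $n$ i.i.d.\ Bernoullis of parameter $q$, and write $k$ for the number of successes. The Radon--Nikodym derivative satisfies
\[\frac{\rd\PP_p}{\rd\PP_q}(k) = \bc{\frac{p}{q}}^k \bc{\frac{1-p}{1-q}}^{n-k}.\]
On the event $\cE_n = \{\,|\bm X_n - q| \leq \delta/2\,\}$, substituting $k = nq + O(n\delta)$ and Taylor-expanding the logarithm yields $\log\frac{\rd\PP_p}{\rd\PP_q}(k) \geq -n\KL{q}{p} - n\,\eta(\delta)$ with $\eta(\delta) \to 0$ as $\delta \to 0$. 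Then, using $\cE_n \subseteq \{\bm X_n \in G\}$,
\[\Prob{\bm X_n \in G} \;\geq\; \Erw_{\PP_q}\brk{\teo{\cE_n}\cdot \frac{\rd\PP_p}{\rd\PP_q}} \;\geq\; \exp\bc{-n\KL{q}{p} - n\eta(\delta)} \cdot \PP_q(\cE_n),\]
and $\PP_q(\cE_n) \to 1$ by the weak law of large numbers under $\PP_q$. Taking $\liminf_{n \to \infty} \frac{1}{n}\log$, then letting $\delta \to 0$, and finally taking the supremum over $q \in G$ produces $-\inf_{q \in G}\KL{q}{p}$, as required.

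The main obstacle is the uniform log-likelihood expansion on $\cE_n$ for part~(b), i.e., quantifying $\eta(\delta) \to 0$ cleanly and dispatching the boundary cases $q \in \{0,1\}$; once the change-of-measure is set up, the rest is routine. The upper bound in part~(a) is essentially immediate given \Cref{thm:chernoff_kl} and the monotonicity of $\KL{\cdot}{p}$ on each side of $p$.
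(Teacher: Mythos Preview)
The paper does not supply its own proof of this statement: \Cref{thm:cramers} is stated as a known result with a citation to Dembo--Zeitouni, so there is no in-paper argument to compare against. Your proposal is a correct and self-contained proof along textbook lines.

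A few comments on the details. For part~(a), your reduction to the two one-sided Chernoff bounds via the monotonicity of $q \mapsto \KL{q}{p}$ on each side of $p$ is clean and fully rigorous; note only that \Cref{thm:chernoff_kl} is stated for $q$ strictly between $0$ and $1$, so the boundary cases $q_+ = 1$ or $q_- = 0$ should be dispatched directly (e.g.\ $\Prob{\bm X_n = 1} = p^n = \exp(-n\KL{1}{p})$), which you implicitly do in part~(b) anyway. For part~(b), the exponential-tilting argument is the standard one; your control of $\eta(\delta)$ is straightforward since $\log\frac{\rd\PP_p}{\rd\PP_q}(k)$ is affine in $k/n$ with slope $\log\frac{p(1-q)}{q(1-p)}$, so on $\cE_n$ the deviation from $-n\KL{q}{p}$ is at most $\frac{n\delta}{2}\bigl|\log\frac{p(1-q)}{q(1-p)}\bigr|$, which indeed vanishes linearly in $\delta$. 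The only thing to be slightly careful about is that ``open'' here means open in the subspace topology of $[0,1]$, so that the endpoints $0,1$ may belong to $G$ without any surrounding interval in $\RR$; your direct handling of those points covers this.
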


\begin{theorem}[FKG inequality, cf.\ Thm.\ 2.12 in \cite{JLRRandomGraphs00}]\label{thm:fkg}
    If the random variables $\bm X_1$ and $\bm X_2$ are two increasing or two decreasing functions of $\bm Y = (\bm Y_i)_{i \in n} \in \{0, 1\}^n$ where the $\bm Y_i$ are independent indicator random variables,
        then
        \[\Exp{\bm X_1 \bm X_2} \geq \Exp{\bm X_1}\Exp{\bm Y_1}.\]
\end{theorem}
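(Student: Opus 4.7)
The plan is to prove the statement (which I read as $\Exp{\bm X_1 \bm X_2} \geq \Exp{\bm X_1}\Exp{\bm X_2}$, fixing the evident typo) by induction on $n$, the number of independent Bernoulli coordinates. Because the inequality is invariant under simultaneously replacing $\bm X_1,\bm X_2$ by $-\bm X_1,-\bm X_2$, I may assume both are increasing; the two-decreasing case then follows automatically.

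For the base case $n=1$, both random variables are functions $f,g:\{0,1\}\to\RR$ of a single Bernoulli $\bm Y_1$ with $\Prob{\bm Y_1=1}=p$. Monotonicity says $(f(1)-f(0))(g(1)-g(0))\geq 0$. A direct expansion gives
\begin{equation*}
\Exp{f(\bm Y_1)g(\bm Y_1)}-\Exp{f(\bm Y_1)}\Exp{g(\bm Y_1)} = p(1-p)\bigl(f(1)-f(0)\bigr)\bigl(g(1)-g(0)\bigr)\geq 0,
\end{equation*}
which settles the case $n=1$.

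For the inductive step, assume the inequality holds on $\{0,1\}^{n-1}$ and consider $\bm X_1,\bm X_2$ as increasing functions of $\bm Y=(\bm Y_1,\ldots,\bm Y_n)$. Conditioning on $\bm Y_n$, the maps $(\bm Y_1,\ldots,\bm Y_{n-1})\mapsto \bm X_k(\bm Y_1,\ldots,\bm Y_{n-1},\bm Y_n)$ are increasing for each fixed value of $\bm Y_n$, and $\bm Y_1,\ldots,\bm Y_{n-1}$ remain independent Bernoullis. The inductive hypothesis yields
\begin{equation*}
\ExpCond{\bm X_1 \bm X_2}{\bm Y_n}\;\geq\;\ExpCond{\bm X_1}{\bm Y_n}\cdot \ExpCond{\bm X_2}{\bm Y_n}.
\end{equation*}
Now observe that $h_k(\bm Y_n):=\ExpCond{\bm X_k}{\bm Y_n}$ is an increasing function of the single Bernoulli $\bm Y_n$: for $y<y'$, $h_k(y')-h_k(y)$ equals the expectation of the nonnegative quantity $\bm X_k(\cdot,y')-\bm X_k(\cdot,y)$ over the independent coordinates. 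Applying the $n=1$ base case to $h_1,h_2$ and using the tower property then gives
\begin{equation*}
\Exp{\bm X_1 \bm X_2} = \Exp{\ExpCond{\bm X_1 \bm X_2}{\bm Y_n}} \geq \Exp{h_1(\bm Y_n)\,h_2(\bm Y_n)} \geq \Exp{h_1(\bm Y_n)}\Exp{h_2(\bm Y_n)} = \Exp{\bm X_1}\Exp{\bm X_2},
\end{equation*}
completing the induction.

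The only real subtlety is verifying that after integrating out $\bm Y_1,\ldots,\bm Y_{n-1}$ the resulting conditional expectations $h_k(\bm Y_n)$ are still monotone in $\bm Y_n$; this is where independence of the coordinates is used (it allows the coupling that compares $\bm X_k(\cdot,y)$ and $\bm X_k(\cdot,y')$ pointwise under the same distribution on the first $n-1$ coordinates). Integrability is not an issue since in all our applications the $\bm X_k$ are bounded (typically indicators or bounded monotone functions of finitely many Bernoullis). No step requires a hypothesis beyond what is stated.
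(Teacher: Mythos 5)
Your proof is correct. Note, however, that the paper does not actually prove this statement: it is stated in the appendix ``Basic Probabilistic Toolbox'' purely as a citation to Theorem~2.12 of Janson--{\L}uczak--Ruci\'nski, so there is no in-paper argument to compare against. What you have written is the standard Harris-inequality proof for product measures on $\{0,1\}^n$: an induction on coordinates with a one-variable base case via the covariance identity
\[
\Exp{fg}-\Exp{f}\Exp{g}=p(1-p)\bigl(f(1)-f(0)\bigr)\bigl(g(1)-g(0)\bigr),
\]
followed by conditioning on the last coordinate and observing that the conditional expectations remain monotone. You also correctly read past the typo in the displayed inequality (the right-hand side should be $\Exp{\bm X_1}\Exp{\bm X_2}$, not $\Exp{\bm X_1}\Exp{\bm Y_1}$), and the reduction from two-decreasing to two-increasing via negation is fine. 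The one small remark is that the paper's subsequent application (\cref{lem:atozero}) applies the inequality to a \emph{product of several} increasing functions, not just two; this follows by iterating the two-function statement you proved, but is worth flagging since the statement as quoted is the two-function version.
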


\section{Properties of the Threshold $C$}

Here, we determine the exact value of the threshold $C$ we defined in \cref{sec:model}.

\begin{lemma} \label{lem:c-thr}
    For any valid noisy channel $\vp$, as a function of $c$, $ \min\cbc{\KL{c}{\pflipoi}, \KL{c}{\pflipii}}$ has a unique maximizer $C=\ln\bc{\frac{p_{00}}{p_{10}}}/\ln\bc{\frac{p_{11}p_{00}}{p_{01}p_{10}}}$ in $[\pflipoi, \pflipii]$. Further, $C\in (\pflipoi, \pflipii)$ and
     $\KL{C}{\pflipoi} = \KL{C}{\pflipii}\in (0,\infty)$.
\end{lemma}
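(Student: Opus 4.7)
The plan rests on the elementary convex-analytic fact that $\KL{\cdot}{q}$ is strictly convex on $(0,1)$ with its unique zero at $c=q$, hence strictly increasing on $[q,1)$ and strictly decreasing on $(0,q]$. First I would unpack the validity condition $\pflipoi+\pflipio<1$: it rearranges to $\pflipoi<1-\pflipio=\pflipii$ and symmetrically to $\pflipio<1-\pflipoi=\pflipoo$, so the interval $[\pflipoi,\pflipii]$ is nondegenerate and both logarithms $\ln(\pflipii/\pflipoi)$ and $\ln(\pflipoo/\pflipio)$ are strictly positive. These positivity facts will be reused later.

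Next I would observe that on $[\pflipoi,\pflipii]$ the function $c\mapsto\KL{c}{\pflipoi}$ is continuous, strictly increasing, and takes the value $0$ at the left endpoint and a strictly positive value at the right endpoint; symmetrically $c\mapsto\KL{c}{\pflipii}$ is strictly decreasing with a strictly positive value at $\pflipoi$ and value $0$ at $\pflipii$. Hence their difference is strictly increasing, goes from a strictly negative to a strictly positive value, and has a unique zero $C$ in the open interval $(\pflipoi,\pflipii)$. On $[\pflipoi,C]$ the pointwise minimum equals $\KL{c}{\pflipii}$, which is decreasing; on $[C,\pflipii]$ it equals $\KL{c}{\pflipoi}$, which is increasing. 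Thus the minimum is uniquely maximized at $C$, and its value there is $\KL{C}{\pflipoi}=\KL{C}{\pflipii}$, strictly positive because $C\neq\pflipoi$ and finite because $C,\pflipoi\in(0,1)$.

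The remaining task is just to identify $C$ explicitly. Writing out $\KL{c}{\pflipoi}=\KL{c}{\pflipii}$, the common terms $c\log c+(1-c)\log(1-c)$ cancel, leaving
\begin{align*}
-c\log\pflipoi-(1-c)\log\pflipoo=-c\log\pflipii-(1-c)\log\pflipio,
\end{align*}
which rearranges to $c\log(\pflipii/\pflipoi)=(1-c)\log(\pflipoo/\pflipio)$, and therefore
\begin{align*}
C=\frac{\log(\pflipoo/\pflipio)}{\log(\pflipii/\pflipoi)+\log(\pflipoo/\pflipio)}=\frac{\log(\pflipoo/\pflipio)}{\log\bc{\pflipii\pflipoo/(\pflipoi\pflipio)}},
\end{align*}
matching the claimed formula. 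The denominator is strictly positive by the validity observations above, so $C$ is well-defined.

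There is no real obstacle here: the argument is short and the only care required is in bookkeeping the two logarithms so that the validity condition $\pflipoi+\pflipio<1$ is used precisely where needed to guarantee the denominator is nonzero and positive, and to ensure the interval $[\pflipoi,\pflipii]$ is nontrivial.
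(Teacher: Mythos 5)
Your proposal is correct and follows essentially the same route as the paper's proof: establish that $\KL{\cdot}{p_{01}}$ is strictly increasing and $\KL{\cdot}{p_{11}}$ strictly decreasing on the interval, infer a unique crossing point $C$ in the open interval, conclude that $C$ is the unique maximizer of the pointwise minimum, and then solve $\KL{C}{p_{01}}=\KL{C}{p_{11}}$ to obtain the explicit formula. The only cosmetic difference is that the paper derives the monotonicity from a direct computation of $\partial_p \KL{p}{q}$ while you invoke it as a consequence of strict convexity of the KL divergence.
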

\begin{proof}
Recall that 
$\KL{p}{q} = p\log\frac{p}{q} + (1-p)\log\frac{1-p}{1-q}$
is a continuous differentiable function of $p$ in $(0,1)$. Then,
\begin{align*}
    \frac{\partial \KL{p}{q} }{\partial p}=\ln\frac{p(1-q)}{q(1-p)}.
\end{align*}
Hence, given $q\in(0,1)$, $\KL{p}{q}$ is a strictly decreasing function in $[0,q]$ and a strictly decreasing function on $[q,1]$, with a minimum $0$ $0$ at $p=q$. Consequently, $\KL{c}{\pflipoi}$ is a strictly increasing continuous function in $[\pflipoi, \pflipii]$, with a minimum $0$ at $c=\pflipoi$; $\KL{c}{\pflipii}$ is a strictly decreasing continuous function in $[\pflipoi, \pflipii]$, with a minimum $0$ at $c=\pflipii$. Hence, the two function $\KL{c}{\pflipoi} $ and $ \KL{c}{\pflipii}$ intersect at precisely one point $C$ in $[\pflipoi, \pflipii]$ and $C\in (\pflipoi, \pflipii)$. Further, this $C$ is the unique maximizer of $\min\cbc{\KL{c}{\pflipoi}, \KL{c}{\pflipii}}$ in $[\pflipoi, \pflipii]$ by the monotonicity of $\KL{c}{\pflipoi} $ and $ \KL{c}{\pflipii}$ in this interval. Since $\KL{c}{\pflipoi} = \KL{c}{\pflipii}$, we compute that
\begin{align*}
   &&C\log\frac{C}{\pflipoi} + (1-C)\log\frac{1-C}{1-\pflipoi}&=C\log\frac{C}{\pflipii} + (1-C)\log\frac{1-C}{1-\pflipii},\\
   \Leftrightarrow&&(1-C)\ln\frac{\pflipoo}{\pflipio}=(1-C)\ln\frac{1-\pflipoi}{1-\pflipii}&=C\ln\frac{\pflipii}{\pflipoi},\\
   \Leftrightarrow&&C&=\frac{\ln\bc{\frac{p_{00}}{p_{10}}}}{\ln\bc{\frac{p_{11}p_{00}}{p_{01}p_{10}}}},
\end{align*}
as desired.
\end{proof}

\section{Full Proofs for Non-Adaptive Group Testing}

\setcounter{subsection}{-1}
\subsection{Proof of Lemma \ref{lem:geniebetter}: genie estimator bests MAP estimator} \label{apx:geniebetter}
\geniebetter*
\begin{proof}
Let $\map$ be the MAP estimator defined as in \Cref{def:map} and $\genie$ be the genie estimator defined as in \Cref{def:genie}. Since we only deal with test design $G$ and observed tests $\otests$ in the proof, we simplify $\map^{G,\otests}$ as $\map$ and $\genie^{G,\otests}(i)$ as $\genie$. We further denote $\map(-i)$ as the MAP estimator $\map$ restricting its domain on $[n]\backslash \cbc{i}$. 
    We first notice that $\map(i)=\genie^{G,\otests,\map(-i)}(i)$, where $\genie^{G,\otests,\map(-i)}(i)$ is the output of the genie estimator on the $i$th coordinate conditioning on $\pats_{-i}=\map(-i)$.
    
    Indeed, given $G,\otests$, assume that for some $i\in [n]$, $\map(i)\neq\genie(i)$ when conditioning on $\pats_{-i}=\map(-i)$. Then, either
    \begin{align}\label{eq:ass-map-i}
        \map(i)\neq\argmax_{s\in \{0,1\}} \ProbCond{\pats(i) = s }{ G,\otests,\pats_{-i}=\map(-i)},
    \end{align}
or both $0$ and $1$ maximize
\begin{align}\label{eq:ass-map-i-2}
    \ProbCond{\pats(i) = s }{ G,\otests,\pats_{-i}=\map(-i)}, 
\end{align}while $\genie^{G,\otests,\map(-i)}(i)=1$.
    
    If \eqref{eq:ass-map-i} holds, then \eqref{eq:genie} yields that
    \begin{align}\label{eq:cond--i-map-less-genie}
        \ProbCond{\pats(i) = \map(i) }{ G,\otests,\pats_{-i}=\map(-i)}
        <\ProbCond{\pats(i) = \genie(i) }{ G,\otests,\pats_{-i}=\map(-i)}.
    \end{align}
If \eqref{eq:ass-map-i-2} holds, then 
\begin{align}\label{eq:cond--i-map-less-genie-2}
        \ProbCond{\pats(i) = \map(i) }{ G,\otests,\pats_{-i}=\map(-i)}
        =\ProbCond{\pats(i) = \genie(i) }{ G,\otests,\pats_{-i}=\map(-i)},
    \end{align}
and $\genie^{G,\otests,\map(-i)}(i)=0$.

    Let 
    \begin{align*}
        \map^{\ast(i)}(j)=\begin{cases}
            \map(j),\quad&j\neq i;\\
            \genie^{G,\otests,\map(-i)}(i),\quad&j=i.
        \end{cases}
    \end{align*}

Then, by the definition of the conditional probability, 
    \begin{align*}
    \ProbCond{\pats = \map}{ G,\otests}
    &= \ProbCond{\pats(i)=\map(i)}{ G,\otests,\pats_{-i}=\map(-i)} \ProbCond{\pats_{-i}=\map(-i)}{G,\otests},
\end{align*}
and
\begin{align*}
    \ProbCond{\pats = \map^{\ast(i)}}{ G,\otests}
    &= \ProbCond{\pats(i)=\genie(i)}{ G,\otests,\pats_{-i}=\map(-i)}
        \ProbCond{\pats_{-i}=\map(-i)}{ G,\otests}.
\end{align*}
Since $\ProbCond{\pats = \map}{G,\otests}=\max_{\hat{\sigma}\in \cbc{0,1}^n} \ProbCond{\hat{\sigma}=\pats}{ G,\otests}\geq 2^{-n}>0$, we have 
\begin{align*}
    \ProbCond{\map(-i)=\pats_{-i}}{ G,\otests}>0.
\end{align*}
Hence, under \eqref{eq:ass-map-i}, we conclude from \eqref{eq:cond--i-map-less-genie} that
    \begin{align*}
        \ProbCond{\pats = \map}{ G,\otests}<\ProbCond{\pats = \map^{\ast(i)}}{ G,\otests},
    \end{align*}
    which contradicts \eqref{eq:map-argmax}. On the other hand, under \eqref{eq:ass-map-i-2}, we conclude from \eqref{eq:cond--i-map-less-genie-2} that
    \begin{align*}
        \ProbCond{\pats = \map}{ G,\otests}=\ProbCond{\pats = \map^{\ast(i)}}{ G,\otests},
    \end{align*}
while $\map^{\ast(i)}$ has one more zero compared to $\map$, which contradicts \Cref{def:map}.
    
    Consequently, for any $i\in [n]$, 
    \begin{align}\label{eq:map-i}
        \map(i)=\genie^{G,\otests,\map(-i)}(i).
    \end{align}
Finally, by \eqref{eq:map-i},
\begin{align*}
\Prob{\pats = \map}&=\Prob{ \bigwedge_{i \in [n]} \pats(i)=\map(i) \wedge \pats_{-i}=\map(-i)}
\\ &\leq \Prob{\bigwedge_{i\in [n]} \pats(i)=\genie^{G,\otests,\map(-i)}(i)=\genie^{G,\otests,\pats_{-i}}(i)}
\\ &\leq \Prob{\pats = \genie},
\end{align*}
as desired.
\end{proof}

\subsection{Impossibility}\label{apx:nonad-imp}

\subsubsection{Proof of Lemma \ref{lem:geniethr}: characterization of genie estimator} \label{apx:lem:geniethr}
\geniethr*
\begin{proof} The genie estimator declares individual $i$ as healthy when $g_i(\pats)>0$ if the conditional probability of the individual being healthy is higher than that of being infected.
If the probabilities are equal, the genie estimator also declares it as healthy by definition.
This condition turns out to be
    \begin{align}
        &&\ProbCond{ \pats_i =1}{ \good_i(\pats), \good_i^+(\pats)} &\leq \ProbCond{ \pats_i =0 }{ \good_i(\pats), \good_i^+(\pats)} \\
        \Leftrightarrow&& p_{11}^{\good_i^+(\pats)}p_{10}^{\good_i(\pats)-\good_i^+(\pats)}\alpha &\leq p_{01}^{\good_i^+(\pats)}p_{00}^{\good_i(\pats)-\good_i^+(\pats)}(1-\alpha) \\
        \Leftrightarrow&& \bc{\frac{p_{11}}{p_{01}}}^{\good_i^+(\pats)} \bc{\frac{p_{10}}{p_{00}}}^{\good_i(\pats)} \bc{\frac{p_{10}}{p_{00}}}^{-\good_i^+(\pats)} &\leq \frac{1-\alpha}{\alpha} \\
        \Leftrightarrow&& \good_i^+(\pats)\ln\bc{\frac{p_{11}p_{00}}{p_{01}p_{10}}} -\good_i(\pats)\ln\bc{\frac{p_{00}}{p_{10}}} &\leq \ln\bc{\frac{1-\alpha}{\alpha}}
    \end{align}
    from which the claim follows immediately.
\end{proof}

\subsubsection{Proof of Lemmas \ref{lem:degtest} and \ref{lem:modifiedsetup}: modification of test design} \label{apx:lem:degtest}
\degtest*
\begin{proof}
Let $L = \cbc{a \in F \mid \abs{\partial a} \geq \log^2(n) }$ be the set of tests larger than $\log^2(n)$.
We know that $\abs{L} \leq m \leq cn\log(n)$ for some constant $c > 0$ and sufficiently large $n$ since $m = \Oh(n\log(n))$.
So
    \begin{align*}
         \Prob{\bigvee_{a \in L} \abs{ \cbc{i\in \partial a \mid \pats(i)=1}}\leq 1} &\leq \abs{L} \bc{ (1-\alpha)^{\log^2(n)} + \alpha(1-\alpha)^{\log^2(n)-1}\log^2(n)} \\
        & \leq  cn\log(n) \bc{ (1-\alpha)^{\log^2(n)} + \alpha(1-\alpha)^{\log^2(n)}\log^2(n)} \\
        &=  (1-\alpha)^{\log^2(n)} \cdot cn\log(n) \bc{ 1 + \alpha\log^2(n)} \\
        &=  \frac{cn\log(n) \bc{ 1 + \alpha\log^2(n)}}{(1-\alpha)^{-\log^2(n)}} \\
        &\leq \frac{2cn \log^3(n)}{n^{\log(n)(-\log(1-\alpha))}} \\
        &= o(1),
    \end{align*}
    as claimed.
\end{proof}

\label{apx:lem:modifiedsetup}

\modifiedtest*
\begin{proof}
    The first inequality of \Cref{eq:easierlong} follows from \Cref{lem:geniebetter},  the last inequality follows from \Cref{lem:degtest}.
    Let $\pats_J$ be the infection vector for the high degree individuals in $J$. To prove the second inequality we couple the tests in $G'$ and $G''$ is such a way that the results of test $a$ in $G'$ and the corresponding test $a'$ in $G''$ are the same if all of individuals in $J\cap \partial a$ are healthy. Let $\otests'$ be the observed test results for $G'$ and $\otests''$ be the corresponding results in $G''$.
    \begin{align}
    \Prob{\pats = \map}
        \leq \sum_{\sigma_J} \ProbCond{\pats[\nnew]=\map[\nnew]}{\pats_J = \sigma_J} \Prob{\pats_J = \sigma_J}
    \end{align}
    Let use write $m' = \abs{F(G')}$ be the number of tests in $G'$ (and analogously write $m''$, $m_{\gdep}$ for the number of tests in $G''$ and $\modGraph$).
    To upper bound this expression we can just upper bound the conditional probability in it the following way,
    \begin{align}
        \ProbCond{\pats[\nnew]=\map[\nnew]}{\pats_J = \sigma_J} \leq \sup_{\hat{\sigma}:\{0, 1\}^{m'} \to \cbc{0,1}^{\nnew}} \ProbCond{\pats[\nnew]=\hat{\sigma}(\otests')}{\pats_J = \sigma_J}
    \end{align}
    Now, given $\pats_J$, if a high degree individual in a test $a$ in $\otests'$ is positive, then the test result does not depend on $\pats[\nnew]$, if none are positive then through the coupling the result is the same as of $a'$ in $\otests''$.
    \begin{align}
        \sup_{\hat{\sigma}: \{0,1\}^{m'} \to \cbc{0,1}^{\nnew}} \ProbCond{\pats[\nnew]=\hat{\sigma}(\otests')}{\pats_J = \sigma_J}
        &= \sup_{\hat{\sigma}:\{0,1\}^{m''} \to \cbc{0,1}^{\nnew}} \ProbCond{\pats[\nnew]=\hat{\sigma}(\otests'')}{\pats_J = \sigma_J} \\
        &= \sup_{\hat{\sigma}:\{0,1\}^{m''} \to \cbc{0,1}^{\nnew}} \Prob{\pats[\nnew]=\hat{\sigma}(\otests'')} \\
        &\leq \sup_{\hat{\sigma}:\{0,1\}^{m_{\gdep}} \to \cbc{0,1}^{\nnew}} \Prob{\pats[\nnew]=\hat{\sigma}(\otests_{\gdep})} \\
        &= \Prob{\pats[\nnew]=\map[\nnew]} \qedhere
    \end{align}
\end{proof}

\subsubsection{Proof of Lemma \ref{lem:genie_correct_upper_bound}: bound on probability of classifying distant sets}
\label{apx:lem:genie_correct_upper_bound}
\geniecorrectupperbound*
\begin{proof}
    First, by definition of $\misD$, conditioned on $\pats$,
        the genie estimator is correct only if $\abs{\misD = 0}$,
        so that
        \[\ProbCond{\pats[\nnew] = \genie^{\modGraph}[\nnew]}{\pats} \leq \ProbCond{\abs{\misD} = 0}{\pats}.\]
    Since the good tests of individuals in a distant set of tests $D$ don't overlap, the classifications of the genie estimator for individuals in $D$---and hence the events $(i \in \misD)_{i \in D}$---are independent.
    So $\abs{\misD}$ (conditioned on $\pats$) is a sum of independent indicator random variables,
        and $\ExpCond{\abs{\misD}}{\pats} \geq \VarCond{\abs{\misD}}{\pats}$.
    And hence, by Chebyshev's inequality,
        \[\ProbCond{\abs{\misD}=0}{\pats} \leq \frac{\VarCond{\abs{\misD}}{\pats}}{\ExpCond{\abs{\misD}}{\pats}^2} \leq \frac{1}{\ExpCond{\abs{\misD}}{\pats}}.\qedhere\]
\end{proof}

\subsubsection{Proof of Lemma \ref{lem:genie_problow}: lower bound on misclassification of individual}
\label{apx:lem:genie_problow}
\genieproblow*
\begin{proof}
Let $\modgood_i(\pats)$ (resp., $\modgood_i^-(\pats)$/$\modgood_i^+(\pats)$) be the number of good (resp., negatively/positively displaying good) tests of $i$ in $\modGraph$ under $\pats$.
Then by \Cref{def:genie} 
    \begin{align}
        \ProbCond{\genie^{\modGraph}(i)\neq \pats(i) }{\modgood_i(\pats)} &\geq \ProbCond{ \modgood_i^+(\pats) < \modgood_i C+\tconst \wedge \pats(i)=1 }{ \modgood_i(\pats)} \\
        &\quad + \ProbCond{ \modgood_i^+(\pats) \geq \modgood_i C+\tconst \wedge \pats(i)=0 }{ \modgood_i(\pats)} \\
        &= \ProbCond{ \modgood_i^+(\pats) < \modgood_i C+\tconst }{ \pats(i)=1, \modgood_i(\pats)} \cdot \alpha \\
        &\quad + \ProbCond{ \modgood_i^-(\pats) \leq \modgood_i (1-C)-\tconst }{ \pats(i)=0, \modgood_i(\pats)} \cdot (1-\alpha)
    \end{align}
    By Cram\'er's theorem (\cref{thm:cramers}) we get for any $\delta\in (0,C-p_{01})$, $\delta'\in(0, p_{11}-C)$:
    \begin{align}
        \lim_{\modgood_i(\pats) \to \infty} \frac{1}{\modgood_i(\pats)}\log \ProbCond{\modgood_i^+(\pats) < \modgood_i(\pats) (C- \delta/2) }{ \pats(i)=1,\modgood_i(\pats)} &= - \KL{C-\delta/2}{p_{11}}  \\
        \lim_{\modgood_i(\pats) \to \infty} \frac{1}{\modgood_i(\pats)}\log\ProbCond{\modgood_i^-(\pats) < \modgood_i(\pats) (1-C- \delta'/2) }{ \pats(i)=0,\modgood_i(\pats)} &= - \KL{1-C-\delta'/2}{p_{00}} 
    \end{align}
    And since  $C\in (p_{01},p_{11})$ for sufficiently large $\modgood_i(\pats)$ we get:
    \begin{align}
        \ProbCond{\modgood_i^+(\pats) < \modgood_i(\pats) (C- \delta/2) }{ \pats(i)=1,\modgood_i(\pats)} &\geq \exp\bc{-\modgood_i(\pats) \KL{C-\delta}{p_{11}} } \\
         \ProbCond{\modgood_i^-(\pats) < \modgood_i(\pats) (1-C- \delta'/2) }{ \pats(i)=0,\modgood_i(\pats)} &\geq \exp\bc{-\modgood_i(\pats) \KL{1-C-\delta'}{p_{00}} }
    \end{align}
    Since $\tconst$ is a constant there exists a $g_0$ s.t. for any $\modgood_i>g_0$ we get $\min\cbc{\abs{\delta \modgood_i}, \abs{\delta' \modgood_i}}>\abs{\tconst}$.
    \begin{align}
        \ProbCond{\modgood_i^+(\pats) < \modgood_i(\pats) C+\tconst }{ \sigma(i)=1, \modgood_i(\pats)} &> \exp\bc{-\modgood_i(\pats) \KL{C-\delta}{p_{11}}} \\
        \ProbCond{\modgood_i^-(\pats) < \modgood_i(\pats) (1-C)+\tconst }{ \sigma(i)=0, \modgood_i(\pats)} &> \exp\bc{-\modgood_i(\pats) \KL{1-C-\delta'}{p_{00}}}
    \end{align}
    Since the Kullback--Leibler divergence is a continuous function and $\KL{C}{p_{11}}=\KL{1-C}{p_{00}}$ we can pick $\delta'$ s.t. $\KL{C-\delta}{p_{11}}=\KL{1-C-\delta'}{p_{00}}$.
    We also know that $\modgood_i\geq \eta\log(n)$ (since the modified test design contains at least that many individual tests per individual). Thus for sufficiently large $n$, $\modgood_i$ also becomes sufficiently large. In combination we get for any $\delta >0$ that
    \begin{align}
        \ProbCond{\genie^{\modGraph}(i)\neq \pats(i) }{ \modgood_i(\pats)} \geq \exp\bc{-\modgood_i \KL{C-\delta}{p_{11}}},
    \end{align}
        from which the claim follows immediately.
\end{proof}

\subsubsection{Proof of Lemma \ref{lem:atozero}: existence of difficult-to-classify distant set}

\label{apx:lem:atozero}
\lematozero*
\begin{proof}
    Let $b = \KL{C - \delta}{p_{11}}$,
        noting that $b > \KL{C}{p_{11}} = \beta$.
    Let further $\bm A_i=\exp\bc{-b \good_i(\pats, \modGraph)}$ and $\bm A=\sum_{i\in D} \bm A_i$. 
    Since $\bm A_i\leq 1$, it follows that $\Exp{\bm A_i} \geq \Exp{\bm A_i^2}$. 
    Using Chebyshev's inequality we get the following for every $y>0$:
    \begin{align}
        \Prob{\abs{\vA-\Exp{\vA}} \geq y \Exp{\vA}} \leq \frac{\Var{\vA}}{y^2\Exp{\vA}^2}    = \frac{\sum_{i\in D} \Var{\vA_i}}{y^2\Exp{\vA}^2}  \leq \frac{\sum_{i\in D} \Exp{\vA_i^2}}{y^2\Exp{\vA}^2} \leq \frac{\sum_{i\in D} \Exp{\vA_i}}{y^2\Exp{\vA}^2} = \frac{1}{y^2\Exp{\vA}}
    \end{align}
    Therefore, if $1/\Exp{\vA} \leq \delta''$ with $y=1/2$, we have
    \[\Prob{\vA \leq \frac{1}{2\delta''}}
        \leq \Prob{\vA \leq \frac{1}{2}\Exp{\vA}}
        \leq \frac{4}{\Exp{\vA}}
        \leq 4\delta'',\]
        so it is enough to show $1/\Exp{\vA} \leq \delta''$.
To prove that for arbitrarily small $\delta''$ there exists an $n_0$ s.t. \cref{eq:atozero} holds, we need to show that $1/\Exp{\vA}$ vanishes as $n$ goes to infinity. 

To this end, we bound $\Exp{\vA_i}$ and thus $\Exp{\vA}$ from below.
Expanding $\good_i(\pats, \modGraph) = \sum_{a \in \partial i} \vecone\cbc{\textup{$a$ is good for $i$}}$, we see that
\begin{equation}\label{eqn:exp_ai_good_fns}
    \Exp{\vA_i} = \Exp{\exp\bc{-b \good_i(\pats, \modGraph)}}
    = \Exp{\prod_{a \in \partial i} \exp\bc{-b \vecone\cbc{\textup{$a$ is good for $i$}}}}.
\end{equation}
Now increasing values in $\pats$, i.e., turning individuals infected, can only turn previously good tests for $i$ into not good ones.
So the functions $\exp(-b \vecone\cbc{\textup{$a$ is good for $i$}}$ (for $a \in \partial i$) are all increasing in $\pats$, and so, by \eqref{eqn:exp_ai_good_fns} and the FKG inequality (\cref{thm:fkg}),
\begin{equation}\label{eqn:exp_ai_good_fns_2}
    \Exp{\vA_i} \geq  \prod_{a \in \partial i} \Exp{\exp\bc{-b \vecone\cbc{\textup{$a$ is good for $i$}}}} \, .
\end{equation}
And since a test $a$ is good for $i$ if none of the individuals in $\partial a \setminus \{i\}$ is infected,
\begin{align}
    \Exp{\vA_i}
       &\geq \prod_{a\in \partial i} (1-(1-\alpha)^{\abs{\partial a}-1}(1-\eul^{-b}))
        = \exp\bc{\sum_{a\in \partial i} \log(1-(1-\alpha)^{\abs{\partial a}-1}(1-\eul^{-b}))}. \label{eq:bound_exp_ai_explog}
    \end{align}
For now, let us claim the following:
\begin{restatable}{claim}{davrg}\label{lem:davrg} 
    Say $b=\KL{C-\delta}{ p_{11}}$. For sufficiently small $\delta > 0$ and all sufficiently large $n$, we can choose a distant set $D$ such that for all $i \in D$ ,
    \begin{align}
        \sum_{a: a\in \partial i} -\log\bc{1-(1-\alpha)^{\abs{\partial a}-1}(1-\eul^{-b})} \leq -(1+\delta)\frac{\sum_{i\in [n]} \sum_{a\in \partial i} \log\bc{1-(1-\alpha)^{\abs{\partial a}-1}(1-\eul^{-b})}}{n}
    \end{align}
\end{restatable}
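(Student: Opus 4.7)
Abbreviate $T_i := \sum_{a \in \partial i}\bigl(-\log(1-(1-\alpha)^{|\partial a|-1}(1-\eul^{-b}))\bigr)$ and $\bar T := \tfrac{1}{n}\sum_{i \in [n]} T_i$, so the claim asks for a distant set $D$ with $T_i \leq (1+\delta)\bar T$ for every $i \in D$. For the surrounding argument in \cref{lem:atozero} to close we additionally need $|D|\cdot \eul^{-(1+\delta)\bar T} \to \infty$, so the effective goal is a distant $D$ of size $n^{1-o(1)}$ whose members all have small $T_i$. My plan is (i) a Markov step that isolates the many individuals with $T_i$ near the average, (ii) a greedy thinning to enforce distance, and (iii) an upper bound on $\bar T$ via the definition of $\mna$.

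\textbf{Steps 1 and 2 (constructing $D$).} Markov applied to the nonnegative values $(T_i)_{i \in [n]}$ bounds the ``bad'' set $B := \{i : T_i > (1+\delta)\bar T\}$ by $|B| \leq n/(1+\delta)$, so its complement $[n]\setminus B$ contains at least $\delta n/(1+\delta)$ individuals, each already satisfying the desired inequality. I build $D \subseteq [n] \setminus B$ greedily as in the main text: pick any remaining element $i$ into $D$ and delete every individual within graph-distance $4$ of $i$ in $\modGraph$. The degree bounds built into $\modGraph$ (individual degree at most $\log^4 n$ and test degree at most $\log^2 n$) cap each such $4$-ball by $\log^{12} n$, so the procedure yields $|D| \geq \delta n/\bigl((1+\delta)\log^{12} n\bigr) = n^{1-o(1)}$, with $T_i \leq (1+\delta)\bar T$ for every $i \in D$ by membership in $[n] \setminus B$.

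\textbf{Step 3 (controlling $\bar T$).} Swapping the order of summation,
\[
\sum_{i \in [n]} T_i \;=\; -\sum_{a \in F(\modGraph)} |\partial a|\,\log\bigl(1-(1-\alpha)^{|\partial a|-1}(1-\eul^{-b})\bigr) \;\leq\; m_\eta \cdot \sup_{w \in \NN^+} f_b(w),
\]
where $f_b(w) := -w \log\bigl(1-(1-\alpha)^{w-1}(1-\eul^{-b})\bigr)$ and $m_\eta := |F(\modGraph)| \leq m + n\eta\log n + n$. Because $f_b(w) \to 0$ as $w \to \infty$ (super-exponentially in $w$) the supremum is attained at some finite $w^*$; by \cref{obs:naalg_xi}(b) the maximizer at $b = \beta$ lies in $\{1,\dots,\lceil \alpha^{-1}\rceil\}$, and continuity of $b \mapsto f_b(w)$ on that finite set combined with the super-exponential tail yields $\sup_w f_b(w) \leq (1+\delta_1) \sup_w f_\beta(w) = (1+\delta_1)\,n\log(n)/\mna$ for any preassigned $\delta_1 > 0$ once $\delta$ is sufficiently small. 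Combining with $m \leq (1-\eps)\mna$ and $\mna = \Theta(n\log n)$, we obtain $\bar T \leq \bigl((1-\eps)(1+\delta_1) + O(\eta)\bigr)\log n \leq (1-\eps/2)\log n$ after choosing $\eta, \delta_1$ small enough.

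\textbf{Main obstacle.} The delicate part is bookkeeping the four small parameters $\eps, \delta, \delta_1, \eta$ so that, when substituted into the outer bound, $\Exp{\vA} \geq |D|\,\eul^{-(1+\delta)\bar T} \geq n^{\eps/2 - o(1)}/\operatorname{polylog}(n)$, the net exponent stays strictly positive. The order ``fix $\eps$, then shrink $\eta, \delta, \delta_1$ so that their combined contribution to the exponent is at most $\eps/10$'' handles this. The only other technicality is the uniform-in-$w$ control of $\sup_w f_b(w)$ needed for the continuity step, which is immediate because the optimal test degree is $O(1)$ and the tail of $f_b$ decays super-exponentially uniformly in $b$ near $\beta$.
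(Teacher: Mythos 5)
Your argument for the claim itself (Steps 1 and 2) is correct and takes a somewhat different, cleaner route than the paper's own proof. The paper builds $D$ by a greedy that always selects the remaining individual with the \emph{smallest} $T_i$ before deleting its $4$-ball in $\modGraph$, and then argues that every chosen $i$ satisfies $T_i\le(1+\delta)\bar T$ because $T_i$ is at most the $T$-value of every never-removed individual, hence at most their average, and the never-removed set has size at least $\nnew(1-1/\log n)$. You instead apply Markov's inequality once, up front, to isolate the $\ge \delta n/(1+\delta)$ individuals with $T_i\le(1+\delta)\bar T$, and only then thin that set by arbitrary greedy $4$-ball deletion; this decouples the ``most $T_i$ are small'' observation from the distance constraint, so the greedy no longer needs to be order-aware. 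Both arguments rest on the same two ingredients (nonnegativity of the summands, and the $\log^{12}n$ cap on $4$-balls coming from the degree bounds built into $\modGraph$), and both yield $|D|\ge n/\mathrm{polylog}(n)$. One small fix: build $D$ inside $([n]\setminus B)\cap[\nnew]$ rather than $[n]\setminus B$, since only individuals still present in $\modGraph$ can belong to a distant set and have $\good_i(\pats,\modGraph)$ defined; as $n-\nnew=O(n/\log n)=o(n)$ this does not affect your size bound. Finally, your Step 3, bounding $\bar T$ via $\mna$, is not part of the claim being proved here; that computation lives in the surrounding proof of \Cref{lem:atozero}, where the paper carries out essentially the same exchange of summation and optimization over test degrees.
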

Applying the claim to \eqref{eq:bound_exp_ai_explog} the claim's choice of $D$, we have
    \begin{align}
    \Exp{\vA_i}
       &\geq \exp\bc{(1+\delta) \frac{\sum_{j\in [n]} \sum_{a\in \partial j} \log(1-(1-\alpha)^{\abs{\partial a}-1}(1-\eul^{-b}))}{n}} \\
       &= \exp\bc{(1+\delta) \frac{ \sum_{a\in \partial j} \sum_{j\in [n]} \log(1-(1-\alpha)^{\abs{\partial a}-1}(1-\eul^{-b}))}{n}} \\
       &= \exp\bc{(1+\delta) \frac{ \sum_{a\in \partial j} \abs{\partial a} \log(1-(1-\alpha)^{\abs{\partial a}-1}(1-\eul^{-b}))}{n}} 
    \end{align}
    Consider that the modified test design has a total number of 
    $\mnew=m+ \lfloor\eta \log(n) \rfloor \leq ((1-\varepsilon)\cna(\alpha, \bm{p}) + \eta)n\log(n)$
tests, where $\cna = \mna / (n\log n)$.
Furthremore, let 
        $\cna' = \min_{t\in \mathbb{N}^+} \cbc{(- t \log(1-(1-\alpha)^{t-1}(1-\eul^{-b})))^{-1}}$.
Then
    \begin{align}
       \Exp{\vA_i} &\geq \exp\bc{(1+\delta) \frac{ -((1-\eps)\cna n\log(n) + \eta n\log(n))\frac{1}{\cna'}}{n}} \\ 
       &= \exp\bc{-(1+\delta)((1-\eps)\frac{\cna}{\cna'} + \frac{\eta}{\cna'})\log(n)}
    \end{align}
    Hence,
    \begin{align}
        \Exp{\vA} &\geq \frac{n}{\log^{13}(n)} \exp\bc{-(1+\delta)\bc{(1-\eps)\frac{\cna}{\cna'} + \frac{\eta}{\cna'}}\log(n)}\\
        &= \frac{n}{\log^{13}(n)} n^{-(1+\delta)\bc{(1-\eps)\frac{\cna}{\cna'} + \frac{\eta}{\cna'}}} \label{eq:erwA} \, .
    \end{align}
To analyze $\Exp{\vA}$ further, we need the following claim ensuring that $\cna/\cna'$ approaches $1$ as $\eta$ vanishes.
\begin{restatable}{claim}{claimcna}\label{claim:cna'}
   As $\delta$ vanishes, i.e., $\delta \downarrow 0$, that $\cna'$ approaches $\cna$, i.e., $\cna'\downarrow\cna$.
\end{restatable}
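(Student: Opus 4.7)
The plan is to recast both $\cna$ and $\cna'$ as reciprocals of the supremum, over positive integers, of a single continuous family of functions, and then argue via continuity of this family together with a uniform truncation of the supremum to a finite index range.

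First I would set $b(\delta) = \KL{C - \delta}{p_{11}}$ and define $\phi(b, t) = -t\log\bigl(1 - (1-\alpha)^{t-1}(1 - e^{-b})\bigr)$, so that $1/\cna = \sup_{t \in \NN^+} \phi(\beta, t)$ while $1/\cna' = \sup_{t \in \NN^+} \phi(b(\delta), t)$. Since $\KL{\cdot}{p_{11}}$ is continuous on $(0, 1)$ with value $\beta$ at $C$ (by \Cref{lem:c-thr}), $b(\delta) \to \beta$ as $\delta \downarrow 0$. For each fixed $t$, $\phi(\cdot, t)$ is continuous in its first argument, so $\phi(b(\delta), t) \to \phi(\beta, t)$ pointwise in $t$.

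Second, I would reduce the supremum to a finite set uniformly in $\delta$. Note that $\phi(b, 1) = b \geq \beta$, while for $b$ in a bounded neighbourhood of $\beta$ and $t$ large enough that $(1-\alpha)^{t-1}(1 - e^{-b}) \leq 1/2$, the elementary bound $-\log(1-x) \leq 2x$ for $x \in [0, 1/2]$ gives $\phi(b, t) \leq 2t(1-\alpha)^{t-1}(1 - e^{-b})$, which tends to $0$ as $t \to \infty$ uniformly in $b$ over such a neighbourhood. Hence there exists $T^* \in \NN^+$ such that, for all sufficiently small $\delta \geq 0$ and all $t > T^*$, $\phi(b(\delta), t) < \beta/2 \leq \phi(b(\delta), 1)$. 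Consequently, both suprema coincide with the corresponding maxima over the finite set $\{1, \ldots, T^*\}$.

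Third, since the maximum of finitely many continuous functions is continuous, $\max_{t \leq T^*} \phi(b(\delta), t) \to \max_{t \leq T^*} \phi(\beta, t) = 1/\cna$ as $\delta \downarrow 0$; inverting (both limits are strictly positive and finite) yields $\cna' \to \cna$. Monotonicity follows from the fact that $\KL{\cdot}{p_{11}}$ is strictly decreasing on $(0, p_{11})$ (\Cref{lem:c-thr}), so for $\delta > 0$ one has $b(\delta) > \beta$ and thus $\phi(b(\delta), t) \geq \phi(\beta, t)$ for every $t$, giving $\cna' \leq \cna$ with monotone convergence.

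The main obstacle is the uniform truncation $T^*$ in step two; once this is in place, the remainder is a routine continuity argument for the maximum of a finite family of continuous functions in a single continuously varying parameter.
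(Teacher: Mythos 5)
Your proof is correct and, notably, it supplies the one step that the paper's own argument glosses over. The paper, like you, observes that $\cna'<\cna$ and that the minimum defining $\cna$ is attained at some $t_0\in\NN^+$, and then asserts
\begin{align*}
\liminf_{\delta\downarrow 0}\cna'\;\geq\;\lim_{\delta\downarrow 0}\bigl(-t_0\log(1-(1-\alpha)^{t_0-1}(1-\eul^{-b}))\bigr)^{-1}\;=\;\cna.
\end{align*}
But this inequality points the wrong way: since $\cna'$ is the \emph{minimum} over $t$, evaluating at the single index $t_0$ (the minimizer for $\beta$, not for $b$) only gives the upper bound $\cna'\leq\bigl(-t_0\log(1-(1-\alpha)^{t_0-1}(1-\eul^{-b}))\bigr)^{-1}$, which yields $\limsup_{\delta\downarrow 0}\cna'\leq\cna$, a fact already implied by $\cna'<\cna$. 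What is genuinely needed is a lower bound on $\cna'$, i.e., control over where the minimum for $b=b(\delta)$ could be attained as $\delta$ varies, and that is precisely what your uniform truncation argument delivers: by showing that the supremum of your $\phi(b(\delta),\cdot)$ is attained on a common finite index set $\{1,\ldots,T^*\}$ for all $\delta$ near $0$, you reduce the problem to a maximum of finitely many continuous functions of $\delta$, which converges to $\max_{t\leq T^*}\phi(\beta,t)=1/\cna$. Your monotonicity observation (that $b(\delta)$, and hence $\phi(b(\delta),t)$, increases in $\delta$ because $\KL{\cdot}{p_{11}}$ decreases on $(0,p_{11})$) is also correct and justifies the arrow notation in the claim. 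In short, your approach is essentially a rigorous repair of the paper's intended argument.
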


    As a result, there exist  sufficiently small $\delta$ and $\eta$ (depending on $\eps$) such that $-(1+\delta)((1-\eps)\frac{\cna}{\cna'} + \frac{\eta}{\cna'})$, the exponent of \cref{eq:erwA}, is larger than  $-1$. 
    Then
    \begin{align}
        \lim_{n \to \infty}  \frac{1}{\Exp{\vA}}   = 0 \,.
    \end{align}
Hence, $1/\Exp{\vA} \leq \delta''$ for sufficiently large $n$, which concludes the proof.
\end{proof}

We now give the proofs of the claims above.

\begin{proof}[Proof of \cref{lem:davrg}]
    Say $R=[\nnew]$ at first. We pick $D$ greedily by choosing individual $i$ with the smallest $\sum_{a: a\in \partial i}-\log(1-(1-\alpha)^{\abs{\partial a}-1}(1-\eul^{-b}))$ out of $R$ and include it in $D$.
    Afterwards we remove all individuals in the fourth neighborhood of $i$ (due to the bounded degrees there are a maximum of $\log^{12}(n)$ of those) from $R$ and repeat. Furthermore $-\log(1-(1-\alpha)^{\abs{\partial a}-1}(1-\eul^{-b})) \in (0,b)$.
    When we have picked all individuals we know 
    \begin{align}
        \abs{ R} \leq \frac{\nnew}{\log^{13}(n)} \log^{12}(n) = \frac{\nnew}{\log(n)} \;.
    \end{align}
Note that $-\log(1-(1-\alpha)^{\abs{\partial a}-1}(1-\eul^{-b}))\geq 0$.    For sufficiently large $n$ such that $n/(\nnew-\nnew/\ln(n))\leq 1+\delta$, we can now calculate for any $i\in [\nnew]\setminus R$ that 
    \begin{align}
        \sum_{a: a\in \partial i}-\log(1-(1-\alpha)^{\abs{\partial a}-1}(1-\eul^{-b})) &\leq \frac{\sum_{i\in[\nnew]\backslash R}\sum_{a\in \partial i}-\log(1-(1-\alpha)^{\abs{\partial a}-1}(1-\eul^{-b}))}{\nnew-|R|}\\
        &\leq \frac{\sum_{i\in[\nnew]}\sum_{a\in \partial i}-\log(1-(1-\alpha)^{\abs{\partial a}-1}(1-\eul^{-b}))}{\nnew-\frac{\nnew}{\log(n)}}\\
&\leq \frac{\sum_{i\in[n]}\sum_{a\in \partial i}-\log(1-(1-\alpha)^{\abs{\partial a}-1}(1-\eul^{-b}))}{n}(1+\delta) \; .\qedhere
    \end{align}
\end{proof}

\begin{proof}[Proof of \cref{claim:cna'}]\label{prf:claim:cna'}
    Indeed, recall from \eqref{def:ma} that $$
   \cna = \min_{t\in \mathbb{N}} \cbc{(- t \log(1-(1-\alpha)^{t-1}(1-\eul^{-\beta})))^{-1}} \,.
$$ Given $t$, $(- t \log(1-(1-\alpha)^{t-1}(1-\eul^{-x})))^{-1}$ is a strictly decreasing function of $x$. Since $b>\beta=\KL{C}{ p_{11}}$, we have $\cna'<\cna$. On the other hand, as $t\to\infty$, for fixed $x>0$,
\begin{align*}
(- t \log(1-(1-\alpha)^{t-1}(1-\eul^{-x})))^{-1}\sim  (t(1-\alpha)^{t-1}(1-\eul^{-x}))^{-1}\to\infty.
\end{align*}
Consequently, there exists a $t_0\in\NN^+$ where the the minimum of $\min_{t\in \mathbb{N}^+}\cbc{(- t \log(1-(1-\alpha)^{t-1}(1-\eul^{-\beta})))^{-1}}$ is obtained at $t=t_0$. Hence,
\begin{align*}
    \liminf_{\eps \downarrow 0}\cna'\geq \lim_{\eps \downarrow 0}(- t_0 \log(1-(1-\alpha)^{t_0-1}(1-\eul^{-b})))^{-1}=(- t_0 \log(1-(1-\alpha)^{t_0-1}(1-\eul^{-\beta})))^{-1}=\cna,
\end{align*}
which yields the claim.
\end{proof}

\subsection{Achievability}

\subsubsection{Proof of Lemma \ref{obs:naalg_xi}: properties of $\xi$}\label{apx:nonad:xi_calc}

\newcommand{\tempconst}{c}
\xicalc*
\begin{proof}
    \emph{(a)}
    For the lower bound on $\xi = \xi(\alpha, \beta, \Gamma)$,
        note that $1-(1-\alpha)^{\Gamma-1} \cdot (1-\eul^{-\beta}) \geq \eul^{-\beta}$,
        so that $\xi \geq \bc{-\log(\eul^{-\beta})}^{-1} = \beta^{-1}$.
    For the upper bound,
        first note that $\Gamma - 1 \leq \alpha^{-1}$
            (since $\Gamma \leq \ceil{\alpha^{-1}}$)
        and that, for $x \in [0, 1]$, it is the case that $(1-x)^{1/x} \geq (1-x)/e$.
    So then $\xi^{-1} = -\log(1-(1-\alpha)^{\Gamma-1}(1-\eul^{-\beta}))
        \geq -\log(1 - (1-\alpha)(1-\eul^{-\beta})/e) \geq (1-\alpha)(1-\eul^{-\beta}) / e$.
    
    To see that $\xi$ is increasing in $\alpha$,
        first see that $(-\ln(x))^{-1}$ is increasing in $x$ for $x \in (0, 1)$.
    Then note that $(1-\alpha)^{\Gamma-1}$ is decreasing in $\alpha$ and takes values in $(0, 1]$ (since $\alpha \in (0, 1)$),
        and so $1 - (1-\alpha)^{\Gamma-1} (1-\eul^{-\beta(\vp)})$ is increasing in $\alpha$ and also takes values in $(0, 1)$ (since $\beta(\vp) > 0$).

    \emph{(b)}
    Let $z = z(\Gamma) = (1-\alpha)^{\Gamma}$, and $\tempconst = \tempconst(\alpha, \beta) = \bc{1-\eul^{-\beta(\vp)}} \bc{1-\alpha}^{-1}$ (which is strictly positive for all allowed values of $\beta(\vp)$ and $\alpha$).
    Then $z \in (0, 1-\alpha]$, and the function to be optimized is
        $- \bc{\log\bc{1-\alpha}}^{-1} \log(z) \log\bc{1 - \tempconst \cdot z}$,
        which is proportional to
        \(f_\tempconst(z) = \log(z)\log(1-\tempconst z).\)
    Now
    \[\frac{\partial}{\partial z} f_\tempconst(z) = \frac{\log(1-\tempconst z)}{z} - \frac{\tempconst\log(z)}{1-\tempconst z},\]
        which is $0$ for $\tempconst=0$ (which is not an attainable value).
    And
    \[\frac{\partial^2}{\partial z \partial c} \log(z)\log(1-\tempconst z)
        = \frac{-z}{z(1-\tempconst z)} - \frac{\log(z) (1-\tempconst z) - \tempconst \log(z) (-z)}{(1-\tempconst z)^2}
        = \frac{-\log(z)-1+\tempconst z}{(1-\tempconst z)^2},\]
        which is positive for all $\tempconst \geq 0$ and $z \in (0, 1/\eul)$.
    So any maximum must either be at the upper boundary of the range of $z$ (i.e., be $z = 1-\alpha$),
        or have $z \geq 1/e$.
    In the former case, the optimal $\Gamma$ is $1$, trivially at most $\ceil{1/\alpha}$ (as $\alpha \leq 1$).
    In the latter case,
        we have $\Gamma \leq \ceil{-\bc{\log(1-\alpha)}^{-1}} \leq \ceil{\alpha^{-1}}$ as well.
\end{proof}

\subsubsection{Proof of Lemma \ref{lem:naalg_test_count}: bound on number of used tests}\label{apx:lem:naalg_test_count}
\lemnaalgtestcount*
\begin{proof}
    First, we see that $\abs{\NAIdvTests} = \ceil{\eta \log n}\cdot n \leq \eta n \log n + n$,
        and that $\abs{\NAGroupTests} \leq \bc{1 + \frac{\varepsilon}{3}}\frac{\xi}{\Gamma}n\log n + 1$.
    So it is sufficient to show that $\abs{\NAAddIdvTests} = \Oh(n)$.
    
    To that end, let us bound the probability that $\NAAddIdvTests$ contains any test for a particular $i \in [n]$:
        letting $X = \abs{\NAGroupTests \cap \partial i}$ be the number of tests in $\NAGroupTests$ containing $i$,
        this happens only if $X \leq \bc{1+\varepsilon/6} \xi \log n$.
    As each test in $\NAGroupTests$ is independently chosen among all tests of size $\Gamma$,
        $X$ is binomially distributed as $\bin{\ceil{\bc{1+\varepsilon/3}\frac{\xi}{\Gamma}n\log n}, \Gamma / n}$.
    Hence \(\bc{1+\frac{\varepsilon}{3}}\xi \log n \leq \Exp{X} \leq \bc{1+\frac{\varepsilon}{3}}\xi \log n + 1,\)
        and by \cref{thm:chernoff_t},
    \begin{align*}
        \Prob{X \leq \bc{1+\frac{\varepsilon}{6}}\xi\log n}
           &\leq \Prob{X \leq \Exp{X} - \frac{\varepsilon}{6} \cdot \xi \log n}
        \\ &\leq \exp\bc{-\,\frac{\bc{\frac{\varepsilon}{6} \cdot \xi \log n}^2}{2\bc{\bc{1+\varepsilon/3}\xi \log n + 1}}}
            \leq \exp\bc{-\,\frac{\varepsilon^2 \xi \log n}{72 \bc{1+\frac{\varepsilon}{3} + \frac{1}{\xi \log n}}}}
        \\ &= \exp\bc{-\xi \log n \cdot \frac{\varepsilon^2}{72\bc{1+\frac{\varepsilon}{3}}} \cdot \bc{1 - o(1)}}
            = n^{-\Omega(1) \cdot \xi},
    \end{align*}
        where we used both $\varepsilon > 0$ and $\xi \geq \beta^{-1} = \Omega(1)$ (\Cref{obs:naalg_xi} (a)).

    Now $\NAAddIdvTests$ contains at most $(1+\frac{\varepsilon}{6})\xi \log n + 1$ tests for each individual $i$ for which it contains tests at all,
        giving us the upper bound $\Exp{\abs{\NAAddIdvTests}} = \Oh(\xi \log n \cdot n^{-\xi \cdot \Omega(1)}) = o(n)$,
        so that by Markov's inequality we have $\NAAddIdvTests = \Oh(n)$ \whp,
        which is what we needed to show.
\end{proof}

\subsubsection{Proof of Lemma \ref{lem:naalg_pseudogenie}: error probability of pseudo-genie}

\label{apx:lem:naalg_peudogenie}
\naalgpseudogenie*
\begin{proof}
    That the events are independent follows from the fact that
        the value of $\patSPG{i}$ only depends on the displayed results of $i$'s individual tests in $\NAIdvTests$, which in turn only depend on $\pats(i)$,
        and that the $\pats(i)$ as well the test noise of different tests are independent.
    From this description, and recalling that $\NAIdvTests(i)$ contains the $N = \ceil{\eta \log n}$ individual tests for $i$,
        we can also see that $\abs{\NAIdvTests^+(i)}$
            (i.e., the number of tests in $\NAIdvTests$ displaying positively containing $i$)
        is distributed as $\bin{N, \pflipoi}$ when $\pats(i) = 0$
        and as $\bin{N, \pflipii}$ when $\pats(i) = 1$.
    So as $\pflipoi \leq C \leq \pflipii$, we have
    \begin{align*}
    \ProbCond{\patSPG{i} = 0 }{ \pats(i) = 1}
       &= \Prob{\frac{1}{N} \cdot \bin{N, \pflipii} < C}
        \leq \exp\bc{-N \KL{C}{\pflipii}
        \leq \exp\bc{-N\beta}},
\\  \ProbCond{\patSPG{i} = 1 }{ \pats(i) = 0}
       &= \Prob{\frac{1}{N} \cdot \bin{N, \pflipoi} \geq C}
        \leq \exp\bc{-N \KL{C}{\pflipoi}
        \leq \exp\bc{-N\beta}},
\\  \Prob{\patSPG{i} \neq \pats(i)}
       &= \alpha \cdot \ProbCond{\patSPG{i} = 0 }{ \pats(i) = 1}
        + (1-\alpha) \cdot \ProbCond{\patSPG{i} = 1 }{ \pats(i) = 0}
    \\ &\leq \exp\bc{-N\beta},
\\ \exp\bc{-N\beta} &= \exp(-\ceil{\eta \log n} \beta) \leq \exp(-\beta \eta \log n) = n^{-\beta \eta},
    \end{align*}
        as claimed.
\end{proof}

\subsubsection{Proof of Lemma \ref{lem:naalg_test_tainted}: probability of pseudo-good test not being good}

\lemnaalgtesttainted*
\label{apx:lem:naalg_test_tainted}
\begin{proof}
    If $a \in P_i$, then all individuals in $\partial a$ (except perhaps $i$ itself) have $\patSPG{i} = 0$.
    So for an individual $j \in \partial a \setminus \cbc{i}$ to be misclassified,
        it needs to be the case that $\pats(j) = 1$.
    So let us bound this probability from above using Bayes' theorem as well as \Cref{lem:naalg_pseudogenie}:
    \[\ProbCond{\pats(i)=1 }{ \patSPG{i} = 0}
        = \frac{\ProbCond{\patSPG{i} = 0 }{ \pats(i) = 1} \cdot \Prob{\pats(i) = 1}}{\Prob{\patSPG{i} = 0}}
        \leq \frac{n^{-\beta\eta} \cdot \alpha}{(1-\alpha)(1 - n^{-\beta\eta})}
        = \frac{\alpha}{1-\alpha} \cdot \frac{n^{-\beta\eta}}{1 - n^{-\beta\eta}}.\]
    The claim now follows via union bound over the $\Gamma - 1 \leq \Gamma$ individuals in $a$ not including $i$,
        the fact that $\Gamma \leq \ceil{\hat{\alpha}^{-1}} \leq \ceil{\alpha^{-1}}$ (as $\alpha \leq \hat{\alpha}$)
        as well as $\alpha \in (0, 1)$ being bounded away from $1$.
\end{proof}

\subsubsection{Proof of Lemma \ref{lem:naalg:misclassify_probcond_abs_Pi}: probability of correct classification conditioned on number of pseudo-good tests}

\label{apx:lem:naalg:misclassify_probcond_abs_Pi}
\lemnaalgmisclassifyprobcondabsPi*
\begin{proof}
    As $P_i \subseteq D_i$, the observed test results $(\otests(a))_{a \in P_i}$ are independent when conditioning on $\pats(i)$, the true infection status of $i$.
    To bound the probability of $i$ being misclassified, we consider the cases $\pats(i) = 1$ and $\pats(i) = 0$ separately.
    
    If $\pats(i) = 1$, then all tests $a \in P_i$ contain an infected individual (namely $i$),
        and as a consequence, $\ProbCond{\otests(a) = 1 }{ \pats(i) = 1, a \in P_i} = \pflipii$,
        so that $\sum_{a \in P_i} \otests(a)$ conditioned on $\abs{P_i}$ is distributed as $\bin{\abs{P_i}, p_{11}}$.
    And $i$ is classified as uninfected by $\SPOG$ if at most a $C$ proportion of tests in $P_i$ display positively,
        so that, by the Chernoff bound,
        \[\ProbCond{\patNA{i} = 0 }{ \pats(i) = 1, \abs{P_i}}
            = \ProbCond{\frac{1}{\abs{P_i}} \sum_{a \in P_i} \otests(a) < C }{ \pats(i)=1, \abs{P_i}}
            \leq \exp\bc{-\abs{P_i}\KL{C}{\pflipii}}.\]
    
    If $\pats(i) = 0$, tests $a \in P_i$ contain an infected individual if and only if some individual $j \in \partial a \setminus \cbc{i}$ was misclassified by $\patsSPG$
        (recalling that $a \in P_i$ only if for all such $j$, we have $\patsSPG(j)=0$).
    The probability of this occurring for a given test is in $\Oh(n^{-\beta\eta})$ by \cref{lem:naalg_test_tainted},
        and so 
        \[\ProbCond{\otests(a) = 1 }{ \pats(i) = 0, a \in P_i}
        = \pflipoi + (\pflipii - \pflipoi) \cdot \Oh(n^{-\beta \eta})
        = \pflipoi + \Oh(n^{-\beta \eta}).\]
    And as $i$ is classified as infected by $\SPOG$ if at least a $C$ proportion of tests in $P_i$ display positively,
    \begin{align*}
        \ProbCond{\patNA{i} = 1 }{ \pats(i) = 0, \abs{P_i}}
           &= \ProbCond{\frac{1}{\abs{P_i}} \sum_{a \in P_i} \otests(a) \geq C }{ \pats(i)=0, \abs{P_i}}
        \\ &\leq \exp\bc{-\abs{P_i}\KL{C}{\pflipoi + \Oh\bc{n^{-\beta \eta}}}}.
    \end{align*}
    Now by definition of $\beta$ and $C$,
        $\min\cbc{\KL{C}{\pflipoi}, \KL{C}{\pflipii}} = \beta$
        and $\pflipoi \leq C \leq \pflipii$ (and $\pflipoi < \pflipii$ by assumption on the channel),
        and so $\KL{C}{\pflipoi + \Oh(n^{-\beta\eta}} = \KL{C}{\pflipoi} - \Oh(n^{-\beta\eta})$.
    Combined with the above, this yields
        \[\ProbCond{\patNA{i} \neq \pats(i) }{ \abs{P_i}} \leq \exp\bc{-\abs{P_i}\bc{\beta - \Oh(n^{-\beta\eta})}}. \qedhere\]
\end{proof}

\subsubsection{Proof of Lemma \ref{lem:naalg_many_distinctive_tests}: lower bound on number of distinctive tests}

\label{apx:lem:naalg_many_distinctive_tests}

\lemnaalgmanydistinctivetests*
\begin{proof}
    By construction of $\vG_{\SPOG}$, we always have $\abs{\TestSet{2} \cap \partial i} \geq \ceil{\bc{1 + \frac{\varepsilon}{6}}\xi \log n + 1}$ for all $i$.
    Now let $X_i$ be the set of the first $\ceil{\bc{1 + \frac{\varepsilon}{6}} \xi \log n + 1}$ tests $a$ considered in determining $D_i$.
    We show that \whp $\abs{X_i \setminus D_i} \leq 1$ for all $i$ simultaneously, which then immediately yields the claim.

    In the loop determining $D_i$,
        for each test $a \in X_i$ under consideration,
            we add at most $\Gamma - 1$ tests to $S$
            (since each test contains at most $\Gamma$ individuals).
    So in \emph{any} iteration of the loop in step 5,
        we have $\abs{S} \leq \abs{X_i} \cdot \Gamma$.
    So for any $a \in \partial i$,
        \[\Prob{\partial a \cap S \neq \emptyset}
            \leq \Gamma \cdot \frac{\abs{X_i} \cdot \Gamma}{n}.\]
    So the total number of such intersections for an individual is majorized by
        $\bin{\abs{X_i}, \frac{\Gamma^2}{n} \abs{\NAGroupTests \cap \partial i}}$.
    Now for any $k \geq 2$ and $0 < p < 1$,
    \begin{align*}
    \Prob{\bin{k,p} \geq 2}
       &= 1 - (1-p)^k - k p (1-p)^{k-1} = 1 - (1-p)^{k-1} \bc{1-p + kp}
    \\ &\leq 1 - \bc{1-p(k-1)}\bc{1+p(k-1)} = 1 - 1 + p^2 (k-1)^2 \leq p^2 k^2,
    \end{align*}
    and hence
        \[\Prob{\abs{X_i \setminus D_i} \geq 2 }
            \leq \abs{X_i}^4 \Gamma^4 n^{-2}.\]
    Applying the union bound over all $i \in [n]$
        as well as the fact that $\abs{X_i} = \Oh(\log n)$ (since $\xi = \Oh(1)$ by \cref{obs:naalg_xi} (a)),
        we have the claim.
\end{proof}

\subsubsection{Proof of Lemma \ref{lem:naalg_misclassify}: probability of misclassification}

\label{apx:lem:naalg_misclassify}
\lemnaalgmisclassify*
\begin{proof}
    Pick any $i \in [n]$.
    Conditioning on the set $D_i$,
        the events $(a \in P_i)_{a \in D_i}$ are independent by construction of $D_i$
        (tests in $D_i$ only intersect in $i$, and membership in $P_i$ only depends on the classification of individuals in $a$ except for $i$).
    As \[\Prob{\patSPG{j} = 1} \leq \Prob{\pats(j) = 1 \vee \patSPG{j} \neq \pats(j)} \leq \alpha + n^{-\beta \eta}\] for all $j \in [n]$ (by union bound and \cref{lem:naalg_pseudogenie}),
        we have \[\ProbCond{a \in P_i }{ a \in D_i} = \Prob{\bigwedge_{j \in \partial a \setminus \{i\}} \patSPG{j} = 0}
            \geq \bc{1 - \alpha - n^{-\beta\eta}}^{\Gamma - 1}.\]
    Taken together, this means that $\abs{P_i}$ (conditioned on $\abs{D_i}$) stochastically dominates the binomial distribution $\bin{\abs{D_i}, (1-\alpha - n^{-\beta \eta})^{\Gamma - 1}}$.
    As \cref{lem:naalg:misclassify_probcond_abs_Pi}'s bound on the probability of misclassification conditioned on $\abs{P_i}$ is monotonically decreasing in $\abs{P_i}$,
        writing $q = (1 - \alpha - n^{-\beta\eta})^{\Gamma - 1}$, we thus have
    \begin{align*}
        \ProbCond{\patNA{i}\neq \sigma(i) }{ \abs{D_i}}
           &\leq \sum_{\ell=0}^{\abs{D_i}} \binom{\abs{D_i}}{\ell} q^\ell (1-q)^{\abs{D_i}-\ell} \cdot \eul^{-\ell \cdot \bc{\beta-\Oh(n^{-\beta \eta})}}
        \\ &= \bc{q\eul^{-\bc{\beta - \Oh(n^{-\beta\eta})}} + (1-q)}^{\abs{D_i}}
            = \bc{1 - q \cdot \bc{1 - \eul^{-\bc{\beta-\Oh(n^{-\beta\eta})}}}}^{\abs{D_i}}
        \\ &= \bc{1 - q \cdot \bc{1 - \eul^{-\beta} \cdot (1 + \Oh(n^{-\beta\eta})}}^{\abs{D_i}}
            = \bc{1 - q \cdot \bc{1 - \eul^{-\beta}} + \Oh(n^{-\beta\eta})}^{\abs{D_i}}.
    \end{align*}
    With $q = \bc{1-\alpha}^{\Gamma-1} \cdot \bc{1-\frac{n^{-\beta\eta}}{1-\alpha}}^{\Gamma-1}
        \geq \bc{1-\alpha}^{\Gamma-1} \cdot \bc{1 - \frac{\Gamma}{(1-\alpha) n^{\beta\eta}}} = \bc{1-\alpha}^{\Gamma-1} \cdot (1 - \Oh(n^{-\delta}))$,
        this yields
        \[\ProbCond{\patNA{i}\neq \sigma(i) }{ \abs{D_i}}
        \leq \bc{1 - (1-\alpha)^{\Gamma-1} \cdot \bc{1 - \eul^{-\beta}} + \Oh(n^{-\delta})}^{\abs{D_i}}.
        \]
    As a consequence,
        with $q' = 1 - (1-\alpha)^{\Gamma-1} \bc{1-\eul^{-\beta}} = \exp(-1/\xi(\alpha, \beta, \Gamma))$,
        and recalling that $\mathcal{D}$ is the event that $\abs{D_i} \geq \bc{1 + \frac{\varepsilon}{6}}\xi(\hat{\alpha}, \beta, \Gamma) \cdot \log n$ for all $i$,
        we have
    \begin{align*}
    \ProbCond{\patNA{i}\neq \sigma(i) }{ \mathcal{D}}
       &\leq \bc{q' + \Oh(n^{-\delta})}^{\bc{1+\frac{\varepsilon}{6}} \xi(\hat{\alpha}, \beta, \Gamma) \log n}
        = \bc{q' \cdot \bc{1 + \frac{\Oh(n^{-\delta})}{q'}}}^{\bc{1 + \frac{\varepsilon}{6}} \xi(\hat{\alpha}, \beta, \Gamma) \cdot \log n}
    \\ &= n^{\bc{1+\frac{\varepsilon}{6}}\xi(\hat{\alpha}, \beta, \Gamma) \cdot \bc{\log(q') + \Oh(n^{-\delta})/q'}}.
    \end{align*}
    Now $\log(q') = -1/\xi(\alpha, \beta, \Gamma)$,
        and $\xi(\alpha, \beta, \Gamma)$ is increasing in $\alpha$ (\Cref{obs:naalg_xi} (a)),
        so as $\alpha \leq \hat{\alpha}$,
        we have $\xi(\hat{\alpha}, \beta, \Gamma) \cdot \log(q') \leq -1$.
    Furthermore, by \Cref{obs:naalg_xi} (a),
        $\xi(\alpha, \beta, \Gamma) \in \brk{\beta^{-1}, \frac{e}{(1-\alpha)(1-\eul^{-\beta})}}$,
        which is in $\Oh(1)$ as $\alpha, \beta > 0$,
        and so $q' = \Oh(1)$ as well.
    Taken together, this yields 
    \[
    \ProbCond{\patNA{i}\neq \sigma(i) }{ \mathcal{D}}
        \leq n^{-\bc{1+\frac{\varepsilon}{6}} + \Oh(n^{-\delta})},
    \]
        as claimed.
\end{proof}

\subsubsection{Proof of Proposition \ref{prop:sublin}: $\SPOG$ for sub-constant $\alpha$}

\label{apx:prop:sublin}
\propsublin*
\begin{proof}
    We run $\SPOG$ on $G \sim \vG_{\SPOG}(n, n^{-\hat{\theta}}, \vp, \Gamma, \eta, \varepsilon)$,
        where $\Gamma = \ceil{\log(n)}$ (which is $\leq \ceil{(n^{-\hat{\theta}})^{-1}} \leq \ceil{\alpha^{-1}}$ for sufficiently large $n$),
        and $\eta = \varepsilon / 2$.
    This graph contains at most 
    $\frac{\varepsilon}{2} n \log n + (1+\varepsilon/3) \frac{\xi}{\log n} n \log n + \Oh(n)
        = \frac{\varepsilon}{2} n \log n + \Oh(n)$ tests \whp by \cref{lem:naalg_test_count} (using $\xi \leq \frac{e}{1-\eul^{-\beta(\vp)}} = \Oh(1)$ by \cref{obs:naalg_xi} (a)),
        which is at most $\varepsilon n \log n$ for sufficiently large $n$.
        
    Now $\mathcal{D}$ holds with probability $1 - \Oh(\Gamma^4 \log^4 n / n) = 1 - \Oh(\log^8 n / n)$ by \cref{lem:naalg_many_distinctive_tests} and choice of $\Gamma$.
    And as $\beta(\vp) \eta = \Omega(1)$,
        we have $\Gamma = \log(n) = \Oh(n^{\beta \eta - \delta})$ for any $\delta < \beta \eta$.
    Then as above, 
        combining this with \cref{lem:naalg_misclassify} (choosing $\delta = \beta\eta / 2$ arbitrarily) and the union bound over all $i \in [n]$,
        the probability of there being \emph{any} individual which is misclassified is in $\Oh(n^{-\frac{\varepsilon}{6} + \Oh(n^{-\beta\eta/2})}) = \Oh(n^{-\frac{\varepsilon}{6}})$,
        which is at most $\varepsilon$ for sufficiently large $n$, as claimed.
\end{proof}

\section{Full Proofs for Adaptive Group Testing}

\subsection{Impossibility}
\label{app-sec-imp-ada}

\subsubsection{Proof of Lemma \ref{lemma:prob_J}: ground truth is typical w.h.p.}
\label{sec_lemma:prob_J}

\probJ*

\begin{proof}
Recall that $\infected_{\pats}$ is  the set of infected individuals under $\pats$.
    Then, $\abs{\infected_{\pats}} \sim \bin{n,\alpha}$,
        and Chebyshev's inequality yields that
    \begin{align}\label{eq:lbngt:revise1}
        \Prob{\abs{\abs{\infected_{\pats}}-n\alpha}> \frac{\vepp}{2} n}\leq \frac{\Var{\bin{n,\alpha}}}{(\vepp/2)^2 n^2}=\frac{4\alpha(1-\alpha)}{\vepp^2 n}.
    \end{align}

Now fix $\pats$, and an $i \in \infected_{\pats}$.
We will bound the probability that $i \not\in \infectedprime^{\vepp}_{\pats,G_{\alg'},\otest_{\alg'}}$,
    i.e., that $\abs{\good_i^+(\pats, G_{\alg'}) - p_{11}\good_i(\pats, G_{\alg'})} > \vepp g_i(\pats, G_{\alg'})$.
To that end, consider an algorithm $\alg$ chosen to maximize this probability
    which has access to $\pats$ as well as the results of the $\rou{\eta \log n}$ individual tests for $i$ which are added to obtain~$\alg'$,
    imagining these added individual tests as being performed first.
Then $\alg$ cannot do better than to keep adding good tests for~$i$ until the number of positively displayed good tests deviates from the mean by a sufficient amount, at which point it would stop adding tests.
So let $\bm s_i(\ell)$ be the number of positively displayed tests among the first $\ell$ good tests for~$i$; 
    we define this even for $\ell$ larger than the number of tests performed, assuming that more individual tests are performed for $i$.
Then
    \[\ProbCond{i \not\in \infectedprime^{\vepp}_{\pats,G_{\alg'},\otest_{\alg'}}}{\pats,i \in \infected_{\pats}}
        \leq \ProbCond{\bigvee_{\ell = \rou{\eta \log n}}^\infty \abs{s_i(\ell) - p_{11}\ell} > \vepp \ell}{\pats,i \in \infected_{\pats}}.\]
As the observed results of good tests for $i$ are independent of the choices of the algorithm,
    we have $\bm s_i(\ell) \sim \bin{\ell, p_{11}}$.
And hence,
    for $\delta = \vepp / p_{11}$,
    using a two-sided Chernoff bound (\cref{thm:chernoff_eps_twoside}), we have
\begin{align*}
\ProbCond{i \not\in \infectedprime^{\vepp}_{\pats,G_{\alg'},\otest_{\alg'}}}{\pats}
   &\leq \sum_{\ell=\rou{\eta \log n}}^\infty \Prob{\abs{\bin{\ell, p_{11}} - p_{11}k} > \delta p_{11} \ell}
\\ &\leq \sum_{\ell=\rou{\eta \log n}}^\infty 2\exp\bc{-\,\frac{\delta^2 p_{11} \ell}{3}}
    \leq 2 \int_{\rou{\eta \log n} - 1}^\infty \exp\bc{-\,\frac{\delta^2 p_{11} \ell}{3}} \dif \ell
\\ &= 2 \cdot \frac{\exp\bc{-\frac{\delta^2 p_{11}}{3}\bc{\rou{\eta \log n} - 1}}}{\frac{\delta^2 p_{11}}{3}}
    \leq \frac{6}{\delta^2 p_{11}} \exp\bc{-\frac{\delta^2 p_{11}}{3}\bc{\eta \log n - 2}}
\\ &\leq \frac{6\exp\bc{\frac{2}{3} \cdot \delta^2 p_{11}}}{\delta^2 p_{11}} \cdot n^{-\delta^2 p_{11} \eta /3}
    = \frac{6 \exp\bc{\frac{2}{3} \vepp^2 / p_{11}}}{\vepp^2 / p_{11}} n^{-\vepp^2 \eta / (3 p_{11})}.
\end{align*}

Given this, we can bound the expected number of infected individuals \emph{not} in $\infectedprime^{\vepp}$
    by summing over all infected individuals and applying the tower rule:
    \begin{align*}
    \Exp{\abs{\infected_{\pats} \setminus \infectedprime_{\pats,G_{\alg'}, \otests_{\alg'}}^{\vepp}}}
       &= \Exp{\sum_{i \in \infected_{\pats}} \vecone\cbc{i \not\in \infectedprime_{\pats,G_{\alg'}, \otests_{\alg'}}^{\vepp}}}
        = \Exp{\sum_{i \in \infected_{\pats}} \ProbCond{i \not\in \infectedprime_{\pats,G_{\alg'}, \otests_{\alg'}}^{\vepp}}{\pats}}
    \\ &\leq n \cdot \frac{6 \exp\bc{\frac{2}{3} \vepp^2 / p_{11}}}{\vepp^2 / p_{11}} n^{-\vepp^2 \eta / (3 p_{11})}
        = \Oh(n^{1-\vepp \eta / (3 p_{11})}).
    \end{align*}
    And hence by Markov's inequality, $\abs{\infected_{\pats} \setminus \infectedprime_{\pats,G_{\alg'}, \otests_{\alg'}}^{\vepp}} \leq \frac{\vepp}{2} n$ \whp
    As $\abs{\abs{\infected_{\pats}} - \alpha n} \leq \frac{\vepp}{2} n$ \whp as well by \cref{eq:lbngt:revise1},
        we thus have $\abs{\abs{\infectedprime_{\pats,G_{\alg'}, \otests_{\alg'}}^{\vepp}} - \alpha n} \leq \vepp n$ \whp,
            and thus $\pats \in \TypicalConfigs(G_{\alg'}, \otests_{\alg'})$ \whp, as claimed.
\end{proof}

\subsubsection{Proof of Lemma \ref{lem:hd:lbound-nei-vec}: bound on posterior odds ratio of neighboring infection vectors}
\label{apx:lem:hd:lbound-nei-vec}

Recall that for any infection vector $\pat \in \{0,1\}^n$ and any $j \in \infected_{\pat}$,
    we write $\pat^{\downarrow j} = (\pat(i) \cdot \ind(i \neq j))_{i \in [n]} \in \{0, 1\}^n$ for the infection vector obtained from $\pat$ by setting the $j$th coordinate to $0$.
As $j$ is infected, we have $g_j^\pm(\pat, G, \otest) = g_j^\pm(\pat^{\downarrow j}, G, \otest)$,
    so we shall simply write $g_j^\pm$ throughout this section.

We first express the  posterior odds ratio of neighboring vectors as a function of $g_j^\pm$:
\begin{lemma}\label{cl:hd:pos-ratio}
For any $\pat \in \{0, 1\}^n$ and $j \in \infected_{\sigma}$,
\begin{align}\label{eq:hd:ratio-con-Atau}
    \frac{\ProbCond{\pats=\pat^{\downarrow j}}{G_\alg=G,\otests_\alg=\otest}}{\ProbCond{\pats=\pat}{G_\alg=G,\otests_\alg=\otest}}
        =\frac{1-\alpha}{\alpha} \cdot \exp\bc{-\good_j^- \cdot \ln \frac{p_{10}}{p_{00}}-\good_j^+ \cdot \ln \frac{p_{11}}{p_{01}}}.
\end{align}
\end{lemma}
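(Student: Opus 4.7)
The plan is to prove this by directly computing both posteriors via Bayes' rule and observing that nearly all factors cancel. First I would write out the joint probability $\Prob{\pats = \pat, G_\alg = G, \otests_\alg = \otest}$ for an adaptive scheme. Because each test $a_i$ chosen by $\alg$ depends only on $(a_j, \otest(a_j))_{j < i}$ (see \cref{def:ada-alg}) and not on $\pats$, this joint factors as the prior $\Prob{\pats = \pat}$, times a product of algorithm-transition terms depending only on the observed history, times a product of channel factors $p_{\atest(a_i)\otest(a_i)}$, where $\atest(a_i) = \max\{\pat(k) : k \in \partial a_i\}$ is the hypothetical result under $\pat$. When I form the ratio of posteriors with $\pat$ and $\pat^{\downarrow j}$, conditioning on $(G_\alg=G, \otests_\alg=\otest)$ cancels the normalizing factor as well as every algorithm-transition term, leaving only the prior ratio and the channel ratio.

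The prior ratio is immediate: since $\pat$ and $\pat^{\downarrow j}$ differ only in the $j$th coordinate (with $\pat(j)=1$, $\pat^{\downarrow j}(j)=0$), and coordinates are i.i.d.\ $\ber{\alpha}$, we get $\Prob{\pats=\pat^{\downarrow j}}/\Prob{\pats=\pat} = (1-\alpha)/\alpha$. For the channel ratio, the key observation is that the hypothetical result $\atest(a)$ changes between $\pat$ and $\pat^{\downarrow j}$ if and only if $a$ is good for $j$ under $\pat$, i.e., $a$ contains $j$ and no other individual of $\infected_\pat$; for such $a$ we have $\atest(a)=1$ under $\pat$ but $\atest(a)=0$ under $\pat^{\downarrow j}$, while for all other tests $\atest(a)$ is unchanged. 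Thus every tests that is not good for $j$ contributes a factor of $1$ to the channel ratio.

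It remains to account for the $g_j^+ + g_j^-$ good tests for $j$. Among the $g_j^+$ good tests with observed $\otest(a)=1$, the channel factor changes from $p_{11}$ under $\pat$ to $p_{01}$ under $\pat^{\downarrow j}$, contributing $(p_{01}/p_{11})^{g_j^+}$; analogously, the $g_j^-$ good tests with $\otest(a)=0$ contribute $(p_{00}/p_{10})^{g_j^-}$. Collecting these gives
\begin{align*}
\frac{\ProbCond{\pats=\pat^{\downarrow j}}{G_\alg=G,\otests_\alg=\otest}}{\ProbCond{\pats=\pat}{G_\alg=G,\otests_\alg=\otest}}
 = \frac{1-\alpha}{\alpha}\cdot\bc{\frac{p_{01}}{p_{11}}}^{g_j^+}\bc{\frac{p_{00}}{p_{10}}}^{g_j^-},
\end{align*}
and rewriting the two ratios as $\exp(-\ln(p_{11}/p_{01}))$ and $\exp(-\ln(p_{10}/p_{00}))$ yields exactly \eqref{eq:hd:ratio-con-Atau}. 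The only subtlety worth double-checking is that randomized choices by $\alg$ behave as claimed; but since the randomness of the algorithm is independent of $\pats$ and of the channel noise, its contribution to the joint is identical under $\pat$ and $\pat^{\downarrow j}$ and cancels in the ratio, so no real obstacle arises.
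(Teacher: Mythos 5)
Your proof is correct and follows essentially the same route as the paper: factor the joint over the adaptive sequence via Bayes' rule, observe that the algorithm-transition terms are independent of $\pats$ and cancel in the ratio, note that the hypothetical result $\atest(a)$ changes between $\pat$ and $\pat^{\downarrow j}$ exactly on the good tests for $j$, and collect the resulting channel factors. The paper's write-up is just slightly more formal in spelling out the conditional-independence chain $\ProbCond{\alg_i, \otests_i}{\cdots}=\ProbCond{\otests_i}{\cdots}\ProbCond{\alg_i}{\cdots}$ and in encoding the channel ratio as a product over $u,v\in\{0,1\}$, but the substance is identical.
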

\begin{proof}
    \newcommand{\algdet}{\mathcal{A}}
    To ease notation,
        we write $\algdet$ for the choices of $\alg$ leading to $G$,
        and write $\otests = \otests_{\alg}$

    By Bayes' theorem,
    \begin{align}\label{eq:hd:ratio-bayes}
        \frac{\ProbCond{\pats=\pat^{\downarrow j}}{ \alg=\algdet,\otests=\otest}}{\ProbCond{\pats=\pat}{\alg=\algdet,\otests=\otest}}
        = \frac{\Prob{\pats=\pat^{\downarrow j}}}{\Prob{\pats=\pat}}
        \cdot \frac{\ProbCond{\alg=\algdet,\otests=\otest}{\pats=\pat^{\downarrow j}}}{\ProbCond{\alg=\algdet,\otests=\otest}{\pats=\pat}}.
    \end{align}
    Now $\pat^{\downarrow j}$ and $\pat$ only differ in the $j$th coordinate, and the entries of $\pats$ are independent, so that
    \begin{align}\label{eq:hd:ratio-alpha}
      \frac{\Prob{\pats=\pat^{\downarrow j}}}{\Prob{\pats=\pat}}=\frac{\Prob{\pat_j=0}}{\Prob{\pat_j=1}}=\frac{1-\alpha}{\alpha}.
    \end{align}
    For the second factor of \eqref{eq:hd:ratio-bayes}, see first that, writing $\vx_{[i]}$ for the restriction of $\vx$ to the first $i$ coordinates,
    \begin{align*}
    \ProbCond{\alg=\algdet,\otests=\otest}{\pats=\pat}
       &=\prod_{i\in[m]}\ProbCond{\alg_i=\algdet_i,\otests_i=\otest_i}{\pats=\pat,\alg_{[i-1]}=\algdet_{[i-1]}, \otests_{[i-1]}=\otest_{[i-1]}}
    \\ &=\prod_{i\in[m]}\ProbCond{\otests_i=\otest_i}{\pats=\pat,\alg_{[i]}=\algdet_{[i]},\otests_{[i-1]}=\otest_{[i-1]}} \\ &\hspace{3em} \cdot \ProbCond{\alg_i = \algdet_i}{\pats=\pat,\alg_{[i-1]}=\algdet_{[i-1]},\otests_{[i-1]}=\otest_{[i-1]}}.
    \end{align*}

    By \Cref{def:ada-alg}, $\alg_i$ is a possibly random function of $\alg_{[i-1]}$ and $\otests_{[i-1]}$, and independent of $\pats$, while $\otest_i$ is independent of $\alg_{[i-1]}$ and $\otests_{[i-1]}$ given $\alg_i$.  Hence,
    \begin{align*}
    &\ProbCond{\alg=\algdet,\otests=\otest}{\pats=\pat}
        =\prod_{i\in[m]}\ProbCond{\otests_i=\otest_i}{\pats=\pat,\alg_i=\algdet_i}\ProbCond{\alg_i=\algdet_i}{\alg_{[i-1]}=\algdet_{[i-1]},\otests_{[i-1]}=\otest_{[i-1]}}.
    \end{align*}
    The same holds when replacing $\sigma$ by any other infection vector, like $\sigma^{\downarrow j}$,
        so that
    \[\frac{\ProbCond{\alg=\algdet,\otests=\otest}{\pats=\pat^{\downarrow j}}}{\ProbCond{\alg=\algdet,\otests=\otest}{\pats=\pat}}
        = \prod_{i \in [m]} \frac{\ProbCond{\otests_i=\otest_i}{\pats=\pat^{\downarrow j},\alg_i}}{\ProbCond{\otests_i=\otest_i}{\pats=\pat,\alg_i=\algdet_i}}.\]
    Now let $a_i=\max_{h \in \alg_i} \pat(h)$ and $a_i^j=\max_{h \in \alg_i} \pat^{\downarrow j}(h)$
        by the (hypothetical) actual test results of test $\alg_i$ under $\pat$ and $\pat^{\downarrow j}$.
    Then
    \begin{align*}
        \frac{\ProbCond{\otest_i}{\pats=\pat^{\downarrow j},\alg_i}}{\ProbCond{\otest_i}{\pats=\pat,\alg_i}}
        = \frac{p_{a_i^j,\otests_i}}{p_{a_i,\otests_i}}
        = \prod_{u,v \in \{0, 1\}} p_{uv}^{\ind\cbc{a_i^j=u,\otests_i=v} - \ind\cbc{a_i=u,\otests_i=v}}
        = \prod_{u,v \in \{0, 1\}} p_{uv}^{(\ind\cbc{a_i^j=u} - \ind\cbc{a_i=u})\ind\cbc{\otests_i=v}},
    \end{align*}
        and hence
    \[
    \frac{\ProbCond{\alg=\algdet,\otests=\otest}{\pats=\pat^{\downarrow j}}}{\ProbCond{\alg=\algdet,\otests=\otest}{\pats=\pat}}
        = \prod_{i \in [m]} \prod_{u,v \in \{0, 1\}} p_{uv}^{(\ind\cbc{a_i^j=u} - \ind\cbc{a_i=u})\ind\cbc{\otests_i=v}}
        = \prod_{u,v \in \{0, 1\}}p_{uv}^{\sum_{i=1}^m\bc{\ind\cbc{a_i^j=u}-\ind\cbc{a_i=u}}\ind\cbc{\otest_i=v}}.
    \]
    As $\pat$ and $\pat^{\downarrow j}$ only differ at coordinate $j$, the values $a_i$ and $a_i^j$ differ only if $\alg_i\cap \infected_{\pat}=\cbc{j}$,
        i.e., if test $\alg_i$ is good for $j$ under $\sigma$,
        and in this case, $a_i=1$ and $a_i^j=0$.
    Hence, for any $u, v \in \{0, 1\}$,
    \[
    \sum_{i=1}^m\bc{\ind\cbc{a_i^j=u}-\ind\cbc{a_i=u}}\ind\cbc{\otest_i=v}
        = \sum_{i=1}^m (-1)^u \ind\cbc{\textup{$\alg_i$ is good for $j$ under $\pat$}}\ind\cbc{\otests_i=v}.
    \]
    As $g_j^\pm=g_j^\pm(\pat)$ is the number of good tests for $j$ appearing positive/negative under $\pat$,
        combining everything, we see that
    \begin{align}\label{eq:hd:Atau-simple}
    \frac{\ProbCond{\alg=\algdet,\otests=\otest}{\pats=\pat^{\downarrow j}}}{\ProbCond{\alg=\algdet,\otests=\otest}{\pats=\pat}}=p_{00}^{\good_j^-}p_{01}^{\good_j^+}p_{10}^{-\good_j^-}p_{11}^{-\good_j^+}.
    \end{align}
    Equation \Cref{eq:hd:ratio-con-Atau} then follows directly from the combination of \Cref{eq:hd:ratio-bayes} and \Cref{eq:hd:ratio-alpha} and \Cref{eq:hd:Atau-simple}.
\end{proof}

Given this, we can now prove \cref{lem:hd:lbound-nei-vec}.
To that end, recall from \Cref{def:typical_config} that
$\infectedprime^{\vepp}_{\sigma}$ is 
the set of infected individuals with a typical ratio of good tests displaying positively as
    \begin{align*}
    \infectedprime^{\vepp}_{\pat}
        = \infectedprime^{\vepp}_{\pat,G,\otest}
        = \cbc{i\in \infected_{\pat} \mid \abs{\good_i^+(\pat)-p_{11}\good_i(\pat)} \leq \vepp \good_i(\pat)}.
    \end{align*}

\lboundneivec*

\begin{proof}
We first express the posterior ratio of neighboring vectors on the left-hand side of \Cref{eq:hd:lbound-nei-vec} as a function of $g_j^\pm$ through the following claim, which we prove in \Cref{app-sec-imp-ada} utilizing Bayes’ theorem as well as the definition of good tests:
    For $j\in \infectedprime^{\vepp}_{\sigma}$, by definition of $\infectedprime^{\vepp}$ and as $g_j = g_j^+ + g_j^-$,
    \begin{align*}
        \abs{\good_j^- - p_{10} \good_j}
            =\abs{\good_j-\good_j^+-(1-p_{11})\good_j}
            =\abs{\good_j^+ - p_{11}\good_j}
            \leq \vepp \good_j.
    \end{align*}
From this, the triangle inequality, and the definition of KL-divergence, we can see that
\begin{align*}
\abs{\bc{\good_j^-\ln\frac{p_{10}}{p_{00}}+\good_j^+\ln\frac{p_{11}}{p_{01}}}-\good_j\KL{p_{11}}{p_{01}}}
   &= \abs{\bc{g_j^- - g_j p_{10}} \ln \frac{p_{10}}{p_{00}} + \bc{g_j^+ - g_j p_{11}}\ln\frac{p_{11}}{p_{01}}}
\\ &\leq \vepp \good_j\bc{\ln\frac{p_{00}}{p_{10}}+\ln\frac{p_{11}}{p_{01}}}.
\end{align*}
From this, the lemma follows immediately follows by \Cref{cl:hd:pos-ratio}.
\end{proof}

\subsubsection{Proof of Lemma \ref{lem:hd:ubound-posterior-typical}: posterior probability of typical infection vectors vanishes}
\label{sec:hd:ubound-posterior-typical}

\upt*

\begin{proof}
Consider any test design $G$ and observed test vector $\otest$,
    and a $\pat \in \TypicalConfigs^{\vepp}(G, \otest)$.

For distinct infected individuals $j_1,j_2$, the infection vectors $\pat^{\downarrow j_1}$ and $\pat^{\downarrow j_2}$ are distinct.
Consequently, 
\begin{align*}
    \sum_{j\in \infectedprime^{\vepp}_{\pat}}\ProbCond{\vsigma=\pat^{\downarrow j}}{G_{\alg'}=G,\otests_{\alg'}=\otest}\leq 1.
\end{align*}
Then by \Cref{lem:hd:lbound-nei-vec} and Jensen's inequality,
    and writing $c= \ln \frac{p_{00}}{p_{10}} + \ln\frac{p_{11}}{p_{01}}$, we have
\begin{align}\label{eq:hd:jensen-ub-map}
    1&\geq \sum_{j\in \infectedprime^{\vepp}_{\pat}}\ProbCond{\vsigma=\pat^{\downarrow j}}{G_{\alg'}=G,\otests_{\alg'}=\otest}\\
    &\geq\sum_{j\in \infectedprime^{\vepp}_{\pat}}\frac{1-\alpha}{\alpha}\exp\bc{-\good_j(\pat) \bc{\KL{p_{11}}{p_{01}}+c\vepp}} \ProbCond{\vsigma=\pat}{G_{\alg'}=G,\otests_{\alg'}=\otest}\nn\\
    &\geq \abs{\infectedprime^{\vepp}_{\pat}}\frac{1-\alpha}{\alpha}\exp\bc{-\frac{\sum_{j\in \infectedprime^{\vepp}_{\pat}}\good_j(\pat)}{\abs{\infectedprime^{\vepp}_{\pat}}} \bc{\KL{p_{11}}{p_{01}}+c\vepp}}\ProbCond{\vsigma=\pat}{G_{\alg'}=G,\otests_{\alg'}=\otest}.\nn
\end{align}
As the sets of good tests for a group of infected individuals is disjoint,
    the sum of the sizes of these test is at most the total number of tests. 
And this is the sum of the number of tests used by the original algorithm $\alg$---at most$(1-\varepsilon)\adm$---and the at most $\frac{\varepsilon}{2} \adm$ additional tests we added for $\alg'$.
So recalling the definition of $\adm$, we have
    \[\sum_{i\in \infectedprime^{\vepp}_{\pat}}\good_i(\pat) \leq \bc{1-\frac{\eps}{2}}\adm = \bc{1-\frac{\eps}{2}}\frac{\alpha}{\KL{p_{11}}{p_{01}}}n\ln(n).\]

Now as $\pat \in \TypicalConfigs^{\vepp}(G, \otest)$, we have $\abs{\infectedprime^{\vepp}_{\sigma}} \geq (\alpha - \vepp)n$.
We thus conclude from \Cref{eq:hd:jensen-ub-map} that
\begin{align*}
\ProbCond{\vsigma=\pat}{G_{\alg'}=G,\otests_{\alg'} = \otest}
    &\leq \abs{\infectedprime^{\vepp}_{\pat}}^{-1}\frac{\alpha}{1-\alpha}\exp\bc{\frac{\sum_{j\in \infectedprime^{\vepp}_{\pat}}\good_j(\pat)}{\abs{\infectedprime^{\vepp}_{\pat}}} \bc{\KL{p_{11}}{p_{01}}+c\vepp }}\\
    &\leq (\alpha-\vepp)^{-1}n^{-1}\frac{\alpha}{1-\alpha}\exp\bc{\frac{(1-\frac{\eps}{2})\frac{\alpha}{\KL{p_{11}}{p_{01}}}n\ln(n)}{(\alpha-\vepp)n} \bc{\KL{p_{11}}{p_{01}}+c\vepp}}\nn\\
    &=(\alpha-\vepp)^{-1}\frac{\alpha}{1-\alpha}n^{-1+(1-\frac{\eps}{2})\frac{\alpha}{\alpha-\vepp}\bc{1+c\vepp\bc{\KL{p_{11}}{p_{01}}}^{-1}}}.
\end{align*}
For sufficiently small $\vepp$, the exponent of $n$ is negative, so that, for all such $\vepp$, the upper bound is in $o(1)$, as claimed.
\end{proof}

\subsection{Achievability}
\subsubsection{Proof of Lemma \ref{lem_sizeS}: bounds on $\abs{S_0}$ and $\abs{S_1}$}\label{sec:prf:lem_sizeS}
\lemsizeS*
\begin{proof}
    Recall that we write $\infected_{\pats}$ for the set of infected individuals under $\pats$.
    First, we bound the probability of individuals being in the ``wrong'' set $S_k$ conditioned on their infection status using a Chernoff bound (\cref{thm:chernoff_kl}):
    \begin{align}
        \ProbCond{i \in S_1 }{ i \notin \infected_{\pats}} \leq& \exp\bc{- \eta \log(n) \KL{C}{p_{01}}} = n ^ {- \eta \KL{C}{p_{01}}}\label{eq:S1H}  \\
        \ProbCond{i \in S_0 }{ i \in \infected_{\pats}} \leq& \exp\bc{- \eta \log(n)\KL{C}{p_{11}}} = n ^ {- \eta \KL{C}{p_{11}}}\label{eq:S0I}
    \end{align}
    Now, if $\abs{S_1}$ deviates from $\alpha n$ by more than $\epssizes \alpha n$,
        then either $\abs{S_1 \cap \infected_{\pats}} \leq \abs{S_1} < (1-\epssizes)\alpha n$,
        or $\abs{S_1} > (1+\epssizes)\alpha n$.
    And in the latter case, one of $\abs{S_1 \cap \infected_{\pats}} > \bc{1+\frac{\epssizes}{2}}\alpha n$ or $\abs{S_1 \cap \overline{\infected_{\pats}}} > \frac{\epssizes}{2}\alpha n$ must hold (since otherwise this is a contradiction).
    So
    \begin{align*}
    \Prob{\abs{\abs{S_1} - \alpha n } \geq \alpha n \epssizes }
        \leq \Prob{\abs{S_1 \cap \infected_{\pats}} < (1-\epssizes) \alpha n} + \Prob{\abs{S_1 \cap \overline{\infected_{\pats}}} > \frac{\epssizes}2 \alpha n} 
            + \Prob{\abs{\infected_{\pats}} >  \bc{1+\frac{\epssizes}{2}} \alpha n}
     \end{align*}
     The last term can easily be verified to be in $o(1)$, as $\abs{\infected_{\pats}} \sim\bin{n, \alpha}$ so that by a Chernoff bound,
         \begin{align*}
             \Prob{\abs{\infected_{\pats}} >  (1+\epssizes/2) \alpha n}  = 
             \Prob{\bin{n, \alpha} > (1+\epssizes/2) \alpha n} 
             \leq \exp\bc{-n\KL{(1+\epssizes/2) \alpha }{\alpha}}
         \end{align*}
     Since $\abs{S_1\cap \infected_{\pats}} \sim \bin{n, \alpha \ProbCond{i \in S_1 }{i \in \infected_{\pats}} }$,
        we can bound the probability of $\abs{S_1\cap\infected_{\pats}}$ being too small by combining \cref{eq:S1H} and another Chernoff bound:
     \begin{align*}
        \Prob{\abs{S_1 \cap \infected_{\pats}} < (1-\epssizes) \alpha n }  
        &= \Prob{\bin{n, \alpha \ProbCond{i \in S_1 }{i \in \infected_{\pats}} } <(1-\epssizes)\alpha  n }\\
        &\leq \exp\bc{-  n \KL{\alpha(1-\epssizes) }{\alpha \bc{1- n ^{-\eta\KL{C}{p_{11}}}}}}
    \end{align*}
    Since $\alpha(1-\epssizes)$ is a constant and $\alpha \bc{1- n ^{-\eta\KL{C}{p_{11}}}} \stackrel{n \to \infty}{\to} \alpha$, the outer Kullback--Leibler divergence is strictly positive for sufficiently large $n$, and $\Prob{\abs{S_1 \cap \infected_{\pats}} < (1-\epssizes) \alpha n } $ vanishes.
    We bound the number of uninfected individuals in $S_1$ from above in a similar way, since $\abs{S_1\cap \overline{\infected_{\pats}}} \sim \bin{n ,(1-\alpha )\ProbCond{i \in S_1}{i \notin \infected_{\pats}}}$
    \begin{align*}
        \Prob{\abs{S_1 \cap\overline{\infected_{\pats}}} > \frac{\epssizes}{2} \alpha n} &= \Prob{\bin{n ,(1-\alpha )\ProbCond{i \in S_1}{i \notin \infected_{\pats}}} > \frac{\epssizes}{2} \alpha n }\\
        &\leq \exp\bc{- n \KL{\frac{\epssizes}{2} \alpha}{(1-\alpha)n^{-\eta\KL{C}{p_{01}}}}}
    \end{align*}
    Since $n^{-\eta\KL{C}{p_{01}}}$ tends to zero, $\Prob{\abs{S_1 \cap\overline{\infected_{\pats}}} > \frac{\epssizes}{2} \alpha n} = o(1)$.
    Thus, the claimed bound on $S_1$ \whp follows.
    The claimed bound on $\abs{S_0}$ then immediately follows as $S_0$ and $S_1$ partition $[n]$.
\end{proof}

\subsubsection{Proof of Lemma \ref{lem_falsPos}: all individuals in $U_1$ infected}\label{sec:prf:lem_falsPos}
\lemfalsPos*
\begin{proof} 
    Similar to the first stage we can bound the probabilities of misclassified individuals as follows 
    \begin{align}
        \ProbCond{i \in U_1 }{ i \in S_1 \cap \overline{\infected_{\pats}}} 
        &\leq n ^{- (1+\frac\eps4) \frac{\KL{p_{11} -\epsuthresh }{p_{01}}}{\KL{p_{11} }{p_{01}}}} \label{eq:U1H} \\
        \ProbCond{i \notin U_1 }{ i \in S_1 \cap \infected_{\pats}} 
        &\leq n ^{- (1+\frac\eps4) \frac{\KL{p_{11} -\epsuthresh  }{p_{11}}}{\KL{p_{11} }{p_{01}}}} \label{eq:U0I}
    \end{align}
    Now we can bound the probability that $U_1 \not\subseteq \infected_{\pats}$, or equivalently that $\overline{\infected_{\pats}} \cap U_1 \neq \emptyset$, by using \cref{eq:U1H} and a union bound over all individuals:
    \begin{align*}
        \Prob{\overline{\infected_{\pats}} \cap U_1 \neq \emptyset} 
        &\leq \sum_{i \in [n] }\Prob{ i \notin \infected_{\pats} }\ProbCond{i \in S_1}{i \notin  \infected_{\pats}} \ProbCond{i \in U_1}{ i \in S_1 \cap \overline{ \infected_{\pats}}}\\
        &\leq n \cdot
        \ProbCond{i \in U_1}{ i \in S_1 \cap \overline{ \infected_{\pats}}}\\
        &\leq n ^{1 -  (1+\frac\eps4) \frac{\KL{p_{11} - \epsuthresh }{p_{01}}}{\KL{p_{11} }{p_{01}}}}.
    \end{align*}
    As the exponent is negative by \eqref{eq:eps2_bound2}, the claim follows.
\end{proof}

\subsubsection{Proof of Lemma \ref{lem:use_of_naalg_correct}: decoding $S_0$ and $U_0$ correctly using $\SPOG$}\label{sec:prf:lem:use_of_naalg_correc}
Before proving \cref{lem:use_of_naalg_correct}  directly, we need the following claim that gives a \emph{lower} bound on the number of uninfected individuals in $S_1$.
\begin{claim}[Lower bound on $\abs{S_1 \cap \overline{\infected_{\pats}}}$]\label{lem:size_s1_notinf}
    There is a $\epsnaalgproof > 0$ such that for all $\epsnaalg \in (0, \epsnaalgproof)$, \whp, \[\abs{S_1 \cap \overline{\infected_{\pats}}} \geq (1-\epsnaalg)(1-\alpha)n^{1-\eta\KL{C-\epsnaalg}{p_{01}}}\]
\end{claim}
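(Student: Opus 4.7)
The plan is to lower-bound $\abs{S_1 \cap \overline{\infected_{\pats}}}$ by writing it as a sum of independent Bernoulli indicators and combining a sharp single-individual probability estimate with Chernoff-type concentration. Conditionally on the infection vector, for each uninfected individual $i$ the event $i \in S_1$ is determined by its $N = \ceil{\eta \log n}$ first-stage individual tests, each displaying positively independently with probability $p_{01}$; consequently $p := \ProbCond{i \in S_1}{i \notin \infected_{\pats}} = \Prob{\bin{N, p_{01}} > CN}$, and these indicators are mutually independent across uninfected individuals.

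My first step would be a sharp lower bound on $p$. I would estimate a single binomial term $\binom{N}{k}p_{01}^k(1-p_{01})^{N-k}$ for an integer $k$ slightly exceeding $CN$, say $k = \ceil{(C + \epsnaalg/3)N}$, via Stirling's formula; this yields $\Theta(N^{-1/2})\exp(-N \KL{k/N}{p_{01}})$. By continuity of $\KL{\cdot}{p_{01}}$ at $C$ and since $N = \Theta(\log n)$, the right-hand side is at least $n^{-\eta \KL{C+\epsnaalg}{p_{01}}}$ for all small enough $\epsnaalg$ and large enough $n$. (I read the target exponent as $1 - \eta \KL{C+\epsnaalg}{p_{01}}$; the printed $\KL{C-\epsnaalg}{p_{01}}$ appears to be a sign typo, since $\KL{C-\epsnaalg}{p_{01}} < \KL{C}{p_{01}}$ would make the right-hand side exceed $\Exp{\abs{S_1 \cap \overline{\infected_{\pats}}}}$.)

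Next I would account for the fluctuation in $\abs{\overline{\infected_{\pats}}} \sim \bin{n,1-\alpha}$: a Chernoff bound (\cref{thm:chernoff_eps}) gives $\abs{\overline{\infected_{\pats}}} \geq (1 - \epsnaalg/4)(1-\alpha)n$ \whp. Conditional on this, $\abs{S_1 \cap \overline{\infected_{\pats}}}$ stochastically dominates $\bin{(1 - \epsnaalg/4)(1-\alpha)n,\, p}$, whose mean is at least $(1 - \epsnaalg/4)(1-\alpha)\, n^{1 - \eta \KL{C+\epsnaalg}{p_{01}}}$. Provided $\eta$ is small enough that $\eta \KL{C+\epsnaalg}{p_{01}} < 1$ (which is already implicit in the parameter choices of \PRESTO), this mean is a positive power of $n$.

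Finally, another Chernoff lower-tail bound promotes the expectation to a \whp lower bound: since the mean is polynomially large in $n$, the probability of a relative deviation by $\epsnaalg/2$ is exponentially small in some positive power of $n$, so combining with the event of the previous step yields the claim. The main technical obstacle is the first step: one must calibrate the Stirling estimate so that the $1/\sqrt{\log n}$ prefactor is cleanly absorbed into the KL-exponent slack. Concretely, one chooses $\epsnaalgproof$ so that $\eta \bc{\KL{C+\epsnaalg}{p_{01}} - \KL{C}{p_{01}}}$ dominates the $o(1)$ Stirling correction for every $\epsnaalg \in (0, \epsnaalgproof)$ and every sufficiently large $n$.
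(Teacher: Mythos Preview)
Your approach is essentially the same as the paper's, with two minor differences. First, the paper obtains the single-individual lower bound via Cram\'er's theorem (\cref{thm:cramers}(b)) rather than a direct Stirling estimate on one binomial term; Cram\'er gives the same exponent with less bookkeeping, so your calibration of the $1/\sqrt{\log n}$ prefactor is avoidable. Second, your intermediate step of conditioning on $\abs{\overline{\infected_{\pats}}}$ is unnecessary: since the $n$ unconditional indicators $\vecone\{i\notin\infected_{\pats},\, i\in S_1\}$ are i.i.d.\ with success probability $(1-\alpha)p$, one has $\abs{S_1\cap\overline{\infected_{\pats}}}\sim\bin{n,(1-\alpha)p}$ directly, and a single Chernoff lower-tail bound suffices. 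Your diagnosis of the sign issue in the exponent is correct: as stated with $\KL{C-\epsnaalg}{p_{01}}$ the right-hand side polynomially exceeds the expectation, so the claim (and the paper's own proof) should read $\KL{C+\epsnaalg}{p_{01}}$; fortunately the only downstream use is that $\abs{S_1\cap\overline{\infected_{\pats}}}\to\infty$, which holds either way.
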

\begin{proof}
    First, by Cram\'er's theorem (\cref{thm:cramers}),
        for any $\epsnaalg > 0$, it holds for all sufficiently large $n$ that
        \[\ProbCond{i \in S_1}{i \not\in \infected_{\pats}} \geq \exp\bc{-\ceil{\eta \log(n)} (\KL{C-\epsnaalg}{p_{01}})} \geq n^{-\eta(1+\frac{1}{\eta \log n}))\KL{C-\epsnaalg}{p_{01}}}.\]
    And then with $\nu = \eta\bc{1+\frac{1}{\eta \log n}}\KL{C-\epsnaalg}{p_{01}}$,
        by a Chernoff bound (\cref{thm:chernoff_eps}),
        \begin{align*}
        \Prob{\abs{S_1 \cap \overline{\infected_{\pats}}} < (1-\epsnaalg)(1-\alpha)n^{1-\nu}}
           &\leq \Prob{\bin{n, (1-\alpha)n^{-\nu}} < (1-\epsnaalg)(1-\alpha)n^{1-\nu}}
        \\ &\leq \exp\bc{-\,\frac{-\epsnaalg^2 (1-\alpha)n^{1-\nu}}{2}} = o(1),
        \end{align*}
        as for sufficiently large $n$, we have $\nu = \eta\bc{1+\frac{1}{\eta \log n}}\bc{\KL{C-\epsnaalg}{p_{01}}} < \eta \KL{C}{p_{01}} < 1$.
\end{proof}

\lemuseofnaalgcorrect*
\begin{proof}
    First, note as membership in $S_0$ (resp., $U_0$) is decided by the results of individual testing, the infection status of individuals in $S_0$ (resp., $U_0$) remains i.i.d.

    Now, by Bayes' theorem,
    \begin{align}
        \ProbCond{\pats(i) = 1}{i \in S_0}
            = \frac{\ProbCond{i \in S_0}{i \in \infected_{\pats}} \cdot \Prob{\pats(i)=1}}{\Prob{i \in S_0}}
            \leq \frac{\Prob{i \in S_0 \cap \infected_{\pats}}}{\Prob{i \in S_0 \cap \overline{\infected_{\pats}}}}.
    \end{align}
    Now
    \[\Prob{i \in S_0 \cap \overline{\infected_{\pats}}} = \Prob{\pats \not\in \infected_{\pats}} \cdot \ProbCond{i \in S_0}{i \not\in \infected_{\pats}}
        = (1-\alpha) \cdot (1 - \ProbCond{i \in S_1}{i \not\in \infected_{\pats}}).\]
    Hence, as the results of individual tests are independent for different $i$, and inclusion in $S_0$ or $S_1$ is based only thresholding on individual tests, we have, by a Chernoff bound (\cref{thm:chernoff_kl}),
    \[\ProbCond{\pats(i) = 1}{i \in S_0}
        \leq \frac{\alpha n^{-\eta \KL{C}{p_{11}}}}{(1-\alpha)(1-n^{-\eta\KL{C}{p_{01}}})}
        = \Oh(n^{-\eta \KL{C}{p_{11}}}),\]
        which is at most $n^{-\hat\theta^S}$ for sufficiently large $n$ by definition of $\hat\theta^S$,
            as needed for \cref{prop:sublin}.

    Similarly:
    \begin{align}
        \ProbCond{\pats(i) = 1}{i \in U_0}
            = \frac{\ProbCond{i \in U_0}{i \in \infected_{\pats}} \cdot \Prob{\pats(i)=1}}{\Prob{i \in U_0}}
            \leq \frac{\ProbCond{i \in U_0 \cap \infected_{\pats}}{i \in S_1}}{\Prob{i \in U_0 \cap \overline{\infected_{\pats}}}}.
    \end{align}
    Now
        \[\Prob{i \in U_0 \cap \overline{\infected_{\pats}}}
            = \ProbCond{i \in U_0}{i \in S_1 \cap \overline{\infected_{\pats}}} \ProbCond{i \in S_1}{i \not\in \infected_{\pats}} \Prob{i \not\in \infected_{\pats}}\]
        and so by a similar argument as above, a Chernoff bound yields
        \[\ProbCond{i \in U_0}{i \in S_1 \cap \overline{\infected_{\pats}}}
            = 1 - \ProbCond{i \in U_1}{i \in S_1 \cap \overline{\infected_{\pats}}}
            \geq 1 - n^{-\bc{1+\frac{\eps}{4}}\frac{\KL{p_{11} - \epsuthresh}{p_{01}}}{\KL{p_{11}}{p_{01}}}},\]
        and by Cram\'er's theorem for arbitrarily small $\vepp$ and sufficiently large $n$,
        \[\ProbCond{i \in S_1}{i \not\in \infected_{\pats}} \geq n^{-\eta(1+\frac{1}{\eta \log n}) \bc{\KL{C-\vepp}{p_{01}}}}.\]
    Taken together this is
        \begin{align}
        \ProbCond{\pats(i) = 1}{i \in U_0}
           &\leq \frac{\alpha n^{-\bc{1+\frac \eps 4}\frac{\KL{p_{11}-\epsuthresh}{p_{11}}}{\KL{p_{11}}{p_{01}}}}}{(1-\alpha) n^{-\eta (1+\frac{1}{\eta \log n})\KL{C-\vepp}{p_{01}}}\bc{1-n^{-\bc{1+\frac{\eps}{4}}\frac{\KL{p_{11} - \epsuthresh}{p_{01}}}{\KL{p_{11}}{p_{01}}}}}}.
        \\ &= \Oh\bc{n^{\eta (1 + \frac{1}{\eta \log n})\KL{C-\vepp}{p_{01}}-\bc{1+\frac \eps 4}\frac{\KL{p_{11}-\epsuthresh}{p_{11}}}{\KL{p_{11}}{p_{01}}}}}.
        \end{align}
    Now for sufficiently large $n$, we have
        \[\eta \bc{1 + \frac{1}{\eta \log n}}\KL{C-\vepp}{p_{01}} \leq \eta \KL{C}{p_{01}},\]
        so that the exponent of $n$ is bounded from above by a negative constant by \eqref{eq:eps2_nu}.
    And hence $\hat\theta^U > 0$ for sufficiently large $n$,
        and $\ProbCond{\sigma(i)=1}{i \in U_0} \leq n^{-\hat\theta^U}$ for sufficiently large $n$ by definition of $\hat\theta^U$, as needed for \cref{prop:sublin}.

    Lastly, by \cref{lem_sizeS,lem:size_s1_notinf},
        $\abs{S_0}$ and $\abs{S_1 \cap \overline{\infected_{\pats}}}$ diverge \whp,
        and by \cref{lem_falsPos}, $U_0 \supseteq S_1 \cap \overline{\infected_{\pats}}$ \whp,
        so that indeed the sizes of the sets of individuals given to $\SPOG$ diverge \whp and the probability of $\SPOG$ failing is in $o(1)$.
\end{proof}

\bibliographystyle{alpha}
\DeclareRobustCommand{\VAN}[3]{#3}
\bibliography{references}

\end{document}